\DeclarePairedDelimiter{\floor}{\lfloor}{\rfloor}
\newcommand\norm[1]{\left\lVert#1\right\rVert}
\numberwithin{equation}{section}
\titleformat*{\section}{\large \bfseries}
\titleformat*{\subsection}{\normalsize \bfseries}
\titleformat*{\subsubsection}{\small \bfseries}
\providecommand{\U}[1]{\protect\rule{.1in}{.1in}}
\newtheorem{theorem}{Theorem}[section]
\newtheorem{corollary}{Corollary}[section]
\newtheorem{example}{Example}
\newtheorem{lemma}{Lemma}[section]
\newtheorem{remark}{Remark}[section]
\newtheorem{assumption}{Assumption}[section]
\numberwithin{equation}{section}
\newif\ifshow 
\begin{document}
\pagenumbering{roman}

\title{ {\Large \textbf{Asymptotic Theory for Unit Root Moderate Deviations in \\ \vspace{-0.1\baselineskip} Quantile Autoregressions and Predictive Regressions}\thanks{\textbf{Article history:} Previous draft titled: "\textit{Asymptotic Theory for Moderate Deviations from the Unit Boundary in Quantile Autoregressive Time Series}", 05 April 2022. This draft: August 2023. I am grateful to Xiaohui Lui, Tassos Magdalinos, Jean-Yves Pitarakis and Jose Olmo for helpful discussions.\\  \\ Lecturer in Economics, University of Exeter Business School, Exeter EX4 4PU, United Kingdom. \textit{Email:} \textcolor{blue}{christiskatsouris@gmail.com}. }} \\
}

\author{\textbf{Christis Katsouris}\\
\textit{University of Southampton and University of Exeter}\\
\\ Working Paper}


\date{\today}

\maketitle

\begin{abstract}
\vspace{-0.71\baselineskip}
We establish the asymptotic theory in quantile autoregression when the model parameter is specified with respect to moderate deviations from the unit boundary such that $\rho_n = \left(  1 + \frac{c}{k_n} \right)$ where $\left( k_n \right)_{ n \in \mathbb{N} }$ is a nonrandom sequence that diverges at a rate slower than the sample size $n$. Then, extending the framework proposed by  \cite{Phillips2007limit}, we consider the limit theory for the near-stationary and the near-explosive cases when the model is estimated with a conditional quantile functional form and model parameters are quantile-dependent. A Bahadur-type representation and limiting distributions based on the M-estimators of the model parameters are derived. We show that the serial correlation coefficient converges in distribution to a ratio of two independent random variables. Monte Carlo simulations illustrate the finite-sample performance of the estimation procedure under investigation.   
\\

\textit{\textbf{Keywords:}} Quantile autoregressive model, moderate deviations, local-to-unity, near-integrated processes, explosive processes, bahadur representation. 
\\

\textit{JEL classification:} C22 

\end{abstract}

\newpage 

\setcounter{page}{1}
\pagenumbering{arabic}

\section{Introduction}

Moderate deviation principles from the unit boundary for quantile autoregressions are commonly employed when considering the limit distribution of quantile-dependent parameters under regressors nonstationarity. In particular, the development of asymptotic theory for nonstationary quantile time series models has been pioneered by the studies of \cite{koenker2004unit, koenker2006quantile} as well as \cite{koenker2002inference} who investigate estimation and inference aspects for regression quantile models (see, also \cite{hasan1997robust}). Specifically, studies for quantile autoregressive regressions that consider moderate deviations within a unified framework  allowing to investigate the asymptotic behaviour of estimators with respect to different regimes of stability such as stable, unstable and explosive processes has seen less attention in the literature. Therefore, our main objective is to use the moderate deviation principles in order to derive the limiting distribution of the autoregressive coefficient when considering deviations from the unit boundary under a conditional quantile functional form. The present paper builds on the framework proposed by  \cite{Phillips2007limit} (see, also \cite{giraitis2006uniform} and \cite{huang2014limit}) that corresponds to the linear autoregressive model under nonstationarity as well as the study of \cite{kong2015m} and \cite{wang2022asymptotics} who develop limit theory for moderate deviations in autoregressive models based on M-estimators.

In this line of literature, the two relevant research questions are summarized by \cite{chan2006quantile}: 

\begin{quotation}
"\textit{The study of the unit root AR(1) model has been actively pursued by statisticians and econometricians alike, and a related question that needs to be addressed is what happens to the limiting distribution of the test statistics when the autoregressive parameter $\theta_n$ is close to the unit boundary? Consequently, when the autoregression coefficient is expressed with respect to the local-to-unity parametrization, what kind of approximation should be used for the distribution of the test statistics?}. 
\end{quotation}
The seminal studies of \cite{chan1987asymptotic1} and \cite{phillips1987time, phillips1987towards} tackle exactly these nonstandard statistical problems via their triangular array framework. The framework of the nearly nonstationary AR(1) model allows to establish the limiting distributions of the least squares estimator for $\theta_n$ under the assumption that the conditional variance of the model is finite. Moreover, the properties of the least squares estimator when the true stochastic process is nearly integrated are investigated by  \cite{chan1988parameter, chan1990inference}, \cite{chan1989first}, \cite{knight1987rate},  \cite{rao1978asymptotic}, \cite{lai1982least}, \cite{cox1991maximum}, \cite{larsson1995asymptotic}, \cite{cavaliere2002bounded},  \cite{Buchmann2007asymptotic},   \cite{Phillips2007limit} and \cite{duffy2021estimation} among others. Furthermore, the asymptotic theory for moderate deviations from a unit root in autoregressive models is presented by \cite{fountis1989testing}, \cite{jiang2015moderate} who focus on the aspect of dependent errors in AR(1) models and \cite{yabe2012limiting} who obtain limit results for MA(1) time series models. 

\newpage

On the other hand, the properties of nonstationary autoregressive models for the case of an explosive autoregressive coefficient is also of interest. In particular, \cite{white1958limiting} obtained the limit distribution of an explosive serial correlation coefficient (see, also \cite{mann1943statistical}). Limit theory for moderate deviations on the explosive side of unity (e.g., mildly explosive case) were developed by  \cite{Buchmann2007asymptotic}, \cite{aue2007limit}, \cite{magdalinos2012mildly}, \cite{arvanitis2018mildly}, \cite{oh2018mildly}, \cite{lee2018limit}, \cite{proia2020moderate}, \cite{yu2021inference}, \cite{hirukawa2021asymptotic}, \cite{liu2022mildly}. Recently, \cite{Jian2022} derived the limit theory for the ordinary least squares estimator in the explosive first-order Gaussian autoregressive process using a set of deviation inequalities\footnote{The authors obtain the limit theory of Cram\'er-type moderate deviations for the explosive and mildly explosive autoregressive processes.}. In particular, \cite{aue2007limit} develop the limit theory for the serial correlation coefficient in the mildly explosive case with moderate deviations from the unit boundary. Thus, the model parameter satisfies $\theta_n \to 1 \ \ \text{and} \ \ n \left( \theta_n - 1 \right) \to \infty$, as $n \to \infty$, while for large $n$, $\theta_n > 1$ diverges away from unity but not with the usual convergence rate $\mathcal{O} ( 1 / n)$.

Under the Cram\'er-type moderate deviations framework, there exists positive sequences $\nu_n$ and $\lambda_n$ tending to infinity such that for every $\ell > 0$ as $n \to \infty$ it holds that (e.g., see \cite{Jian2022})
\begin{align}
\underset{ 0 \leq x \leq \ell \lambda_n }{ \mathsf{sup} } \left| \frac{1}{ 1 - F_n( x ) } \mathbb{P} \bigg( \nu_n \left( \hat{\theta}_n - \theta_n \right) \geq x \bigg) - 1  \right| \to 0,
\end{align}
where $F_n (x) $ is the distribution function, satisfying for all $x \in \mathbb{R}$ 
\begin{align}
\mathbb{P} \bigg( \nu_n \left( \hat{\theta}_n - \theta_n \right) \geq x \bigg) - F_n(x) \overset{ p }{ \to } 0, \ \ \text{as} \ n \to \infty.
\end{align}
We focus on the neighbourhood near the unit boundary, which can be approaching unity from below (near-stationary or near-integrated) or approaching unity from above (near-explosive). Due to the form of the nonrandom sequence such that, $k_n \equiv n^{\gamma}$, the convergence rate towards unity is slower than the sample size $n$. In this case the autoregression  coefficient approaches 1 at a rate slower than the usual local alternative as $n$ goes to infinity. When, as $\gamma \to 1$ then $k_n \to n$ which encompasses the conventional local-to-unity parametrization. An alternative form for the convergence rate include the case when $k_n = \sqrt{n}$. To obtain limit results for model parameters based on M-estimation, we employ the Bahadur representation (see, \cite{bahadur1966note}) that provides a mechanism to facilitate the asymptotic theory, which allows to obtain analytic expressions for quantile-dependent estimators by approximating them with linear forms. Although we do not consider how the presence of serial correlation can affect the limiting distributions, various studies in the literature consider simple implementations of autoregression  robust tests as in \cite{jansson2004error} (see, also \cite{vogelsang1998trend}). Similar implementations can be considered within the modelling environment of quantile regression models especially of those with possibly nonstationary autoregressive processes with serial correlated innovation terms. \cite{kiefer2000simple}, (KVB), demonstrate that the properties of Wald-type statistics can be ameliorated if an inconsistent covariance matrix estimator is used and the critical values are adjusted to accommodate the randomness of the matrix employed in the standardization.

\newpage

\subsection{Literature Review}

Regression asymptotics with roots at or near unity are typically carried out by using autoregressive models with fixed coefficients and then testing for the autoregressive parameter being equal to one, as pointed out by \cite{dickey1979distribution} (see, also \cite{dickey1981likelihood}). The idea of developing asymptotics using the local-to-unity parametrization is due to the studies of \cite{Cavanagh1985LUR},  \cite{phillips1987time}, \cite{chan1987asymptotic1}.  \cite{Phillips2007limit} showed that $\left( \hat{\theta}_n - \theta_n \right)$ has a $\sqrt{n k_n}$ rate of convergence and a limit normal distribution when $c < 0$ for $\theta_n = \left( 1 + \frac{c}{n} \right)$,  
\begin{align}
\sqrt{n k_n} \left( \hat{\theta}_n - \theta_n \right) \overset{ d }{ \to } \mathcal{N} \big( 0, - 2c  \big)
\end{align}
Thus, we are interested to establish a martingale central limit theorem for a normalized version of $\sum_{t=1}^n y_{t-1} u_t$ which can give rise to a Gaussian asymptotic distribution for the normalized and centered least squares estimator specifically for the quantile autoregressive model. 

A different stream of literature considers a  representation of the autoregressive model based on the  exponential family with specific canonical parameter. In that case, by expressing the AR(1) model with respect to the canonical parameters of an exponential family one can establish the asymptotic behaviour of related minimal sufficient statistics\footnote{In particular, \cite{jansson2006optimal} using the local-to-unity parametrization of the autoregression  coefficient study the properties of predictive regressions under the assumption of persistence regressors using differential geometry and sufficient statistics arguments to establish the asymptotic theory of estimators and test statistics.} (see, \cite{jansson2006optimal}). The particular literature is developed under the assumption of a stationary autoregression coefficient such that $| \theta | < 1$ for $y_t = \theta y_{t-1} + u_t$. Furthermore, it has been argued that the Efron curvature depends heavily on the AR parameter, especially near the boundary of the parameter space, and increasingly so with increasing sample size\footnote{Using the local-to-unity parametrization allows the analysis of unit roots and explosive processes, which is necessary to link problems in inference for unit roots to the statistical curvature.} (see, \cite{garderen1999exact}). Specifically, this implies that when the true parameter value is explosive rather than being stationary, then the asymptotic theory of estimators and test statistics based on OLS estimation are driven by the distribution of the innovations $u_t$. On the other hand, comparing the cases of Gaussian innovations against non-Gaussian innovations (e.g., heavy tailed errors) and an explosive autoregressive parameter then the asymptotic theory of the OLS estimator in these two cases will not necessarily be identical. 

Various studies in the literature consider limit results for moderate deviations for M-estimators and quantile processes\footnote{Related theoretical aspects can be found in the book of \cite{csorgHo1983quantile} (see, also \cite{csorgo1986weighted}).} in autoregressive models include among others \cite{jurevckova1988moderate}, \cite{knight1998limiting}, \cite{mao2019moderate} as well as \cite{kato2009asymptotics} who develops asymptotics for Lasso quantile-dependent estimators.  Limit theory for moderate deviations from the unit root in the context of quantile autoregressive models are established in the studies of \cite{lucas1995unit}, \cite{abadir2000quantiles}, \cite{ling2004regression}, \cite{koenker2004unit}, \cite{chan2006quantile}, \cite{kong2015m}, \cite{zhou2015quantile}, \cite{wang2022asymptotics} and \cite{fu2022cqr}.

\newpage

Another important aspect worth mentioning, is the fact that  several studies demonstrated that the nuisance parameter of persistence cannot be consistently estimated (see, \cite{phillips2001estimate}, \cite{mikusheva2012one} among others). Similarly, when considering the quantile autoregressive model, the availability of a consistent estimator for the unknown coefficient of persistence, $c$, still remains a challenging issue. The limit theory for an autoregressive model which includes a  intercept and an autoregression  coefficient expressed using the local-to-unity parametrization is studied by \cite{liu2018limit} and \cite{hwang2009asymptotic}, however their framework differs from our setting since we consider quantile-dependent parameters. This article, builds on and contributes to both the quantile autoregression literature as well as to the literature of moderate deviations from the unit boundary. We focus on establishing the limit theory for moderate deviations from the unit boundary for both the near-stationary and near-explosive cases in quantile autoregression and quantile predictive regressions. Our main objective is to develop a unified framework for the asymptotic behaviour of the quantile-dependent estimators across the whole spectrum of nonstationarity regimes, including the general near-integrated case assuming that $\theta_n \to 1$ with a rate slower than $1/n$.

Throughout the paper, we assume that all random elements are defined within a probability space denoted with the triple $\left( \Omega, \mathcal{F}, \mathbb{P} \right)$. All limits are taken as $n \to \infty$, where $n$ is the sample size. Denote with $\mathcal{D} \left( [0,1] \right)$ to be the set of functions on $[0,1]$ that are right continuous and have left limits, equipped with the Skorokhod metric. Then, the symbol $"\Rightarrow"$ is used to denote the weak convergence of the associated probability measures as $n \to \infty$. The symbol $\overset{ \mathcal{D} }{\to}$ and $\overset{ \mathbb{P} }{ \to }$  are employed to denote convergence in distribution and convergence in probability respectively. Moreover,  we denote with $\mathcal{O}_{ \mathbb{P} }(.)$ and $o_{ \mathbb{P} }(.)$ the stochastic order of convergence in probability (\cite{billingsley1968convergence}). 

The rest of the paper is organized as follows. Section \ref{Section2}, introduces the framework of moderate deviation principles from the unit boundary in quantile autoregressive processes. Section \ref{Section3}, demonstrates the main results of the paper, that is, the limit theory for the near-integrated and near-explosive cases for the quantile autoregression. Section \ref{Section4} presents a short Monte Carlo simulation study while Section \ref{Section5} an empirical application. Section \ref{Section6} concludes.

\section{Moderate Deviations from the Unit Boundary Framework}
\label{Section2}

\subsection{Main Terminology, Definitions and Assumptions}

Consider the first order autoregressive process with expressed as below
\begin{align}
x^{*}_t =  \theta_n x^{*}_{t-1} + u_t, \ \ \ \ \  \text{for} \ \ t = 1,...,n, \ \ \ \ 
\end{align}
with a possible sample-size dependent autoregressive root $\theta_n$ and $\left( u_t \right)_{ t \in \mathbb{N} }$ is the innovation sequence\footnote{The innovations $\left( u_t \right)_{ t \in \mathbb{N} }$ is an \textit{i.i.d} sequence of random variables from a common distribution function $F_{u}$ that satisfies regularity conditions for  Lipschitz continuity with  zero mean and finite variance $\sigma_{u}$. }. 

\newpage

Moreover, we allow for the presence of a non-zero intercept by considering the process
\begin{align}
x_t = \mu + x^{*}_t, \ \ \text{with} \ \ x_{0t} = \sum_{j=1}^n \theta_n^{t-j} u_j.
\end{align}
where $x_{0t}$ represents the solution of the autoregressive process with $\mu_x = 0$ and $X_0 = 0$, the initial condition of the recursion. Furthermore, since we are interested about the asymptotic stability of the autoregressive process near the unit boundary we allows a local-to-unity parametrization as 
\begin{align}
\theta_{n}  =  \left( 1 + \frac{c}{ k_n } \right), \ \text{with} \ \ c \in \mathbb{R}, \ k_n = n^{\gamma} \ \ \text{and} \ \gamma = 0, \ \gamma \in (0,1) \ \text{or} \ \gamma = 1. 
\end{align}
The convergence rate of the autoregression coefficient $\theta_{n}$ is such that $k_n := n^{\gamma}$ where the exponent rate is defined such that $\gamma \in (0,1)$. Specifically, the given convergence rate implies that the $\left( k_n \right)_{ n \in \mathbb{N} }$ sequence increases to infinity at a slower rate than the sample size such that $k_n = o(n)$ as $n \to \infty$.   

\medskip

\begin{assumption}[\textbf{Persistence}]
Consider the autoregressive process $X_t = \theta_n X_{t-1} + \varepsilon_t$ with $\left( \varepsilon_t \right)_{ t \in \mathbb{N}}$. Then, consider the following limit to determine the degree of persistence for the processes 
\begin{align}
\zeta_n := \underset{ n \to + \infty  }{ \mathsf{lim} } n \left( \theta_n - 1 \right) \to \zeta
\end{align}

\begin{itemize}
    
    \item[\textbf{P.1}]  \textbf{\textit{nearly stable}} processes: if $\left( \theta_n \right)_{ n \in \mathbb{N} }$ is such that $\zeta = - \infty$ and it holds that $\theta_n \to | \theta | < 1$.
    
    \item[\textbf{P.2}]  \textbf{\textit{nearly unstable}} processes: if $\left( \theta_n \right)_{ n \in \mathbb{N} }$ is such that $\zeta \equiv c \in \mathbb{R}$ and it holds that $\theta_n \to \theta = 1$.
    
    \item[\textbf{P.3}]  \textbf{\textit{nearly explosive}} processes: if $\left( \theta_n \right)_{ n \in \mathbb{N} }$ is such that $\zeta = + \infty$ and it holds that $\theta_n \to | \theta | > 1$.
    
\end{itemize}

\medskip

\end{assumption}

\begin{assumption}
\label{Assumption0}
Denote with $x_{0n} = 0$, almost surely, for all $n \in \mathbb{N}$. Then, the innovations $\left( \varepsilon_t \right)_{ t \in \mathbb{N} }$ forms a sequence of martingale differences such that as $n \to \infty$  
\begin{align}
\frac{1}{n} \sum_{ t = 1}^n \mathbb{E} \left[ \varepsilon_t^2 \big| \mathcal{F}_{t-1} \right] 
&= 1 + o_p(1),
\\
\frac{1}{n} \sum_{ t = 1}^n \mathbb{E} \left[ \varepsilon_t^2  \mathbf{1} \left\{ | \varepsilon_t \big| > n^{1/2} m \right\} \big| \mathcal{F}_{t-1} \right] 
&= o_p(1), \ \ \text{for all} \ m > 0, 
\end{align}  
where $\mathcal{F}_t := \sigma \big( \varepsilon_s: 0 \leq s \leq t \big)$. 
\end{assumption}

\newpage

\begin{figure}[h!]
\caption[Figure showing ]

\begin{center}
\includegraphics[width=0.35\textwidth]{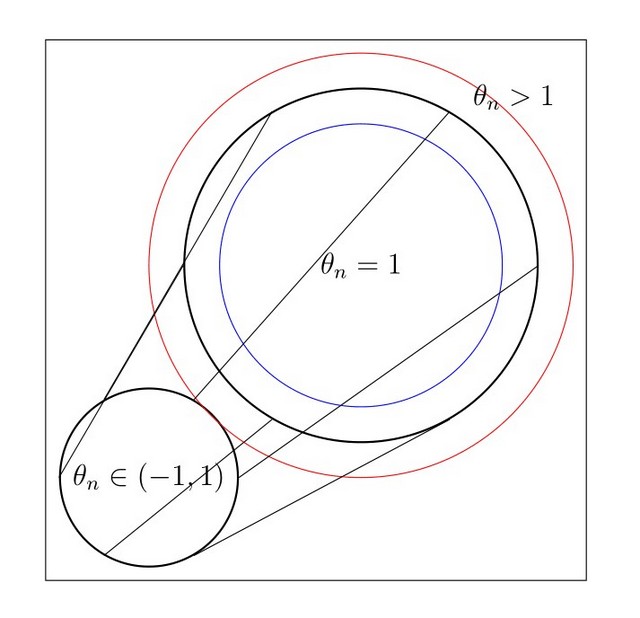}
\end{center}

\end{figure}

\begin{remark}
Assumption \ref{Assumption0} provides moment conditions and a uniform integrability condition which induces a restriction on the tail behaviour of the underline distribution of innovations. For  asymptotic theory analysis purposes related conditions are imposed to the noise sequences $\varepsilon_t$ such that the corresponding partial sum processes lie in the domain of attraction of functionals of Brownian motions. Thus asymptotic approximations are obtained based on the Ornstein-Uhlenbeck process.       
\end{remark}

Extending and verifying the existence of Cauchy limiting distribution theory in mildly explosive autoregression under various conditions on the innovation structure (stationary conditional heteroscedastic errors, anti-persistence errors - \cite{lui2021mildly}, non-Gaussian errors, serially correlated errors - \cite{lui2018mild}, errors with possibly infinite variance - \cite{liu2021quantile} and \cite{wang2022asymptotics}, mixing innovations - \cite{oh2018mildly} and \cite{liu2022mildly}) doesn't necessarily imply the presence of a unified theory across the spectrum of nonstationarity as per Assumption 1. In other words, the fact that there is a limiting distribution discontinuity across these different regions of the parameter space of $\theta_n$, despite that the separate Gaussian and Cauchy limit results are found to be robust against different properties in the innovation structure, still requires considering whether a unified distribution-free inference technique can be applied regardless of: \textit{(i)} the presence of a model intercept, \textit{(ii)} the innovation sequence properties and \textit{(iii)} the region of the parameter space. 

Furthermore, based on extensive empirical applications using the dataset of \cite{hardle2016tenet}, where cross-sectional specific  autoregressive and predictive regressions were fitted on stock returns with predictors the macroeconomic variable of the dataset, we have observed that in cases where the estimated autoregressive coefficient is above the unit boundary (i.e., on the mildly explosive side), the estimation method for the model coefficient of the predictive regression model remains unchanged if one uses the existing IVX instrumentation approach proposed by \cite{Magdalinos2009limit}  (see, also \cite{kostakis2015Robust}). This observation motivated us to consider an alternative instrumentation procedure when the estimated autoregressive coefficient is within that region is explosive.

\newpage

More precisely, the novel instrumentation procedure proposed by \cite{Magdalinos2022uniform} does exactly that, while theoretical and empirical illustrations are given for the linear autoregressive and predictive regression models. Specifically, the asymptotic theory of estimators, test statistics and corresponding confidence intervals can be constructed using a common distribution without imposing additional assumptions. These properties are found to hold specifically for autoregressive and predictive regression models based on a conditional mean functional form while the purpose of this paper is to extend those to the case of a conditional quantile functional form. 

The class \textbf{P.1} implies that $\theta_n$ is near to the unit boundary under there is no local-to-unity parametrization, thus the limit tends to $- \infty$. For the class \textbf{P.2} we have that $\theta_n \to 1$ when specified as $\theta_n = \left( 1 + \frac{c}{k_n}    \right)$, where $k_n = n^{\gamma}$ with $\gamma = 0$, $\gamma = 1$ or $\gamma \in (0,1)$. Thus, when $\theta_n = \left( 1 + \frac{c}{k_n}    \right)$, 
\begin{align}
\zeta_n := \underset{ n \to + \infty  }{ \mathsf{lim} } n \left( \theta_n - 1 \right) =  \underset{ n \to + \infty  }{ \mathsf{lim} } n \left[ \left( 1 + \frac{c}{k_n} \right) - 1 \right]   = \underset{ n \to + \infty  }{ \mathsf{lim} } \left\{ \frac{n}{k_n} \right\} c
\end{align}

Assume that $\textcolor{blue}{c < 0}$ then it holds that 
\begin{enumerate}
    
    \item If $k_n = n^{\gamma}$ with $\gamma = 1$ $( k_n = n )$, then $\textcolor{blue}{\zeta \equiv c \in \mathbb{R} }$, which falls in the class of \textbf{nearly-unstable} processes (near-nonstationary). 
    
    \item If $k_n = n^{\gamma}$ with $\gamma \in (0,1)$, then $\textcolor{blue}{\zeta = - \infty}$. In this case, although we are in the region of the "black circle" within the "blue zone" then we have a \textcolor{blue}{mildly integrated} process which falls in the class of \textbf{nearly-stable} processes (near-stationary). 
    
    \item If $k_n = n^{\gamma}$ with $\gamma = 0$ $( k_n = 1 )$, then $\textcolor{blue}{\zeta = - \infty}$ which falls in the class of \textbf{nearly-stable} processes (near-stationary).

\end{enumerate}

Assume that $\textcolor{red}{c > 0}$ then it holds that 
\begin{enumerate}
    
    \item If $k_n = n^{\gamma}$ with $\gamma = 1$ $( k_n = n )$, then $\textcolor{red}{\zeta \equiv c \in \mathbb{R} }$, which falls in the class of \textbf{nearly-unstable} processes (near-nonstationary). 
    
    \item If $k_n = n^{\gamma}$ with $\gamma \in (0,1)$, then $\textcolor{red}{\zeta = + \infty}$. In this case, although we are in the region of the "black circle" within the "red zone" then we have a \textcolor{red}{mildly explosive} process which falls in the class of \textbf{nearly-explosive} processes. 
    
    \item If $k_n = n^{\gamma}$ with $\gamma = 0$ $( k_n = 1 )$, then $\textcolor{red}{\zeta = + \infty}$ which falls in the class of \textbf{nearly-explosive} processes. 
    
\end{enumerate}

Assume that $c = 0$ then we have a pure unit root process.

\newpage

\underline{ \textbf{Persistence Classes:} }

\begin{itemize}
    
    \item[\textbf{I.}] \textbf{Pure Stationary Processes:} $\rho_n \to \rho \in (-1,1)$.   
    
    \
    
    \item[\textbf{II.}] \textbf{Near Unit Root Processes:} $\rho_n \to \rho = 1$ where $\rho_n$ is specified  as $\rho_n = \left( 1 + \frac{c}{k_n} \right)$, $k_n = n^{\gamma}$.
    
    For all the case below we can employe the original IVX instrument. 
    
    \begin{itemize}
        
        \item[\textbf{II.a.}]  \textbf{\textcolor{blue}{Near-stationary processes}}: $c <0$, $\gamma = 1$ (near-integrated)
        \begin{align}
            \underset{ n \to + \infty }{ \text{lim} } \rho_n := \underset{ n \to + \infty }{ \text{lim} } \left( 1 + \frac{c}{n}  \right) = 1 +  \underset{ n \to + \infty }{ \text{lim} } \left\{ \frac{c}{n} \right\} = \textcolor{blue}{ - \infty }, \ \text{when} \ c < 0.
        \end{align}

        \item[\textbf{II.b.}] \textbf{\textcolor{blue}{Mildly-integrated processes}}: $c <0$, $\gamma \in (0,1)$
        \begin{align}
            \underset{ n \to + \infty }{ \text{lim} } \rho_n := \underset{ n \to + \infty }{ \text{lim} } \left( 1 + \frac{c}{n}  \right) = 1 +  \underset{ n \to + \infty }{ \text{lim} } \left\{ \frac{c}{n} \right\} = \textcolor{blue}{ - \infty }, \ \text{when} \ c < 0.
        \end{align}
        
         \item[\textbf{II.c.}]  \textbf{\textcolor{blue}{Stationary processes}}: $c < 0$, $\gamma = 0$ 
        \begin{align}
            \underset{ n \to + \infty }{ \text{lim} } \rho_n := \underset{ n \to + \infty }{ \text{lim} } \left( 1 + c \right) = ( 1 + c ) \in \mathbb{R} 
        \end{align}
        
        \item[\textbf{II.$a^{\prime}$.}]  \textbf{\textcolor{red}{Near-explosive processes}}: $c > 0$, $\gamma = 1$ 
        \begin{align}
            \underset{ n \to + \infty }{ \text{lim} } \rho_n := \underset{ n \to + \infty }{ \text{lim} } \left( 1 + \frac{c}{n}  \right) = 1 +  \underset{ n \to + \infty }{ \text{lim} } \left\{ \frac{c}{n} \right\} = \textcolor{red}{ + \infty}, \ \text{when} \ c > 0.
        \end{align}
        
        \item[\textbf{II.$b^{\prime}$.}]  \textbf{\textcolor{red}{Mildly-explosive process}}: $c > 0$, $\gamma \in (0,1)$
        \begin{align}
            \underset{ n \to + \infty }{ \text{lim} } \rho_n := \underset{ n \to + \infty }{ \text{lim} } \left( 1 + \frac{c}{n}  \right) = 1 +  \underset{ n \to + \infty }{ \text{lim} } \left\{ \frac{c}{n} \right\} = \textcolor{red}{+ \infty}, \ \text{when} \ c > 0.
        \end{align}
        
        \item[\textbf{II.$c^{\prime}$.}]  \textbf{\textcolor{red}{Stationary process}}: $c > 0$, $\gamma = 0$
         \begin{align}
            \underset{ n \to + \infty }{ \text{lim} } \rho_n := \underset{ n \to + \infty }{ \text{lim} } \left( 1 + c \right) = ( 1 + c ) \in \mathbb{R} 
        \end{align}
        
        \item[\textbf{II.d.}] \textbf{Pure Unit Root processes}: $c = 0$ regardless if $\gamma = 1$ or $\gamma \in (0,1)$
        \begin{align}
            \underset{ n \to + \infty }{ \text{lim} } \rho_n := \underset{ n \to + \infty }{ \text{lim} } \left( 1 + \frac{0}{n}  \right) = 1 \in \mathbb{R}.
        \end{align}

    \end{itemize}
    
    \

     \item[\textbf{III.}] \textbf{Pure Explosive Processes:} $\rho_n \to \rho > 1$. 
     
     Thus, in this case to construct the IVX instrument we need to convert the process to be mildly explosive, i.e., so that is within the red region. 
        
\end{itemize}

\newpage 

\begin{example}
Consider the classification of nonstationarity given by \cite{benke2021nearly}.    
\begin{align}
X_k = \beta X_{k-1} + \varepsilon_k, \ \ \ X_0 = 0.    
\end{align}
\begin{itemize}
    
\item In the case that $|\beta | < 1$, the sequence $\left\{ \widehat{\beta}_n  \right\}_{ n \in \mathbb{N} }$ is asymptotically normal such that 
\begin{align}
\sqrt{n} \left( \widehat{\beta}_n  - \beta \right) \overset{d}{\to} \mathcal{N} \big( 0, 1 - \beta^2 \big), \ \ \ \text{as} \ \ \ n \to \infty. 
\end{align}

\item In the case that $| \beta | = 1$, then the sequence $\left\{ \widehat{\beta}_n  \right\}_{ n \in \mathbb{N} }$ has a limit distribution that depends on functionals of Brownian motion such that 
\begin{align}
n \left( \widehat{\beta}_n  - \beta \right) \overset{d}{\to} \frac{ \displaystyle \int_0^1 W(r) dW(r) }{  \displaystyle \int_0^1 W(r)^2 dr  },\ \ \ \text{as} \ \ \ n \to \infty.   
\end{align}

\item A sequence $\left( X^{ \beta^{(n)} } \right)_{ k \in \mathbb{N} }$ for $n \in \mathbb{N}$ of an AR(1) process is called nearly unstable if $\beta^{(n)} \to \beta$ as $n \to \infty$, where $| \beta | = 1$. Moreover, in the case that $\beta = 1$ and $\beta^{(n)} = \left( 1 + \frac{c}{n} \right)$ with some $c \in \mathbb{R}$, 
\begin{align}
n \left(  \widehat{\beta}_n^{ (n) } - \beta^{(n)} \right) \overset{d}{\to} \frac{ \displaystyle  \int_0^1 J(r) dW(r)  }{  \displaystyle  \int_0^1 J(r)^2 dr }, \ \ \ \text{as} \ n \to \infty.    
\end{align}
\end{itemize}
such that $\left\{ J_{ c } (r) \right\}_{ r \in [0,1] }$ is a continuous AR(1) process, that is, an OU process defined as a unique strong solution of the SDE such that 
\begin{align}
\begin{cases}
dJ_{c} (r) &= c J_{c} (r) dr + dW(r), \ \ \ r \in [0,1]
\\
Y_{(c)} (r) &= 0
\end{cases}  
\end{align}
Furthermore, if we consider $c$ to be a parameter instead of $\beta_{(n)}$ then the LSE of $c = n \left(  \beta^{(n)} - 1 \right)$ is given by the following expression:
\begin{align}
\hat{c}_n &= n \left( \widehat{\beta}_n^{(n)} -  1 \right) = n \left( \beta_n^{(n)} - \beta^{(n)} \right) + c,   
\\
\widehat{c}_n &\overset{d}{\to} \frac{ \displaystyle  \int_0^1 J_c(r) dW(r)  }{  \displaystyle  \int_0^1 J_c(r)^2 dr }  + c  
=
\frac{ \displaystyle  \int_0^1 J_c(r) dJ_c (r)  }{  \displaystyle  \int_0^1 J_c(r)^2 dr },   
\end{align}
where the limit distribution above turns out to be the maximum likelihood estimator (MLE) of the parameter $c$ for the expression that gives the stochastic differential equation based on a sample $\left\{ J_{ c } (r) \right\}_{ r \in [0,1] }$. 
\end{example}

\newpage

\begin{example}
Consider the following processes:
\begin{align}
x_t &= \mu + X_t
\\
X_t &= \rho X_{t-1} + \varepsilon_t 
\end{align}
which implies that 
\begin{align}
x_t = \mu \big( 1 - \rho_n \big) + \rho_n x_{t-1} + \varepsilon_t     
\end{align}
Moreover, denote with 
\begin{align*}
x_t^{\mu} = \big(  x_t - \bar{x}_t \big), \ \ \bar{x}_t = \frac{1}{n} \sum_{t=1}^n x_t, \ \ x_{t-1}^{\mu} = \big(  x_{t-1} - \bar{x}_{t-1} \big), \ \ \bar{x}_{t-1} = \frac{1}{n} \sum_{t=1}^n x_{t-1}.
\end{align*}
Equivalently, for $\mu \neq 0$, we consider the utoregressive model $x_t^{\mu} = \rho_n x_{t-1}^{\mu} + \varepsilon_t^{\mu}$, which implies
\begin{align}
\hat{\rho}_n = \left(  \sum_{t=1}^n x_{t-1}^{\mu} \right)^{-1} \left(  \sum_{t=1}^n x_{t-1}^{\mu}  x_{t}^{\mu}  \right) , \ \ \ \hat{\varepsilon}_t^{\mu} = x_{t}^{\mu}  - \hat{\rho}_n x_{t-1}^{\mu}  
\end{align}
\end{example}
Therefore in this article we focus on the quantile autoregressive model with a model intercept the particular modeling framework motivate us to investigate the interplay between near-zero intercept\footnote{Notice that  since the initial condition of the autoregressive process corresponds to the boundary condition of an ordinary differential equation problem, then the stochastic solution, that is, the limiting distribution under equilibrium conditions will depend on the initial condition (see, \cite{saxena1982estimation}).} and stochastic persistence by deriving the asymptotic distribution. 

\medskip

\begin{assumption}
\label{Assumption1}
The distribution $F_{\varepsilon}$ is in the domain of attraction of a stable law indexed with $\upalpha \in (0,2)$, which has a strictly positive density. When $\upalpha = 2$ then it corresponds to the case of innovation sequences from a distribution function with finite variance. Furthermore, 
\begin{align}
\mathbb{E} \left( \varepsilon_1 \right) = 0,  \ \ \mathsf{Var} \left( \varepsilon_1 \right) = \sigma^2_{\varepsilon }, \ \ \ \text{and} \ \ \  \mathbb{E} \left| \varepsilon_1 \right|^{2+m} < \infty, \ \text{for some} \ m > 0.   
\end{align} 
\end{assumption}
When $c = 0$ then the AR(1) model is a random walk model with stable innovations. Therefore, under the assumption that $c = 0$, which implies the presence of an integrated process the asymptotic behaviour of the ordinary least squares estimator can be obtained based on Brownian motion functionals. Define with $U_{\upalpha}(r)$ and $V_{\upalpha}(r)$ to be L\'evy processes on the space of functions $\mathcal{D}[0,1]$. 

\medskip

\begin{lemma}
\label{Lemma1}
Suppose that $\varepsilon_t$ satisfies Assumption \ref{Assumption1}. Then as $n \to \infty$ it holds that 
\begin{align}
\left( \sum_{t=1}^{ \floor{nr} } \frac{\varepsilon_t}{a_n}, \sum_{t=1}^{ \floor{nr} } \frac{\varepsilon_t^2}{a_n^2} \right) \Rightarrow \bigg( U_{\upalpha} (r), V_{\upalpha}(r) \bigg)
\end{align}
where $\big( U_{\upalpha} (r), V_{\upalpha}(r) \big)$ is a Levy process in $\mathcal{D}[0,1]^2$ with index $\upalpha \in (0,2)$. 
\end{lemma}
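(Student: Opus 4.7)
The plan is to establish joint functional convergence in $\mathcal{D}[0,1]^2$ via the point-process approach, exploiting the fact that both partial-sum processes are natural additive functionals of the same empirical point measure built from the normalized innovations. Under Assumption~\ref{Assumption1}, the common law $F_\varepsilon$ belongs to the domain of attraction of an $\upalpha$-stable law, so there exists a regularly varying normalizing sequence $\nu_n$ (of order $n^{1/\upalpha}$ up to a slowly varying factor) for which the empirical point measure
\begin{equation*}
N_n := \sum_{t=1}^n \delta_{(t/n,\, \varepsilon_t/\nu_n)}
\end{equation*}
converges weakly on the space $M_p([0,1] \times (\mathbb{R}\setminus\{0\}))$ of Radon point measures to a Poisson random measure $N$ with mean $\mathrm{Leb}_{[0,1]} \otimes \nu_\upalpha$, where $\nu_\upalpha(dx) = c_\pm |x|^{-\upalpha-1}\,dx$ is the Lévy measure of the $\upalpha$-stable limit. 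This single object encodes the joint structure of both partial sums.

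Next I would write the two statistics as additive functionals of $N_n$,
\begin{equation*}
S_n(r) := \frac{1}{\nu_n} \sum_{t=1}^{\floor{nr}} \varepsilon_t = \int_{[0,r]\times \mathbb{R}} x \, N_n(ds,dx), \qquad Q_n(r) := \frac{1}{\nu_n^2} \sum_{t=1}^{\floor{nr}} \varepsilon_t^2 = \int_{[0,r] \times \mathbb{R}} x^2 \, N_n(ds,dx),
\end{equation*}
and apply the continuous mapping theorem in a truncated form. The integrands $x$ and $x^2$ are not vaguely continuous near the origin (the contribution from accumulating small jumps does not vanish), so for $\delta>0$ I would first consider $S_n^{(\delta)}, Q_n^{(\delta)}$ built only from atoms with $|x|>\delta$. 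These are continuous functionals of $N_n$ in the vague topology, so continuous mapping yields joint weak convergence to the truncated Lévy processes $(U_\upalpha^{(\delta)}, V_\upalpha^{(\delta)})$ in $\mathcal{D}[0,1]^2$.

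To remove the truncation I would control the residuals $\sup_r|S_n(r)-S_n^{(\delta)}(r)|$ and $\sup_r|Q_n(r)-Q_n^{(\delta)}(r)|$ by Doob's maximal inequality applied to the compensated martingales, using the truncated second-moment expression $n\nu_n^{-2}\,\mathbb{E}\bigl[\varepsilon_1^2 \mathbf{1}\{|\varepsilon_1|\le \delta\nu_n\}\bigr]$, which by Karamata's theorem matches the small-jump variance $\int_{|x|\le\delta}x^2\,\nu_\upalpha(dx)$ and tends to zero as $\delta\downarrow 0$ uniformly in $n$. The case $\upalpha\le 1$ is handled by the standard centering absorbed into the drift of $U_\upalpha$, with the zero-mean condition from Assumption~\ref{Assumption1} removing any residual deterministic shift. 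Joint tightness in $\mathcal{D}[0,1]^2$ is then immediate componentwise from the stable invariance principle (Jacod--Shiryaev, VI.3.33). The \textbf{main obstacle} is the bookkeeping in the truncation step: the limit $(U_\upalpha,V_\upalpha)$ must be the coupled pair in which $V_\upalpha$ jumps by $(\Delta U_\upalpha)^2$ at the same instants, not an independent pair with the correct marginals. The point-process representation delivers this coupling automatically, but it has to be preserved across the passage $\delta\to 0$, which is the step that requires the most care.
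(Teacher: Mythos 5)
The paper never actually proves this lemma: it is stated in Section~\ref{Section2} as an imported result from the stable limit theory literature (the joint invariance principle for partial sums and partial sums of squares that underlies the infinite-variance unit-root asymptotics of Chan--Tran/Knight type), and the only accompanying discussion is the remark identifying $V_{\upalpha}$ with the quadratic variation $[U_{\upalpha},U_{\upalpha}]_r$ and the specialization $U_{\upalpha}\equiv W$, $V_{\upalpha}(r)=r$ when $\upalpha=2$. Your point-process argument is therefore not "the paper's route" but it is the standard proof of exactly this statement: convergence of the empirical point measure $\sum_t\delta_{(t/n,\varepsilon_t/\nu_n)}$ to a Poisson random measure with intensity $\mathrm{Leb}\otimes\nu_{\upalpha}$, representation of both partial-sum processes as additive functionals of that measure, continuous mapping for the $\delta$-truncated sums, and a Karamata/maximal-inequality bound to kill the small-jump residuals uniformly as $\delta\downarrow 0$. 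You also correctly identify the one genuinely delicate point, namely that the limit must be the coupled pair in which $V_{\upalpha}$ jumps by $(\Delta U_{\upalpha})^2$ at the same times, which the single point-measure representation preserves through the truncation limit; this is precisely the property the paper's remark asserts without argument. Two small caveats: for the sum of squares no compensation is needed (the summands are nonnegative and $\int_{|x|\le 1}x^2\,\nu_{\upalpha}(dx)<\infty$, so Markov's inequality suffices), whereas your treatment of the case $\upalpha\le 1$ for the first component is waved through --- the "zero-mean condition" you invoke is vacuous there since the mean does not exist, and the $\upalpha=1$ asymmetric case needs an explicit centering sequence rather than a drift absorbed post hoc; and note that Assumption~\ref{Assumption1} as literally written (finite variance and $2+m$ moments) is only consistent with $\upalpha=2$, the case the paper ultimately uses, so your proof covers more than the paper strictly needs.
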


\newpage

\begin{remark}
A general class of one-dimensional stochastic processes, are the so-called L\'evy processes. Similar to Wiener processes, L\'evy processes have right continuous paths with left limits, are initiated from the origin and both have stationary and independent increments (\cite{kyprianou2014fluctuations}). Under the \textit{i.i.d} innovation assumption it can be shown that $V_{\upalpha} (r) = \int_0^r \left( dU_{\upalpha} (s) \right)^2 = \big[ U_{\alpha}, U_{\alpha} \big]_{ r \in [0,1] }$ which is the quadratic variation of the Levy process $U_{\upalpha} (r)$. Furthermore, $V_{\alpha} (r)$ is a stochastic integral. When $\upalpha = 2$, which corresponds to the finite variance case, it holds that $ U_{\upalpha} (r) \equiv W(r)$ for some $0 \leq r \leq 1$, the standard Brownian motion, and  $V_{\upalpha} (r) = \big[ W, W  \big]_r = r$ (see, also \cite{cramer1951contribution}).  
\end{remark}
Any asymptotic results followed by Lemma \ref{Lemma1} coincide with the standard random walk asymptotics for finite variance models (we focus in the case $\upalpha = 2$). Then as $n \to \infty$, it holds that
\begin{align}
n \left( \hat{\rho}_{n,c} - 1 \right) \Rightarrow \frac{ \displaystyle \int_0^1 W(r) dW(r) }{ \displaystyle \int_0^1 W^2(r) dr }.   
\end{align}  
Since $\rho_n = \left( 1 + \frac{c}{n^{\gamma}}\right)$ under the stationary condition which implies that $0 < \rho_n < 1$ then it holds that $- n^{\gamma} < c < 0$. The OLS estimate $\hat{\rho}$ is $n-$consistent, that is, $n \left( \hat{\rho} - \rho \right)$ has a nondegenerate limit distribution depending on $c$, while $\hat{\mu}$ is $\sqrt{n}-$asymptotically normal. As a result, $\hat{c} = - n( 1 - \hat{\rho} )$ is the  OLS estimate of the coefficient of persistence $c$, which is not consistent. \cite{mikusheva2012one} shows   
\begin{align}
\left( \hat{c} - c \right) = \frac{ \displaystyle \int_0^1 J_c(r) dW(r) }{ \displaystyle \int_0^1 J_c^2 (r) dr  }.
\end{align}
The particular asymptotic result demonstrates the well-known conjecture that the OLS estimate of the nuisance parameter of persistence, $c$, is not consistent. In fact, $\hat{c}$ is asymptotically highly biased to the left, thus the estimated model looks more stationary that it actually is. Roughly speaking, the assumption regarding the dependence structure for the disturbance term can affect the limit theory of estimators. In particular, serial correlation in the errors induces an asymptotic bias for $\hat{\rho}_n$ and contributes to the bias of the Gaussian limiting distribution. For the asymptotic theory we assume that $\mathcal{D}[0,1]$ is endowed with the Skorokhod topology such that any partial sum processes are measurable for the associated Borel $\sigma-$algebra under the absence of serial correlation. Imposing assumptions regarding the properties of the disturbance term $\varepsilon_t$ can imply different asymptotic behaviour for estimators given certain modelling conditions. According to \cite{werker2022semiparametric} the usual procedures established in the literature thus far are based on the assumption of Gaussian innovations and, while their validity has been established under weak assumptions, the asymptotic power of all these procedures cannot go beyond the Gaussian power envelope. Relaxing the Gaussianity assumption requires to apply semiparametric estimation methodologies is beyond the scope of our study. Thus, the \textit{i.i.d} innovation sequence assumption with finite variance,  simplifies the representation of the necessary regularity conditions. 

\newpage


\subsection{Quantile Conditional Estimation}

In this Section we discuss in more details the estimation procedure for the quantile autoregressive time series that accommodates the parametrization of the autoregression  coefficient with respect to moderate deviations from the unit boundary. We first introduce the quantile estimation method\footnote{A complete treatment of limit results for quantile regressions can be found in the book of \cite{Koenker2005}.} to obtain parameter estimates and then establish the asymptotic theory of this estimator for the autoregressive model denoted with $y_t = \mu + \rho y_{t-1} + u_t$ (with a slight change of notation).  

Denote with $\mu( \uptau )$ and $\rho_n ( \uptau )$ to be the $\uptau-$quantile dependent parameters, which are determined based on a conditional quantile functional form as below:
\begin{align}
\mathsf{Q}_{y_t} \left( \uptau | \mathcal{F}_{t-1} \right) &:=  F^{-1}_{ y_t | \boldsymbol{x}_{t-1} } (\uptau) \equiv \mu (\uptau) + \rho_{c} (\uptau) y_{t-1}. 
\\
F_{ y_t | \boldsymbol{x}_{t-1} } (\uptau) &:= \mathbb{P} \bigg( y_t \leq \mathsf{Q}_{y_t} \left( \uptau | \mathcal{F}_{t-1} \right) \big| \mathcal{F}_{t-1} \bigg) \equiv \uptau.
\end{align}
for some $\uptau \in (0,1)$, where $\uptau$ denotes the quantile level within a compact set $(0,1)$. 

Denote the parameter vector with $\boldsymbol{\vartheta} ( \uptau ) = \big( \mu  ( \uptau ), \rho_c ( \uptau ) \big)^{\top}$ and $\boldsymbol{X}_t = \boldsymbol{D}_n^{-1} \big( 1, y_{t-1} \big)$, where $\boldsymbol{D}_n$ is the normalization matrix which includes the different convergence rates for the model intercept vis-a-vis the slope coefficient. Then, from \cite{koenker1978regression} and \cite{koenker1987estimation} the quantile regression estimator is obtained via the following  optimization function 
\begin{align}
\label{the.problem}
\widehat{ \boldsymbol{\vartheta} }_n \big( \uptau \big) 
:= 
\underset{ \boldsymbol{\vartheta} ( \uptau )  }{ \mathsf{arg \ min} } \ \sum_{t=1}^n \varrho_{\uptau} \bigg( y_{t} - \boldsymbol{\vartheta} ( \uptau )^{\top} \boldsymbol{X}_{t} \bigg).
\end{align}
such that $\widehat{ \boldsymbol{\vartheta} }_n \big( \uptau \big) \equiv \boldsymbol{D}_n \left( \hat{\mu}_n ( \uptau ) - \mu ( \uptau ), \hat{\rho}_{n,c} ( \uptau ) - \rho_{c}( \uptau ) \right)$.  
Moreover, we denote with $\psi(\mathsf{u})$ to be the left derivative of $\varrho(\mathsf{u})$. In particular, when $\psi( \mathsf{u} ) := \mathsf{u}$ is the identity function then the $\widehat{ \boldsymbol{\vartheta} }_n$ corresponds to the least squares estimator, while when $\psi(\mathsf{u}) := ( 1 - \uptau )$ for $\mathsf{u} \leq 0$ and  $\psi(\mathsf{u}) = \uptau$ for $\mathsf{u} > 0$, then it corresponds to the optimization function and $\widehat{ \boldsymbol{\vartheta} }_n$ is the quantile-dependent estimator.  
\begin{assumption}
\label{Assumption2}
Suppose that $\mathbb{E} \big[ \psi ( u_t ( \uptau )  ) \big] = 0$ and consider the random variable which corresponds to the first derivative around some parameter $\theta \in \mathbb{R}$ such that 
\begin{align}
\xi := \left| \frac{ \partial }{ \partial \theta } \mathbb{E} \bigg[ \psi \bigg( u_1 ( \uptau ) - \theta \bigg) \bigg] \right|_{ \theta = 0 }, \ \ \text{where} \   \xi  \neq 0.
\end{align}
where $\mathbb{E}| \psi \left(  u_1 ( \uptau ) \right) |^{2 + m} < \infty$, for some $m > 0$.
\end{assumption}
Assumption \ref{Assumption2} ensures that the first derivative is Lipschitz continuous and bounded which corresponds to the first derivative for the expectation of the check function as a random variable evaluated within the neighbourhood of the true parameter vector $\theta = 0$. Furthermore, due to the fact that the quantile autoregressive model we consider in this paper corresponds to a possibly nonstationary time series model, then the asymptotic theory of estimators and corresponding test statistics depends on Brownian motion functionals as introduced with Assumption \ref{Assumption3} below.

\newpage 

\begin{assumption}
\label{Assumption3} 
The following conditions for the innovation sequence hold:
\begin{itemize}
\item[\textbf{(\textit{i})}] The sequence of stationary conditional probability distribution functions denoted with $\big\{ f_{ \varepsilon_t (\uptau), t-1}(.) \big\}$ evaluated at zero with a non-degenerate mean function $f_{ \varepsilon_t (\uptau)  }(0) := \mathbb{E} \left[  f_{ \varepsilon_t (\uptau), t-1}(0) \right] > 0$, that satisfies a Functional Central Limit Theorem (FCLT) expressed as below
\begin{align}
\frac{1}{ \sqrt{n} } \sum_{t=1}^{ \floor{nr} } \left( f_{  \varepsilon_t (\uptau), t-1}(0) - \mathbb{E} \left[  f_{ \varepsilon_t (\uptau), t-1}(0) \right] \right) \Rightarrow B_{ f_{  \varepsilon_t (\uptau) } } (r), \ \text{with} \ r \in (0,1).
\end{align}
\item[ \textbf{(\textit{ii})} ]  For each $t$ and $\uptau \in (0,1)$, $f_{ \varepsilon_t (\uptau), t-1}(.)$ is uniformly bounded away from zero with a corresponding conditional distribution function $F_t(.)$ which is absolutely  continuous with respect to Lebesgue measure on $\mathbb{R}$ (see, \cite{neocleous2008monotonicity}, \cite{goh2009nonstandard} and \cite{lee2016predictive}).
\end{itemize}
\end{assumption}

\begin{remark}
Assumption \ref{Assumption3} gives necessary and sufficient conditions for a functional central limit theorem to hold for the corresponding innovation sequence based on the conditional quantile functional form, which is instrumental for deriving the asymptotic behaviour of the quantile-dependent estimators under nonstationarity based on Brownian motion functionals. 
\end{remark}
Therefore, to obtain the model estimates based on the optimization problem \eqref{the.problem} we apply the Taylor expansion to the check function, such that for a given parameter $\boldsymbol{\delta} ( \uptau )$ it holds that 
\begin{align}
\label{B.approx}
\varrho_{\uptau} \bigg( \varepsilon_{t} - \boldsymbol{\delta} ( \uptau )^{\top} \boldsymbol{X}_{t} \bigg) 
= 
\varrho_{\uptau} ( \varepsilon_{t} ) - \boldsymbol{\delta} ( \uptau )^{\top} \psi ( \varepsilon_{t} ) + \varphi_t \big( \boldsymbol{\delta} ( \uptau ) \big).
\end{align}

\begin{remark}
Notice that for instance the t-ratio for $\rho_{n,c}$ is defined by $\sqrt{ \sum_{t=1}^n y_{t-1}^2 } \left( \hat{\rho}_{n,c} - \rho_c \right)$, thus to obtain the limiting distribution of the $t-$test we need to obtain an asymptotic expression for the normalized centered estimator $\left( \hat{\rho}_{n,c} - \rho_c \right)$. Furthermore, for sequences such that $\underset{ n \to \infty }{ \mathsf{lim} } n \left( 1 - \rho_n \right) = 0$, the nearly unstable model behaves asymptotically like the strictly unstable model in which case $\rho = 1$.
\end{remark}

\subsection{Large-Sample Theory}
\label{Section3}

In this section we present the main asymptotic theory results while detailed proofs can be found in the Appendix of the paper. Some important aspects worth emphasizing again is that while the case of near-integrated (NI) processes, such that $c < 0 $ and $\gamma = 1$, has been considered before in quantile autoregressive time series (see, \cite{chan2006quantile}) as well as the case of mildly integrated (MI) such that $c < 0 $ and $\gamma \in (0,1)$, the mildly explosive (ME) such that $c > 0 $ and $\gamma \in (0,1)$ and the explosive case, such that $c > 0 $ and $\gamma =1$ has not been widely explored before. In particular, all aforementioned cases which correspond to different regions of the parameter space, overcome the singularity problem; the region that the underline stochastic process does not have a solution. Using the local-to-unity parametrization in autoregressive processes overcomes this gap by considering the limiting distribution\footnote{The form of the noncentrality parameter of the $\chi^2-$distribution depends on the initial condition and the form of the autoregressive parameter of the model (see, Theorem 2.2. of \cite{Jian2022}). } for the whole parameter space regardless of the existence of a limit singularity.

\newpage

Using the local-to-unity asymptotics our aim is to derive a nuisance-parameter-free limiting distribution which can facilitate statistical inference. Therefore, by decomposing the underline stochastic processes into  components which include a  predictable quadratic variation, allows us to obtain a self-normalized martingale sequence, which is especially useful when deriving the limiting distribution of Wald-type statistics. In other words, Wald statistics constructed with a variance estimator which induced by the predictable quadratic variation ensures that the self-normalization property holds. When asymptotic theory of the autoregressive parameter that corresponds to a particular region of the parameter space is nonstandard, then this implies that the limit distribution of the MLE of the model can be represented as the MLE of a parameter of a process satisfying the stochastic differential equation for the OU process (see, \cite{benke2021nearly}). Thus, the limiting distributions of the model estimator and the corresponding test statistic are functions of the local-to-unity parameter which is not consistenty estimable since $\left( \hat{c} - c \right) = \mathcal{O}_p (1)$. 

Although in this article we only consider point inference procedures, in the case of interval inference, that is, when is concerned with the construction of confidence interval for unknown model parameters then several studies discuss the possibility of obtaining uniform inference procedures when the nonstationary autoregressive model is expressed using the local-to-unity parametrization (see, \cite{mikusheva2007uniform}, \cite{phillips2014confidence} and \cite{Magdalinos2022uniform}). In this article we focus on bridging the gap in the robust and uniform inference literature on quantile autoregressions and quantile predictive regression models by considering that the instrumentation procedure depends on the region of the parameter space. In other words, although the original IVX method proposed by \cite{PM2009econometric} and \cite{kostakis2015Robust} it is found to be robust in LUR moderate deviations from the unit boundary, it is clear than in explosive and mildly explosive regimes the same IVX instrumentation might not work so well. 
\begin{theorem}[\cite{chan2006quantile}]
Assume that Assumption \ref{Assumption0}-\ref{Assumption3}  hold and that the autoregression  coefficient is expressed as $\rho_{n,c} = \left( 1 + \frac{c}{k_n}  \right)$. Then, the following limit result hold  
\begin{align}
\boldsymbol{D}_n \left(  \hat{ \boldsymbol{\vartheta} }_n ( \uptau ) - \boldsymbol{\vartheta} ( \uptau )  \right) &\overset{ d }{ \to }  \frac{1}{ f_{\varepsilon} \big( F_{\varepsilon}^{-1} ( \varepsilon_t ( \uptau ) ) \big) } \Sigma^{-1} \left(  W ( \uptau ; 1 ) , \int_0^1 J(s) d W(  \uptau, s ) \right)^{\prime},
\\
n \bigg( \hat{\rho}_{n,c}( \uptau ) - \rho_{n,c} ( \uptau ) \bigg)
&\overset{ d }{ \to }  
\frac{1}{ f_{\varepsilon} \big( F_{\varepsilon}^{-1} ( \varepsilon_t ( \uptau ) ) \big) } \frac{ \displaystyle \int_0^1 J_1(s) dW( \uptau, s ) - W(\uptau, 1 ) \int_0^1 J_1(s) ds  }{ \displaystyle \int_0^1 J^2_1(s) ds - \left( \int_0^1 J_1(s) ds \right)^2 }
\end{align}
where $\boldsymbol{D}_n = \mathsf{diag} \left( \sqrt{n}, n \right)$ and $\boldsymbol{\vartheta}  ( \uptau ) = \big( \mu( \uptau ), \rho_{n,c} ( \uptau ) \big)$ such that 
\begin{align}
\boldsymbol{\Sigma} 
:= 
\int_0^1 \big( 1, J_1(s) \big)^{\prime} \big( 1, J_1(s) \big) ds 
\equiv
\begin{bmatrix}
1 &  \displaystyle \int_0^1  J_1(s) ds  
\\
\displaystyle \int_0^1  J_1(s)^{\prime} ds & \displaystyle \int_0^1  J_1(s) J_1(s)^{\prime} ds 
\end{bmatrix}.
\end{align}
\end{theorem}

\newpage

\begin{align}
\left\{ \sum_{t=1}^n y_{t-1}^2 - \left( \sum_{t=1}^n y_{t-1}  \right)^2 \right\}^{1/2} \bigg( \hat{\rho}_{n,c}( \uptau ) - \rho_{n,c} ( \uptau ) \bigg)
&\overset{ d }{ \to } 
\mathcal{N} \left( 0, \frac{ \uptau ( 1- \uptau) }{ f_{\varepsilon}^2 \big( F_{\varepsilon}^{-1} ( \varepsilon_t ( \uptau ) ) \big)   }  \right).
\end{align}

\subsubsection{Limit theory for near-stationary case}

In a similar spirit as in the framework proposed by \cite{Phillips2007limit}, in order to establish the limit theory of the autoregression  coefficient within our modelling environment,  we consider the asymptotic behaviour of the sample moments that appear in the quantile-dependent estimator separately. However, in contrast to the ordinary least squares estimation, when the model parameters are estimated using the conditional quantile functional form, we employ standard approximation methods (such as the Bahadur representation) from the quantile regression literature to obtain analytical expressions for the quantities of interest. Specifically, in the near-stationary case, which implies that $c < 0$, the limit to the unit boundary is approached from the left of the triangular array. Furthermore, due to the different convergences rate of the model intercept versus the slope parameter we employ the normalization matrices $\boldsymbol{D}_n$ and $\boldsymbol{B}$ as defined below
\begin{align}
\boldsymbol{D}_n = 
\begin{pmatrix}
\sqrt{n} & 0 
\\
0 & \sqrt{n k_n}
\end{pmatrix}, \ \ \
\boldsymbol{B} = 
\begin{pmatrix}
1 & 0 
\\
0 &  \sigma^2 / (-2c)  
\end{pmatrix}
\end{align}
where $k_n = n^{\gamma}$ and $\gamma \in (0,1)$. The $n^{-1 / 2 }$ convergence rate corresponds to the model intercept while when $k_n = n^{\gamma}$, then the autoregression  parameter of the model has a convergence rate of $n^{ - \frac{1 + \gamma}{2} }$ which is also the rate of convergence that corresponds to a mildly integrated process. Furthermore, in empirical applications in practise we do not know a prior whether the expression $\sqrt{n} \left( \hat{\rho}_n - \rho  \right)$ is positive or negative. However, since we do not partition the parameter space accordingly, the asymptotic theory mainly focus on the near-integrated case and does not cover the mildly explosive or pure explosive since $c < 0$. 

\begin{theorem}
\label{theorem1}
Under Assumptions \ref{Assumption1}-\ref{Assumption3}, 
\begin{align}
\big( \hat{\mu}_n, \hat{\rho}_{n,c} \big)^{\top} = \big( \mu, \rho_{n,c} \big)^{\top} + \frac{ ( \boldsymbol{B} \boldsymbol{D}_n )^{-1} }{ \xi } \sum_{t=1}^n \psi \big( \varepsilon_t \big) \boldsymbol{X}_t^{\top} + o_p(1).
\end{align}
In particular, when $\psi( \mathsf{u} ) = \big( \uptau - \mathbf{1} \left\{ \mathsf{u} \leq 0 \right\} \big)$ corresponds to the quantile regression and therefore the above expression reduces to 
\begin{align}
\begin{pmatrix}
\hat{\mu}_n ( \uptau )
\\
\hat{\rho}_{c,n} ( \uptau )
\end{pmatrix}
= 
\begin{pmatrix}
\mu_n ( \uptau )
\\
\rho_{c,n} ( \uptau )
\end{pmatrix}
+
\frac{ \left( \boldsymbol{B} \boldsymbol{D}_n \right)^{-1} }{ f_{\varepsilon} \big( F_{\varepsilon}^{-1} ( \uptau ) \big) } \sum_{t=1}^n \bigg( \uptau - \mathbf{1} \left\{ \varepsilon_t \leq F_{\varepsilon}^{-1} ( \uptau )   \right\} \bigg)
\begin{pmatrix}
\displaystyle \frac{1}{\sqrt{n}}
\\
\\
\displaystyle \frac{ y_{t-1} }{\sqrt{n k_n}}
\end{pmatrix}
+ o_P(1).
\end{align}
where $f_{\varepsilon}(x)$ and $F_{\varepsilon}(x)$ denote the probability and cumulative density functions of $\varepsilon_1$, respectively. 
\end{theorem}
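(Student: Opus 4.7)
The plan is to establish the Bahadur-type representation via a convexity argument in the spirit of Hjort and Pollard, adapted to the mildly integrated martingale setting. Reparameterize by $\boldsymbol{\delta} := \boldsymbol{D}_n \big( \boldsymbol{\vartheta}(\uptau) - \boldsymbol{\vartheta}_0(\uptau) \big)$, where $\boldsymbol{\vartheta}_0(\uptau) = (\mu(\uptau), \rho_{n,c}(\uptau))^{\top}$ denotes the true parameter, and consider the reparameterized criterion
\begin{equation*}
Z_n(\boldsymbol{\delta}) \;:=\; \sum_{t=1}^n \Big[ \varrho_{\uptau}\big(\varepsilon_t - \boldsymbol{\delta}^{\top} \boldsymbol{X}_t\big) - \varrho_{\uptau}\big(\varepsilon_t\big) \Big],
\end{equation*}
which is convex in $\boldsymbol{\delta}$ and whose minimizer equals $\hat{\boldsymbol{\delta}}_n = \boldsymbol{D}_n(\hat{\boldsymbol{\vartheta}}_n - \boldsymbol{\vartheta}_0)$. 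The aim is to characterize this minimizer in closed form up to a vanishing term.

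Applying the Knight-type expansion \eqref{B.approx} termwise yields $Z_n(\boldsymbol{\delta}) = - \boldsymbol{\delta}^{\top} S_n + R_n(\boldsymbol{\delta})$, where $S_n := \sum_{t=1}^n \psi(\varepsilon_t) \boldsymbol{X}_t$ is an $\mathcal{F}_t$-martingale score under Assumption~\ref{Assumption2} and $R_n(\boldsymbol{\delta}) := \sum_{t=1}^n \varphi_t(\boldsymbol{\delta})$ is the aggregated remainder. Split $R_n = \bar R_n + (R_n - \bar R_n)$ with $\bar R_n(\boldsymbol{\delta}) := \sum_t \mathbb{E}[\varphi_t(\boldsymbol{\delta})\mid \mathcal{F}_{t-1}]$. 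A second-order Taylor expansion of $\theta \mapsto \mathbb{E}[\psi(\varepsilon-\theta)]$ at the origin, combined with the Lipschitz regularity afforded by Assumption~\ref{Assumption2}, yields pointwise in $\boldsymbol{\delta}$
\begin{equation*}
\bar R_n(\boldsymbol{\delta}) \;=\; \frac{\xi}{2}\, \boldsymbol{\delta}^{\top} \!\!\left( \sum_{t=1}^n \boldsymbol{X}_t \boldsymbol{X}_t^{\top} \right)\! \boldsymbol{\delta} \;+\; o_p(1).
\end{equation*}

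The next ingredient is the near-stationary behaviour of the sample Gram matrix. The off-diagonal term $\frac{1}{n\sqrt{k_n}}\sum_t y_{t-1}$ is $o_p(1)$, while the slope-diagonal follows from the standard mildly integrated moment (valid for $c<0$ under Assumptions~\ref{Assumption0}--\ref{Assumption1})
\begin{equation*}
\frac{1}{nk_n}\sum_{t=1}^n y_{t-1}^2 \;\overset{p}{\longrightarrow}\; \frac{\sigma_{\varepsilon}^2}{-2c},
\end{equation*}
so that $\sum_t \boldsymbol{X}_t \boldsymbol{X}_t^{\top} \overset{p}{\to} \boldsymbol{B}$. A martingale-difference law of large numbers applied to the $\mathcal{F}_t$-increments of $R_n - \bar R_n$ gives $R_n(\boldsymbol{\delta}) - \bar R_n(\boldsymbol{\delta}) = o_p(1)$ pointwise, while $S_n = O_p(1)$ by the martingale central limit theorem underlying Lemma~\ref{Lemma1} together with Assumption~\ref{Assumption3}. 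Combining these ingredients delivers the pointwise limit $Z_n(\boldsymbol{\delta}) \overset{p}{\longrightarrow} Q(\boldsymbol{\delta}) := -\boldsymbol{\delta}^{\top} S_n + \tfrac{\xi}{2}\, \boldsymbol{\delta}^{\top} \boldsymbol{B}\, \boldsymbol{\delta}$.

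Because $Z_n$ is convex, the convexity lemma promotes this pointwise convergence to uniform convergence on every compact subset of $\mathbb{R}^2$, and the strict convexity of $Q$ (ensured by $\boldsymbol{B}>0$ and $\xi \neq 0$) forces the minimizer to satisfy $\hat{\boldsymbol{\delta}}_n = \xi^{-1} \boldsymbol{B}^{-1} S_n + o_p(1)$. Since both $\boldsymbol{D}_n$ and $\boldsymbol{B}$ are diagonal, premultiplying by $\boldsymbol{D}_n^{-1}$ yields the representation $\hat{\boldsymbol{\vartheta}}_n - \boldsymbol{\vartheta}_0 = \xi^{-1}(\boldsymbol{B}\boldsymbol{D}_n)^{-1} \sum_{t=1}^n \psi(\varepsilon_t) \boldsymbol{X}_t + o_p(1)$; the quantile specialization is immediate upon setting $\psi(\mathsf{u}) = \uptau - \mathbf{1}\{\mathsf{u}\leq 0\}$ and $\xi = f_\varepsilon(F_\varepsilon^{-1}(\uptau))$. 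The main obstacle will be controlling the centered remainder $R_n - \bar R_n$ uniformly on compact sets, because $\varphi_t$ depends on $\boldsymbol{\delta}$ only through a Lipschitz but non-smooth indicator function, while the regressor $y_{t-1}$ is nonstationary; the requisite stochastic equicontinuity demands a maximal inequality for martingale-difference arrays, which in turn leans on the mildly integrated normalization carried by $\boldsymbol{D}_n$ and on the partial-sum asymptotics for $y_{t-1}^2$ above.
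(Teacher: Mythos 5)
Your proposal follows essentially the same route as the paper's own proof: Knight's identity applied to the recentred convex criterion, convergence of the conditional-expectation part of the remainder to $\tfrac{\xi}{2}\,\boldsymbol{\delta}^{\top}\boldsymbol{B}\,\boldsymbol{\delta}$ via the moment limits $\frac{1}{nk_n}\sum_{t}y_{t-1}^2 \overset{p}{\to} \sigma^2/(-2c)$ and $\frac{1}{n\sqrt{k_n}}\sum_{t}y_{t-1}=o_p(1)$, negligibility of the centred remainder by a martingale argument (the paper invokes Lenglart's inequality with a truncation set where you propose a martingale maximal inequality), and Pollard's convexity lemma to pass from the limiting quadratic to the Bahadur form $\hat{\boldsymbol{\delta}}_n=\xi^{-1}\boldsymbol{B}^{-1}\sum_{t}\psi(\varepsilon_t)\boldsymbol{X}_t+o_p(1)$. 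Nothing in your plan deviates materially from the paper's argument, including the final specialization $\xi=f_{\varepsilon}\big(F_{\varepsilon}^{-1}(\uptau)\big)$ for the quantile case.
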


\newpage 

\begin{remark}
Theorem \ref{theorem1} provides a Bahadur representation for the parameter vector of the quantile autoregressive time series which includes a model intercept and a slope. In particular for M-regressions a necessary requirement for the functional form is to include a model intercept which can be different than zero. Moreover, the given limit results are employed to derive the asymptotic behaviour of model parameters based on moderate deviations from the unit boundary on the stationary region as summarized by the next theorem. Then, the robust estimation of the sparsity coefficient which depends on the  kernel density function can be improve the accuracy of the quantile-dependent model estimates.  
\end{remark}

\begin{theorem}
\label{theorem2}
Under Assumptions \ref{Assumption1}-\ref{Assumption3},
\begin{align}
\boldsymbol{D}_n \left( \big( \hat{\mu}_n, \hat{\rho}_{n,c}   \big) -  \big( \mu_n, \rho_{n,c} \big) \right) \overset{ d }{ \to } \mathcal{N} \left( 0, \frac{ \boldsymbol{B}^{-1} \times \mathbb{E} \big[ \psi^2 ( \varepsilon_1 ) \big] }{ \xi  } \right).
\end{align}
In particular, it follows that 
\begin{itemize}

\item[\textit{(i)}] If $\psi( \mathsf{u} ) = \big( \uptau - \mathbf{1} \left\{ \mathsf{u} \leq 0 \right\} \big)$, and the pdf $f(\mathsf{u})$ of $\varepsilon_1$ exists and satisfies $f_{\varepsilon} \big( F_{\varepsilon}^{-1} (\uptau) \big) > 0$, then  
\begin{align}
\frac{ \hat{\rho}_{n,c} ( \uptau ) - \rho_{c} ( \uptau ) }{ \sqrt{n k_n} } \overset{ d }{ \to } \mathcal{N} \left( 0, \frac{-2c}{ \sigma^2 }  \frac{\uptau(1 -\uptau)}{ f_{\varepsilon}^2 \big( F_{\varepsilon}^{-1} (\uptau) \big) }  \right).
\end{align}

\item[\textit{(ii)}] If $\psi( \mathsf{u} ) = \mathsf{u}$, then  
\begin{align}
\frac{ \hat{\rho}_{n,c} ( \uptau ) - \rho_{c} ( \uptau ) }{ \sqrt{n k_n} } \overset{ d }{ \to } \mathcal{N} \left( 0, -2c \right).
\end{align}

\end{itemize}
\end{theorem}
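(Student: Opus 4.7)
The plan is to apply Theorem~\ref{theorem1} to reduce the problem to the asymptotic distribution of the bivariate score sum $S_n := \sum_{t=1}^{n} \psi(\varepsilon_t) \boldsymbol{X}_t$, and then invoke a martingale CLT. Rearranging the Bahadur representation gives
\begin{equation*}
\boldsymbol{D}_n \left( \big( \hat{\mu}_n, \hat{\rho}_{n,c} \big) - \big( \mu_n, \rho_{n,c} \big) \right) = \frac{\boldsymbol{B}^{-1}}{\xi} \, S_n + o_p(1),
\end{equation*}
so the task reduces to establishing $S_n \overset{d}{\to} \mathcal{N}\!\left( 0, \mathbb{E}[\psi^2(\varepsilon_1)] \, \boldsymbol{B} \right)$; the stated limit then follows by applying $\boldsymbol{B}^{-1}/\xi$ on both sides and noting that the asymptotic covariance collapses to $\mathbb{E}[\psi^2(\varepsilon_1)] \, \boldsymbol{B}^{-1} / \xi^2$.

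Under Assumptions \ref{Assumption1}--\ref{Assumption2}, the sequence $\{ \psi(\varepsilon_t) \}$ forms a martingale difference array with respect to $\mathcal{F}_{t-1}$, since $\mathbb{E}[\psi(\varepsilon_t)] = 0$ and $\boldsymbol{X}_t$ is $\mathcal{F}_{t-1}$-measurable. I would verify the two conditions of the martingale CLT (cf.\ Assumption~\ref{Assumption0}). For the conditional variance, the predictable quadratic variation is
\begin{equation*}
\langle S \rangle_n = \mathbb{E}[\psi^2(\varepsilon_1)] \sum_{t=1}^n \boldsymbol{X}_t \boldsymbol{X}_t^{\top} = \mathbb{E}[\psi^2(\varepsilon_1)] \, \boldsymbol{D}_n^{-1} \begin{pmatrix} n & \sum_{t=1}^n y_{t-1} \\ \sum_{t=1}^n y_{t-1} & \sum_{t=1}^n y_{t-1}^2 \end{pmatrix} \boldsymbol{D}_n^{-1}.
\end{equation*}
The upper-left entry is identically~$1$, while the decisive step is the near-stationary moment convergence $\frac{1}{n k_n} \sum_{t=1}^n y_{t-1}^2 \overset{p}{\to} \sigma^2/(-2c)$, which is the moderate deviation analogue of \cite{Phillips2007limit}; I would establish this by writing $y_t = \sum_{j=0}^{t-1} \rho_{n,c}^{j}\, \varepsilon_{t-j}$, computing the expectation via the geometric sum $\sum_{j=0}^{t-1} \rho_{n,c}^{2j} \sim k_n/(-2c)$ uniformly in $t$, and bounding the variance of $\sum y_{t-1}^2$ by $O(n k_n^2)$ so that a second-moment argument gives the convergence in probability. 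The off-diagonal term $\frac{1}{n\sqrt{k_n}} \sum y_{t-1}$ is $o_p(1)$ by an analogous calculation (its expectation vanishes and its variance is $O(k_n^2/n)$, which goes to zero since $\gamma<1$).

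For the Lindeberg/negligibility condition on $\sum_t \psi(\varepsilon_t) \boldsymbol{X}_t$, in case~\textit{(i)} the function $\psi$ is bounded so the condition is trivial once the conditional variance is controlled; in case~\textit{(ii)} the $(2+m)$-moment assumption on $\varepsilon_1$ combined with the uniform bound on $y_{t-1}^2/(nk_n)$ via Markov's inequality gives the required uniform asymptotic negligibility. With both conditions verified, the multivariate martingale CLT delivers the claimed joint Gaussian limit. Finally, the two corollaries follow by direct computation: for the quantile case with $\psi(u) = \uptau - \mathbf{1}\{u \leq 0\}$, one has $\mathbb{E}[\psi^2(\varepsilon_1)] = \uptau(1-\uptau)$ and $\xi = f_{\varepsilon}(F_{\varepsilon}^{-1}(\uptau))$ by differentiating $\mathbb{E}[\psi(u_1(\uptau) - \theta)] = \uptau - F_{\varepsilon}(F_{\varepsilon}^{-1}(\uptau) + \theta)$ at $\theta = 0$; for the least-squares case with $\psi(u) = u$, one has $\mathbb{E}[\psi^2(\varepsilon_1)] = \sigma^2$ and $\xi = 1$. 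Reading off the lower-right entry of the variance matrix yields the two stated marginal normal limits.

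The main obstacle I anticipate is not the CLT itself but the verification of $\frac{1}{n k_n} \sum y_{t-1}^2 \overset{p}{\to} \sigma^2/(-2c)$ uniformly over the initial condition and the handling of the cross-covariance term, because the process is neither strictly stationary nor a martingale; careful partial-summation with the geometric-series identity and a truncation argument for the moderate deviation rate $k_n = n^{\gamma}$ with $\gamma \in (0,1)$ is the essential technical input.
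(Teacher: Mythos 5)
Your proposal follows essentially the same route as the paper: feed the Bahadur representation of Theorem \ref{theorem1} into a martingale CLT for $\sum_{t=1}^n \psi(\varepsilon_t)\boldsymbol{X}_t$, with the predictable variance controlled by $\frac{1}{nk_n}\sum_{t=1}^n y_{t-1}^2 \overset{p}{\to} \sigma^2/(-2c)$ and $\frac{1}{n\sqrt{k_n}}\sum_{t=1}^n y_{t-1} = o_p(1)$ — exactly the two sample-moment inputs the paper uses (the first cited from Theorem 3.2(a) of \cite{Phillips2007limit}, the second obtained by the telescoping argument in Lemma \ref{lemma2}). Two points to fix. First, your bound for the cross term is off: $\mathsf{Var}\bigl((n\sqrt{k_n})^{-1}\sum_{t}y_{t-1}\bigr)=O(k_n/n)$, not $O(k_n^2/n)$; the bound you state would vanish only for $\gamma<1/2$, so your justification ``since $\gamma<1$'' is inconsistent with it, although the conclusion survives once the normalization $\sqrt{k_n}$ is accounted for. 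Second, the Chebyshev/second-moment argument you propose for $\frac{1}{nk_n}\sum_{t}y_{t-1}^2\overset{p}{\to}\sigma^2/(-2c)$ implicitly requires $\mathbb{E}[\varepsilon_1^4]<\infty$, which Assumption \ref{Assumption1} does not provide (only $2+m$ moments); either invoke the Phillips--Magdalinos result as the paper does, or actually carry out the truncation argument you allude to. Finally, your limiting covariance $\mathbb{E}[\psi^2(\varepsilon_1)]\boldsymbol{B}^{-1}/\xi^2$ is the one consistent with parts (i)--(ii); the single $\xi$ in the paper's displayed general formula appears to be a typo for $\xi^2$, so your computation is the correct reading.
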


\medskip

\begin{remark}
The limit results given by Theorem \ref{theorem1} summarize the joint asymptotic behaviour of the model intercept and slope from moderate deviations from the unit boundary on the stationary region. In order to prove the above asymptotic results, we employ standard arguments introduced by \cite{pollard1991asymptotics} for optimization of convex function relevant to quantile processes. Specifically, by the convexity lemma, if the finite-dimensional distributions of $\Omega_n (v)$ converge weakly to those of $\Omega (v)$, and $\Omega (v)$ has a unique minimum, then the convexity of $\Omega_n (v)$ implies that $\hat{v}$ converges in distribution to the minimizer of $\Omega (v)$. In other words, $\hat{\rho}_n( \uptau )$ is shown to be weakly consistent, thus to prove that the estimator is asymptotically normally distributed we restrict the spaces $\mathcal{B}$ to shrinking neighbourhoods around the true value of the parameter $\rho ( \uptau )$ in order to avoid possible local minima. To do this, we can define the restricted space $\mathcal{B}_{a} = \big\{ \rho_n \in \mathcal{B} \norm{ \beta - \beta ( \uptau ) } \leq a_n \big\}$ where $\left\{ a_n \right\}$ is some positive sequence.
\end{remark}

\newpage

\subsubsection{Limit theory for near-explosive case}

The near-explosive case corresponds to the nuisance parameters of persistence $c >0$ and the exponent rate $\gamma \in (0,1)$ or $\gamma = 1$. In particular for the linear autoregressive process $y_t = \theta y_{t-1} + \varepsilon_t$ and an explosive root such that $| \theta | > 1$, a Cauchy limit theory can be derived for the OLS estimator $\hat{\theta}$ as 
\begin{align}
\frac{ \theta^n  }{ \theta^2 - 1 } \left( \hat{\theta}_n - \theta \right) \Rightarrow \mathcal{C}, \ \ \ \text{as} \ n \to \infty.
\end{align} 

More precisely, the seminal study of \cite{anderson1959asymptotic} provides examples demonstrating that central limit theory does not apply and the asymptotic distribution of the least squares estimator depends by the distributional assumptions imposed on the innovations which makes inference procedures specifically for purely explosive autoregressions more challenging (see, \cite{magdalinos2012mildly}). Furthermore, in this direction, \cite{Phillips2007limit} consider autoregressive processes under the moderate deviations framework by employing the local-to-unity parametrization for the autoregression coefficient such that $\theta_n = \left( 1 + \frac{c}{n^{\gamma} } \right), \gamma \in (0,1)$. Therefore, in this case under the assumption of $\textit{i.i.d}$ innovations with finite second moments the following least squares regression theory was proved
\begin{align}
\frac{1}{2c} n^{\gamma} \theta_n^n \left( \hat{\theta}_n - \theta \right) \Rightarrow \mathcal{C}, \ \ \ \text{as} \ n \to \infty.
\end{align} 
On the other hand, for the pure explosive root case such that $| \theta | > 1$ then, the limit distribution of $\big( \hat{\theta} - \theta \big)$ is standard Cauchy if it is normalized with $\theta^n / ( 1 - \theta^2 )$. However, the limit distribution depends on the distribution of the noise, as was pointed out by \cite{anderson1959asymptotic}, and hence no central limit theorem applies on the explosive side.  Moreover, from empirical data financial applications it can be observed that the parameter $\theta$ tends to 1 with increasing sample size. 

To accommodate this observation, $\theta = \theta_n$ is allowed to depend on $n$, the number of observations, such that $\theta_n \to 1$ as $n \to \infty$. The process is then referred to as near-integrated. Depending on whether $\theta_n < 1$ or $\theta_n > 1$, it is called near-stationary or mildly explosive. Furthermore, \cite{Phillips2007limit} investigated the general parameter case in the near-integrated setting assuming that $\theta_n \to 1$ with a rate slower than $1 / n$, the so-called moderate deviations from unity. All aforementioned approaches operate under the assumption of a finite variance along with independent, identically distributed or weakly dependent errors. However, it can be proved that the serial coefficient $\hat{\theta}_n - \theta_n$ has, under a suitable normalization, a limit that consists of a fraction of two independent strictly stable random variables. 

Therefore, specifically for the quantile autoregressive time series model we consider in our study we employ the following normalization matrices. 
\begin{align}
\boldsymbol{D}_n = 
\begin{pmatrix}
\sqrt{n} & 0 
\\
0 &  \rho_{n, c}^n k_n  
\end{pmatrix},
\ \ \
\boldsymbol{B} = 
\begin{pmatrix}
1 & 0 
\\
0 &  \mathcal{Z}_3^2 / (2c)  
\end{pmatrix}
\end{align}
where $\mathcal{Z}_3$ is some normal random variable to be defined below. 

\medskip

\begin{theorem}
Under the Assumptions \ref{Assumption0}-\ref{Assumption3} it holds that, 
\begin{align}
\left( \hat{\mu}, \hat{\rho}_{n,c} \right)^{\top} = \left( \mu, \rho_{n,c} \right)^{\top} + \frac{ \left( \sum_{t=1}^n \boldsymbol{X}_t  \boldsymbol{X}_t^{\top} \boldsymbol{D}_n \right)^{-1} }{ \xi  } \sum_{t=1}^n \psi \big( \varepsilon_t  \big) \boldsymbol{X}_t^{\top} + o_p(1).
\end{align}
\end{theorem}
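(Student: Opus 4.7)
The plan is to follow the same convex-optimization / Bahadur-expansion strategy used in the proof of Theorem~\ref{theorem1}, but re-scaled with the near-explosive normalization $\boldsymbol{D}_n = \mathsf{diag}(\sqrt{n}, \rho_{n,c}^n k_n)$. I would introduce the reparametrization $\boldsymbol{\delta}(\uptau) = \boldsymbol{D}_n\bigl(\widehat{\boldsymbol{\vartheta}}_n(\uptau) - \boldsymbol{\vartheta}(\uptau)\bigr)$ and work with the recentred criterion
\[
G_n(\boldsymbol{\delta}) := \sum_{t=1}^n \bigl[\varrho_{\uptau}(\varepsilon_t - \boldsymbol{\delta}^\top \boldsymbol{X}_t) - \varrho_{\uptau}(\varepsilon_t)\bigr],
\]
which is convex in $\boldsymbol{\delta}$. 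The Taylor-type expansion \eqref{B.approx} decomposes $G_n$ into a linear-in-score term, a leading quadratic, and a stochastic remainder: $G_n(\boldsymbol{\delta}) = -\boldsymbol{\delta}^\top S_n + R_n(\boldsymbol{\delta}) + r_n(\boldsymbol{\delta})$ with $S_n = \sum_t \psi(\varepsilon_t)\boldsymbol{X}_t$. The argument then reduces to three steps: (i) identify $R_n(\boldsymbol{\delta}) = \tfrac{1}{2}\xi\,\boldsymbol{\delta}^\top \bigl(\sum_t \boldsymbol{X}_t \boldsymbol{X}_t^\top\bigr)\boldsymbol{\delta} + o_p(1)$ using the Lipschitz/differentiability content of Assumption~\ref{Assumption2}; (ii) show $r_n(\boldsymbol{\delta}) = o_p(1)$ uniformly over compacts in $\boldsymbol{\delta}$; (iii) invoke the convexity lemma of \cite{pollard1991asymptotics} so that the unique minimizer of $G_n$ is asymptotically equivalent to $\widehat{\boldsymbol{\delta}}_n = \xi^{-1}\bigl(\sum_t \boldsymbol{X}_t \boldsymbol{X}_t^\top\bigr)^{-1} S_n$, which after pre-multiplying by $\boldsymbol{D}_n^{-1}$ rearranges to $\xi^{-1}\bigl(\sum_t \boldsymbol{X}_t \boldsymbol{X}_t^\top \boldsymbol{D}_n\bigr)^{-1} S_n$ and delivers the stated expansion.

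Next, I would verify that the normalized Gram matrix $\sum_t \boldsymbol{X}_t \boldsymbol{X}_t^\top = \boldsymbol{D}_n^{-1}\bigl(\sum_t (1,y_{t-1})^\top(1,y_{t-1})\bigr)\boldsymbol{D}_n^{-1}$ is non-singular in the limit. Using the explosive decomposition $y_{t-1} = \rho_{n,c}^{t-1} y_0 + \sum_{s=1}^{t-1}\rho_{n,c}^{t-1-s}\varepsilon_s$, the dominant block is the bottom-right entry, for which $\rho_{n,c}^{-2n} k_n^{-2}\sum_t y_{t-1}^2 \to \mathcal{Z}_3^{\,2}/(2c)$ in distribution, where $\mathcal{Z}_3$ is the Gaussian random variable driving the terminal contribution of the telescoping innovation sum (in the sense of \cite{Phillips2007limit} and \cite{aue2007limit}). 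The off-diagonal and top-left entries are of strictly smaller order after the $\boldsymbol{D}_n^{-1}$ sandwich, so $\sum_t \boldsymbol{X}_t \boldsymbol{X}_t^\top$ converges in probability to the diagonal matrix $\boldsymbol{B}$ from the statement, which is almost-surely invertible since $\mathcal{Z}_3 \neq 0$ a.s. Identification of the scalar $\xi$ as $f_\varepsilon(F_\varepsilon^{-1}(\uptau))$ in the quantile case and as $1$ in the M-estimation case follows exactly as in Theorem~\ref{theorem1}.

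The main obstacle is step (ii), i.e. controlling the Bahadur remainder $r_n(\boldsymbol{\delta})$ in the explosive regime. Because the rescaled regressors $\boldsymbol{X}_t$ are emphatically not uniformly $O_p(1)$ in $t$ — after normalization by $\rho_{n,c}^n k_n$ only the terminal $O(k_n)$ block of observations contributes non-negligibly — the standard martingale maximal inequalities used in the near-stationary Bahadur proof do not apply verbatim. I would handle this by splitting the index set at $t^\star = n - h_n$ with $h_n \to \infty$ chosen so that $\rho_{n,c}^{-h_n} \to 0$, treating the bulk $t \le t^\star$ by a chaining argument for the empirical process of $\psi$-scores against the conditional-expectation smoothness of $\mathbb{E}[\psi(\varepsilon - \theta)]$ provided by Assumption~\ref{Assumption2}, and treating the explosive tail $t > t^\star$ by a direct conditional-second-moment bound that exploits the martingale-difference property of $\psi(\varepsilon_t)$ relative to $\mathcal{F}_{t-1}$. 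Once this uniform bound is in hand, the three steps combine to give the stated Bahadur representation; the subsequent Cauchy-type limit distribution for $\hat{\rho}_{n,c}$ then falls out as a corollary, but is not needed for the theorem itself.
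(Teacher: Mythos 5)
Your strategy coincides with the paper's own proof of this theorem: the expansion \eqref{B.approx} (Knight's identity), identification of the quadratic term as $\tfrac{\xi}{2}\boldsymbol{\delta}^{\top}\bigl(\sum_{t=1}^n\boldsymbol{X}_t\boldsymbol{X}_t^{\top}\bigr)\boldsymbol{\delta}$, the explosive solution of the difference equation giving $\rho_{n,c}^{-2n}k_n^{-2}\sum_{t=1}^n y_{t-1}^2 \Rightarrow \mathcal{Z}_3^{2}/(2c)$ together with negligibility of the off-diagonal term, uniform control of the remainder on compact $\boldsymbol{\delta}$-sets, and Pollard's convexity lemma to pass from the minimizer of the convex criterion to the stated representation. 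The only point to adjust is the mode of convergence of the Gram matrix: in the near-explosive regime $\sum_{t=1}^n\boldsymbol{X}_t\boldsymbol{X}_t^{\top}$ converges in distribution to the random matrix $\mathsf{diag}\bigl(1,\mathcal{Z}_3^{2}/(2c)\bigr)$, not in probability, which is precisely why the paper runs the uniformity step through tightness on the event $\bigl\{\bigl\lVert\sum_{t=1}^n\boldsymbol{X}_t\boldsymbol{X}_t^{\top}\bigr\rVert\le M\bigr\}$ and why the theorem retains the sample matrix $\bigl(\sum_{t=1}^n\boldsymbol{X}_t\boldsymbol{X}_t^{\top}\boldsymbol{D}_n\bigr)^{-1}$ rather than a deterministic normalizer.
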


\bigskip

\begin{theorem}
\label{theorem5}
Under Assumptions \ref{Assumption0}-\ref{Assumption3} it holds that, 
\begin{align}
\boldsymbol{D}_n \big( \left( \hat{\mu}, \hat{\rho}_{n,c} \right) - \left( \mu, \rho_{n,c} \right) \big)^{\top} \overset{ d }{ \to } \frac{1}{ \xi  } \boldsymbol{B}_n^{-1} \big( \mathcal{Z}_1, \mathcal{Z}_2 \mathcal{Z}_3     \big)^{\top}.
\end{align}
In particular, it follows that 
\begin{itemize}

\item[\textit{(i)}] If $\psi( \mathsf{u} ) = \big( \uptau - \mathbf{1} \left\{ \mathsf{u} \leq 0 \right\} \big)$, and the pdf $f(\mathsf{u})$ of $\varepsilon_1$ exists and satisfies $f_{\varepsilon} \big( F_{\varepsilon}^{-1} (\uptau) \big) > 0$, then  
\begin{align}
\frac{ \hat{\rho}_{n,c} ( \uptau ) - \rho_{c} ( \uptau ) }{  \rho_n k_n  } \overset{ d }{ \to } \frac{2c}{ f_{\varepsilon} \left( F_{\varepsilon}^{-1} ( \uptau ) \right) } \frac{ \mathcal{Z}_2 }{ \mathcal{Z}_3 }.    
\end{align}

\item[\textit{(ii)}] If $\psi( \mathsf{u} ) = \mathsf{u}$, then  
\begin{align}
\frac{ \hat{\rho}_{n,c} ( \uptau ) - \rho_{c} ( \uptau ) }{  2 c \rho_n^n k_n  } \overset{ d }{ \to } \frac{ \mathcal{Z}^{*}_2 }{ \mathcal{Z}^{*}_3 }.  
\end{align}

\end{itemize}
\end{theorem}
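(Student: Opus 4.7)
\textbf{Proof proposal for Theorem \ref{theorem5}.} The plan is to start from the Bahadur-type representation established in the preceding theorem, premultiply by the normalization matrix $\boldsymbol{D}_n = \mathsf{diag}(\sqrt{n},\rho_{n,c}^n k_n)$, and separately analyze the asymptotic behaviour of (a) the normalized Hessian $\boldsymbol{D}_n^{-1}\bigl(\sum_{t=1}^n \boldsymbol{X}_t \boldsymbol{X}_t^{\top}\bigr)\boldsymbol{D}_n^{-1}$ and (b) the normalized score $\boldsymbol{D}_n^{-1}\sum_{t=1}^n \psi(\varepsilon_t)\boldsymbol{X}_t$, before combining them through a joint convergence and continuous-mapping argument. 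Because the initial condition is $y_{0n}=0$ by Assumption \ref{Assumption0} and the intercept contributes only at a lower $\sqrt{n}$ rate, the cross blocks can be shown to be $o_p(1)$ after the $\boldsymbol{D}_n$ normalization, so the limit is block-diagonal and justifies the form of $\boldsymbol{B}$.

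For the Hessian block corresponding to the slope, the mildly explosive recursion $y_t = \rho_{n,c}\,y_{t-1}+\mu+\varepsilon_t$ with $c>0$ admits the closed form $y_t = \sum_{j=0}^{t-1}\rho_{n,c}^{\,j}\varepsilon_{t-j}+O(k_n)$, so that $\rho_{n,c}^{-t}y_t$ is a convergent series of independent innovations. Standard arguments in the spirit of \cite{Phillips2007limit} and \cite{magdalinos2009limit} yield $\rho_{n,c}^{-n}\sqrt{2c/k_n}\,y_n \overset{d}{\to}\mathcal{Z}_3$, with $\mathcal{Z}_3$ a centered Gaussian whose variance follows from Assumption \ref{Assumption1}, and consequently $(\rho_{n,c}^{2n}k_n)^{-1}\sum_{t=1}^n y_{t-1}^2\overset{d}{\to}\mathcal{Z}_3^2/(2c)$. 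This delivers the $(2,2)$-entry of $\boldsymbol{B}$. The intercept block $n^{-1}\sum_{t=1}^n 1=1$ is trivial, while the off-diagonal term $(n^{-1/2}\rho_{n,c}^{-n}k_n^{-1})\sum_{t=1}^n y_{t-1}=o_p(1)$ since the explosive divergence dominates the $\sqrt{n}$ scaling.

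For the score, the intercept component $n^{-1/2}\sum_{t=1}^n \psi(\varepsilon_t)$ satisfies a martingale CLT under Assumption \ref{Assumption0} and \ref{Assumption2} and converges to a centered Gaussian $\mathcal{Z}_1$ with variance $\mathbb{E}[\psi^2(\varepsilon_1)]$. For the slope component, $(\rho_{n,c}^n k_n)^{-1}\sum_{t=1}^n y_{t-1}\psi(\varepsilon_t)$ is the critical term. Writing $y_{t-1}=\rho_{n,c}^{t-1}\,X_n + R_{n,t}$ where $X_n := \sum_{j=1}^\infty \rho_{n,c}^{-j}\varepsilon_j$ (the backward part) and $R_{n,t}$ collects the forward residual, one splits the sum into two pieces that become asymptotically orthogonal in the sense of Phillips--Magdalinos: the forward piece yields a martingale whose self-normalized CLT gives a centered Gaussian $\mathcal{Z}_2$ with variance $\mathbb{E}[\psi^2(\varepsilon_1)]/(2c)$, while the backward piece contributes the scalar $\mathcal{Z}_3$. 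Thus the product factorizes as $\mathcal{Z}_2\mathcal{Z}_3$ with $\mathcal{Z}_2 \perp \mathcal{Z}_3$, because the distant past driving $X_n$ and the near past driving the self-normalized martingale live on asymptotically disjoint time blocks.

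The proof concludes by inverting the block-diagonal $\boldsymbol{B}$ via the continuous mapping theorem, which yields the stated $\mathcal{Z}_2/\mathcal{Z}_3$ structure in the slope coordinate. For case (i) one substitutes $\xi = f_\varepsilon\bigl(F_\varepsilon^{-1}(\uptau)\bigr)$ and $\mathbb{E}[\psi^2(\varepsilon_1)] = \uptau(1-\uptau)$; for case (ii), $\xi = 1$ and $\mathbb{E}[\psi^2(\varepsilon_1)] = \sigma^2$, so the $\sigma$-factor is absorbed into the definitions $\mathcal{Z}_2^{\ast},\mathcal{Z}_3^{\ast}$. The main obstacle is the asymptotic independence claim $\mathcal{Z}_2\perp\mathcal{Z}_3$: the naive martingale CLT applied to $\sum y_{t-1}\psi(\varepsilon_t)$ has a random quadratic variation proportional to $\mathcal{Z}_3^2$, and it is precisely the backward/forward block decomposition---adapted here from the OLS setting of \cite{Phillips2007limit} to the non-smooth quantile score---that must be executed carefully, uniformly over $\uptau$ in a compact subset of $(0,1)$, to ensure that $\mathcal{Z}_2$ is driven by innovations independent of those accumulated in $X_n$.
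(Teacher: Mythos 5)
Your proposal follows essentially the same route as the paper's appendix treatment of the near-explosive case: the Bahadur/convexity representation from the preceding theorem, the explicit solution of the autoregressive recursion, and the Phillips--Magdalinos-type factorization of $(\rho_{n,c}^{n}k_n)^{-1}\sum_{t=1}^{n}y_{t-1}\psi_{\uptau}(\varepsilon_t)$ into a product of two asymptotically independent Gaussian limits ($\mathcal{Z}_2$ driven by innovations weighted near the end of the sample, $\mathcal{Z}_3$ by those near the start), followed by joint convergence of score and Hessian blocks and the continuous mapping theorem, with the same substitutions of $\xi$ and $\mathbb{E}[\psi^2(\varepsilon_1)]$ in cases (i) and (ii). The only blemishes are normalization slips: the sample second moment must be scaled by $\rho_{n,c}^{2n}k_n^{2}$ (the square of the slope entry of $\boldsymbol{D}_n$), not $\rho_{n,c}^{2n}k_n$, and your scaling $\rho_{n,c}^{-n}\sqrt{2c/k_n}\,y_n$ yields a limit with variance $\sigma^{2}$ rather than $\sigma^{2}/(2c)$, so the constants in your $\mathcal{Z}_3$ should be aligned with the paper's definition $\mathcal{Z}_3=\lim k_n^{-1/2}\sum_{t}\rho_{n,c}^{-t}\varepsilon_t$.
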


\medskip

Therefore, Theorem \ref{theorem5} verifies that we indeed obtain the equivalent asymptotic theory results in comparison to the linear autoregressive time series model. Specifically, the autoregression  coefficient of the nonstationary quantile autoregressive time series model converge into a Cauchy random variate in the case of mildly explosive processes (see, also \cite{aue2007limit}, \cite{Phillips2007limit}, \cite{magdalinos2012mildly} and \cite{lee2018limit}).  

\medskip

\begin{remark}
As we see from Theorem \ref{theorem5}, part 2, the limiting distribution of the normalized and centered estimator is Cauchy, similar to Theorem 4.3 of \cite{Phillips2007limit}. As a matter of fact when we replace $\rho_n$ by $\rho_{n,c} = \left( 1 + \frac{c}{k_n} \right)$ we obtain that $\left( \rho^2 - 1 \right) = \frac{ 2c }{ k_n } \big[ 1 + o(1) \big]$. Hence, we see that the normalizations in the Theorem above and the expression derived by \cite{white1958limiting} are asymptotically equivalent as $n \to \infty$. Furthermore, the asymptotic theory for the case of moderate deviations from the unity boundary is not restricted to Gaussian processes. More specifically, the Cauchy limit result applied for $\rho_{n,c} = \left( 1 + \frac{c}{k_n} \right)$ and innovations $\varepsilon_t$ with finite second moment (e.g., innovations with stable law of attraction). On the other hand, the main difference between the mildly explosive processes given by Theorem \ref{theorem5} above and explosive autoregressions with $| \rho | > 1$, occurs due to the different convergence rates of these two cases. In particular, in the case of mildly explosive processes we define the convergence rate such that $k_n = n^{\gamma}$ for some $\gamma \in (0,1)$ while for the case of moderately explosive processes we define with $k_n = n^{\gamma}$ for some $\gamma > 1$. 
\end{remark}

\newpage

\subsection{Testing Linear Hypotheses}

Consider the autoregressive model 
\begin{align}
y_t = \rho y_{t-1} + \varepsilon_t, \ \ \ \ t = 1,..., n, 
\end{align}
such that $\rho \in [-1,1]$, is within the stationary region. Then, the usual testing hypothesis of interest is such that, $\mathbb{H}_0: \rho = \rho_0$. In particular, \cite{dickey1979distribution} showed that the finite sample distribution for $\rho$ in the neighbourhood of unity is very close to the asymptotic unit root case, under the assumption that the error term $\epsilon_t$ is normally distributed with finite variance.  

Statistical inference for M-estimators of possibly nonstationary time series models (local-to-unit root) is a nonstandard problem due to the presence of nuisance parameters in the limiting distributions of test statistics. However, indeed one of the  advantages of M-estimators is that are considered to be robust to outliers since they have a bounded influence function (see, \cite{abadir2000quantiles}). Considering now specifically the case of unit root such that $| \rho | = 1$, the asymptotic distribution of the t-statistic denoted by $\mathcal{T}_n ( \hat{\rho} )$ can be represented by functionals of Wiener processes (see, \cite{dickey1979distribution} and \cite{Buchmann2007asymptotic}). Thus, the asymptotic distribution of the $t-$statistic based on $M-$estimators, denoted by $\mathcal{T}_{\psi ( \uptau ) } ( \hat{\rho} )$ depends on the nuisance parameter $\delta$, that is, the correlation between the innovations $\left\{ \epsilon_t \right\}$ and the pseudo-score function $\psi( \epsilon_t )$ that is employed to define the $M-$estimator. On the other hand, $t-$statistics based on $M-$estimators lead to a reduction in asymptotic MSE relative to LSE for local alternatives to the unit-root null hypothesis. 

The t-statistic for the null hypothesis $\mathbb{H}_0: \rho = \rho_0$ is given by 
\begin{align}
t_{\psi} = \frac{ \displaystyle  \bigg( \hat{\rho}_n (\uptau) - \rho_n (\uptau) \bigg) }{ \displaystyle \left\{  \left( n^{-1} \sum_{t=1}^n \psi_{\uptau} \big( y_t - \hat{\rho}_n ( \uptau ) y_{t-1}   \big)^2  \right) \bigg/  \left( n^{-1} \sum_{t=1}^n \psi^{\prime}_{\uptau} \big( y_t - \hat{\rho}_n ( \uptau ) y_{t-1}   \big)  \right)^2 \right\}^{1/2} }.
\end{align}
where $\psi^{\prime}_{\uptau} (.)$ denotes the first derivative of the function $\psi_{\uptau} (.)$.
\color{black}

\newpage

\section{Asymptotic theory for Quantile Predictive Regression} 

In this section we unify the theory for the quantile predictive regression model while presenting the corresponding asymptotic theory for the quantile autoregressive process we introduced in the previous section with our novel pruned-based endogenous instrumentation approach.

\subsection{Model specification and assumptions}

Consider the following predictive regression model 
\begin{align}
y_t &= \alpha + \beta_n x_{t-1} + \varepsilon_t    
\\
x_t &= \mu + \phi_n x_{t-1} + u_t
\end{align}

Various studies in the literature consider inference methodologies for nearly-integrated processes. Specifically for the conditional quantile functional form \cite{lee2016predictive} propose a framework for inference in quantile predictive regression models using the IVX instrumentation of \cite{PM2009econometric}. However, robust inference in the near-explosive region for both the quantile autoregression and the predictive regression has not been examined previously in the literature.

In particular, OLS-based inference on $\beta$ for nearly-explosive processes and specifically purely-explosive suffers from the same problem as OLS-based inference on $\rho_n$, with standard inference applying only under $\textit{i.i.d}$ Gaussian innovations $\varepsilon_t$. Therefore, the inference procedure proposed in this paper for the quantile-dependent coefficient of $\beta$, in the quantile predictive regression model can accommodate regressons with time series properties along the entire spectrum of autoregressive procceses. Although our setting considers a model with only one regressor, therefore further theory is needed in the case of multiple regressors that correspond to either the same or different persistence class. Specifically, one can establish its asymptotic validity uniformly over the autoregressive regime and regardless of the distributional assumptions of the innovations $\varepsilon_t$ and $u_t$. Robust and uniform inference in quantile predictive regression models has been studied by \cite{maynard2023inference}, \cite{lee2016predictive}, \cite{fan2019predictive}, \cite{cai2022new} and more recently \cite{liu2023unified}.

The main idea with the proposed endogenously generated instrumentation is to decompose the the two mutually disjoint parameter space regions and then develop an asymptotic theory for both these regions. In other words, the main intuition behind this is that although the dependence structure of innovation sequences is not affected by considering two different regions of the parameter space, it does however have an effect on the asymptotic behaviour of the corresponding estimators which bridge the gap in both the explosive as well as the near-nonstationary cases. The proposed instrumentation method is based on the framework of \cite{Magdalinos2022uniform} and combines the nearly stable with the nearly explosive processes (near stationary/near explosive), thereby implying an adaptive and uniform inference technique for quantile autoregressions and quantile predictive regression models regardless of the peristence properties of regressors or whether the autoregressive and predictive equations include a model intercept.

\newpage

\subsection{Pruned-based endogenous instrumentation approach}

In this paper we propose a novel pruned-based endogenous instrumentation approach based on the original instrumentation methodology proposed in the paper of \cite{PM2009econometric} (see, also \cite{kostakis2015Robust}). The procedure we propose is similar to the IV instrument proposed in the recent paper of \textcolor{blue}{Magdalinos and Petrova (2022)}, although our motivation is to unify inference in autoregressions when considering a conditional quantile functional form, that is, is applied specifically to quantile autoregressive processes. We call our novel estimator IVX-P which can be employed in either linear or quantile conditional functional forms with univariate or multivariate regressors. In the literature such estimators mainly were concerned with Stein-type estimators, however our study is the first to consider an estimator which is obtained using properties of the underline stochastic processes both within the admissible parameter space as well as outside the usual parameter space.

Thus, the proposed pruned-based instrumental estimator implies a data driven instrument selection such that $\textcolor{blue}{ \mathcal{B}_n } = \mathbf{1} \left\{  n \left( \hat{\vartheta}_n^{ols} - 1 \right) \leq 0 \right\} $. More specifically, the chosen $\rho_{nz}$ is such that 
\begin{align}
\textcolor{violet}{ \vartheta_{nz} } 
&=
\varphi_{1n} \mathbf{1} \big\{ \textcolor{blue}{ \mathcal{B}_n } \big\} + \varphi_{2n} \mathbf{1} \big\{ \textcolor{red}{ \mathcal{B}_n^{c} } \big\}
\\
\varphi_{1n} &= \left( 1 - \frac{1}{ \kappa_n } \right), \ \ \ \varphi_{2n} = \left( 1 + \frac{1}{ \kappa_n } \right)
\end{align} 
where $\kappa_{1n}, \kappa_{2n} \to \infty$ with $\kappa_{1n} / n \to 0$ and $\kappa_{2n} / n \to 0$. Then, the combined instrument process is  
\begin{align}
z_t = \vartheta_{nz} z_{t-1} + \tilde{u}_t, \ \ \text{where} \ \ \tilde{u}_t = \Delta x_t \mathbf{1} \big\{ \textcolor{blue}{ \mathcal{B}_n } \big\} + \hat{u}_t \mathbf{1} \big\{ \textcolor{red}{ \mathcal{B}_n^{c} } \big\}
\end{align}
In other words, we have an orthogonal decomposition such that $z_t = z_{1t} \mathbf{1} \big\{ \textcolor{blue}{ \mathcal{B}_n } \big\} + z_{2t} \mathbf{1} \big\{ \textcolor{red}{ \mathcal{B}_n^{c} } \big\}$ where
\begin{align}
z_{1t} &= \textcolor{blue}{ \varphi_{1n} } z_{1t-1} + \textcolor{blue}{ \Delta x_t }
\\
z_{2t} &= \textcolor{red}{ \varphi_{2n} } z_{2t-1} + \textcolor{red}{ \hat{u}_t }
\end{align} 
Therefore, for all cases it holds that 
\begin{align}
n \left( \hat{\vartheta}_n^{ols} - 1 \right) 
&= 
\textcolor{blue}{ c }  \textcolor{red}{ \frac{n}{k_n} }  \left( 1 + \varepsilon_n \right), \ \ \varepsilon_n \overset{ p }{ \to } 0 
\nonumber
\\
&\overset{ p }{ \to } \mathsf{sign} ( \textcolor{blue}{ c }  ) \times \infty 
\end{align}
where $\tilde{z}_t^{IVX-P}$ represents the IVX pruned-based estimator. 
A key result that we are aiming to illustrate is the Asymptotic Mixed Gaussianity (AMG) property of the IVX-P estimator.

\newpage

\subsubsection{Instrument Construction} 

Successful instrumentation based on a combined near-stationary/near-explosive process requires statistical information separating the near-stationary autoregressive class from the near-explosive class asymptotically. In other words, the advantage of the inference procedure proposed in this paper over existing procedures is that it is valid for any $\rho_n \to \rho \in ( 0, + \infty )$, which includes all three parameter regions of interest of empirical interest.   
\color{black} Practical implementation of our instrumentation procedure requires a choice for $\varphi_{1n}$ and $\varphi_{2n}$. Specifically, choosing $\left( \varphi_{1n} \right)_{ n \in \mathbb{N} }$ and $\left( \varphi_{2n} \right)_{ n \in \mathbb{N} }$ with $n \left( \varphi_{1n} - 1 \right) \to \infty$ as $n \to \infty$ and $\varphi_{2n} \to 1$ with  $n \left( \varphi_{1n} - 1 \right) \to + \infty$.
\begin{align}
\varphi_{1n} = \left(  1 - \frac{1}{ n^{ \gamma_1} } \right) \ \ \text{and} \ \ \varphi_{2n} = \left(  1 + \frac{1}{ n^{ \gamma_2 } } \right)
\end{align} 
reduces to the problem of selecting the values for $\gamma_1$ and $\gamma_2$. 

We construct the instrumentation procedure based on a min-max optimality guarantee. In other words, we have that $\tilde{z}_{1t}$ can be asymptotically approximated by a near-stationary process such that  
\begin{align}
\tilde{z}_{1t} = \varphi_{1n} z_{1t-1} + u_t = \sum_{j=1}^t \varphi_{1n}^{t-j} u_j.   \end{align}
In particular, when $\rho_n$ is closer to 1 that $\varphi_{1n}$ then $\tilde{z}_{1t}$ reduces asymptotically to the original process $x_t$. Furthermore, the instrument $\tilde{z}_{2t}$ is always approximated by a mildly explosive process such that 
\begin{align}
z_{2t} = \varphi_{2n} z_{2t-1} + u_t = \sum_{j=1}^t \varphi_{2n}^{t-j} u_j    
\end{align}
Therefore, from the above decomposition we see that sample moments involving the near-stationary instrument $\tilde{z}_{1t}$ will contribute asymptotically when the original process $x_t$ belongs to the persistence classes \textbf{P.1}-\textbf{P.2} (i.e., near stationary and near-nonstationary), whearas sample moments involving the mildly-explosive instrument $\tilde{z}_{2t}$ will make an asymptotic contribution for autoregressions that belong to persistence classes  \textbf{P.2}-\textbf{P.3}. 

Moreover, under Assumption 4 we denote the autocovariance function and long-run variance of $\left( u_t \right)$ by $\gamma_u( . )$ and $\omega^2 = \sum_{ k = - \infty }^{ + \infty } \gamma_u ( k ) = C(1)^2 \sigma^2$ respectively and let 
\begin{align}
\gamma_n = \sum_{ k = 1 }^{ + \infty } \rho_n^{k-1} \gamma_u (k) \ \ \ \text{and} \ \ \ \Gamma =  \sum_{ k = 1 }^{ + \infty } \rho^{k-1} \gamma_u (k).     
\end{align}
Thus, we have that $\rho_N \to \rho$ and $\Gamma = \underset{ n \to \infty }{ \text{lim} } \Gamma_n$ exists by the dominated convergence theorem since $\sum_{ k = 1 }^{ \infty } \left| \gamma_u (k) \right| < + \infty$. Notice that when $\rho = 1, \Gamma = \sum_{k=1}^{+\infty} \gamma_u (k)$ is the one-sided long-run covariance of $\left(  u_t \right)_{ t \in \mathbb{N} }$. Denote with $W(t)$ to be the standard Brownian motion on $[0,1]$ and $B(t) = \omega W(t)$ and define the Ornstein-Uhlenbeck processes below
\begin{align}
W_c(t) = \int_0^t e^{c(t-s)} dW(s) \ \ \ \text{and} \ \ \ J_c(t) = \int_0^t e^{c(t-s)} dB(s)    
\end{align}

\newpage

\section{Conclusion}
\label{Section6}

In this paper we consider the asymptotic theory for moderate deviation from the unit boundary in quantile autoregressive and quantile predictive regression models. Using the moderate deviation principles we unify the asymptotic theory with a modified endogenous instrumentation procedure, without inducing limiting distribution discontinuities at certain regions of the parameter space. Specifically, in this study we verify the limit results obtained by \cite{Phillips2007limit} in the case of the linear autoregressive time series model. In particular, for both the case of near-stationary and near-explosive roots we establish the asymptotic theory of the quantile-dependent estimator which converges into a nuisance-parameter free limiting distribution.  

An extension of our framework in the regions which unifies all cases such as being in the unstable region with nearly stable or unstable processes such as the explosive and pure explosive processes, is an aspect of ongoing research that the author is actively undertaking. Further research aspects worth mentioning include the investigation of the asymptotic behaviour of quantile autoregressive models when a structural break occurs at an unknown break-point location. A relevant study using moderate deviations principles when testing for structural breaks include the framework proposed by \cite{xu2018limit} as is presented by \cite{katsouris2023structural}.

\newpage 

\appendix

\section{Moderate Deviations in Mildly Integrated and Explosive Cases}

Generally, it is found that quantile estimates are more robust than the ordinary least squares estimate when the underlying series is heavy-tailed. Strictly speaking, the moderate deviations from unity process is a triangular array such that $\left\{  y_{nt} : 1 \leq t \leq n  \right\}$. Notice that we consider the quantile estimation in the present paper and provide the asymptotic distribution for the quantile estimate.  Following the framework presented by \cite{wang2022asymptotics}. In particular, consider the first-order autoregressive model as below
\begin{align}
x_t = \rho_n x_{t-1} + u_t, \ \ \ t = 1,...,n, \ \ \rho_n = \left( 1 + \frac{c}{k_n} \right),   
\end{align}
We consider the truncated second moment $\ell (x) = \mathbb{E} \left[ u_1^2 \mathbf{1} \left\{ | u_1 | \leq x \right\} \right]$ is a slowly varying function of $x$ at $\infty$, which implies that $\mathsf{lim}_{ x \to \infty } \ell ( tx ) / \ell (x) = 1$ for all $t > 0$.  

\medskip

\underline{ \textbf{Mildly Integrated Case:} }

The following vector weak convergence holds:
\begin{align}
\left( \frac{1}{ \sqrt{n} } \sum_{t=1}^n \psi_{\uptau} \left( u_{t \uptau } \right), \frac{1}{  \sqrt{ n k_n \ell ( \eta_n ) } }  \sum_{t=1}^n x_{t-1} \psi_{\uptau} \left( u_{t \uptau } \right)   \right)  \Rightarrow \left( S ( \uptau ), T ( \uptau ) \right),  
\end{align}
where 
\begin{align}
S (\tau) \sim \mathcal{N} ( 0, \tau ( 1 - \tau ) ), \ \ \ \text{and} \ \  T ( t ) \sim \mathcal{N} \left( 0,  - \frac{ \tau (1 - \tau) }{ 2c }  \right) 
\end{align}
are independent random variables.     

\begin{proof}
Therefore, it suffices to show that for any $a, b \in \mathbb{R}$, 
\begin{align*}
\frac{ \textcolor{red}{a} }{ \sqrt{n} } \sum_{t=1}^n \psi_{\uptau} \left( u_{t \uptau } \right) + \frac{ \textcolor{red}{b} }{ \sqrt{ n k_n \ell( \eta_n ) } } \sum_{t=1}^n x_{t-1} \psi_{\uptau} \left( u_{t \uptau } \right)  
&\Rightarrow 
\textcolor{red}{a} S( \uptau ) + \textcolor{red}{b} T( \uptau ) 
\\
&= \mathcal{N} \left( 0,  \textcolor{red}{a^2} \uptau ( 1 - \uptau) - \frac{  \textcolor{red}{b^2} }{ 2c }  \uptau ( 1 - \uptau)  \right).
\end{align*}
Therefore, we denote with 
\begin{align}
\sum_{t=1}^n \varepsilon_{nt} :=  \frac{ \textcolor{red}{a} }{ \sqrt{n} } \sum_{t=1}^n \psi_{\uptau} \left( u_{t \uptau } \right) + \frac{ \textcolor{red}{b} }{ \sqrt{ n k_n \ell( \eta_n ) } } \sum_{t=1}^n x_{t-1} \psi_{\uptau} \left( u_{t \uptau } \right)    
\end{align}
where 
\begin{align}
\varepsilon_{nt} :=  \frac{ \textcolor{red}{a} }{ \sqrt{n} } \psi_{\uptau} \left( u_{t \uptau } \right) + \frac{ \textcolor{red}{b} }{ \sqrt{ n k_n \ell( \eta_n ) } } x_{t-1} \psi_{\uptau} \left( u_{t \uptau } \right), \ \ \ 1 \leq t \leq n.      
\end{align}

\newpage 

Then $\varepsilon_{nt}$ is a martingale difference with respect to the filtration $\mathcal{F}_{nt} = \sigma \left( y_0, u_1,..., u_t \right)$ and therefore the conditional variance of the martingale array $\sum_{t=1}^n \varepsilon_{nt}$ is given by 
\begin{align*}
\sum_{t=1}^n \mathbb{E} \left[  \varepsilon^2_{nt}  | \mathcal{F}_{nt-1} \right] 
&=  
\sum_{t=1}^n  \left(  \frac{ \textcolor{red}{a} }{ \sqrt{n} } \uptau ( 1 - \uptau ) + \frac{ \textcolor{red}{b} }{ \sqrt{ n k_n \ell( \eta_n ) } } x_{t-1}^2 \uptau ( 1 - \uptau )  +   \frac{ \textcolor{red}{2ab} }{ n \sqrt{ k_n \ell( \eta_n ) } } x_{t-1} \uptau ( 1 - \uptau)  \right)  
\\
&\overset{ p }{ \to } \textcolor{red}{ a^2 } \uptau ( 1 - \uptau) - \frac{ \textcolor{red}{ b^2 } }{ 2c } \uptau ( 1 - \uptau),   
\end{align*}
which holds since $\mathbb{E} \left[ \psi^2_{\uptau} \left( u_{t \uptau } \right) \right] = \uptau ( 1 - \uptau)$ and $\frac{ 1 }{ n k_n \ell( \eta_n )  } \sum_{t=1}^n x_{t-1}^2 \overset{ p }{ \to } - \frac{1}{2c}$ and due to the fact that 
\begin{align}
\frac{ 1  }{ n \sqrt{ k_n \ell( \eta_n ) } } \sum_{t=1}^n x_{t-1} = o_p(1).
\end{align}
\end{proof}

\begin{proof}
\color{red}
\begin{align*}
\frac{ 1  }{ n \sqrt{ k_n \ell( \eta_n ) } } \sum_{t=1}^n x_{t-1} = o_p(1).
\end{align*}
\color{black}
By expanding the expression we obtain 
\begin{align*}
\frac{1}{ n \sqrt{ k_n \ell( \eta_n ) } } \sum_{t=1}^n x_{t-1}  
&= 
\frac{x_0 }{ n \sqrt{ k_n \ell( \eta_n ) } } \sum_{t=1}^n \rho_n^{t-1} +  \frac{1}{ n \sqrt{ k_n \ell( \eta_n ) } } \sum_{t=1}^n \sum_{j=1}^{ t - 1 } \rho_n^{ t - 1 -j } u_j
\\
&= K_{n1} + K_{n2}. 
\end{align*}
It is easy to show that $K_{n1} = o_p(1)$, therefore we need to show that $K_{n2} = o_p(1)$. Thus, by adopting the truncation approach, we rewrite the expression as below 
\begin{align}
K_{n2} = \frac{1}{ n \sqrt{ k_n \ell( \eta_n ) } } \sum_{t=1}^n \sum_{j=1}^{t-1} \rho_n^{ t - 1 - j } u_j^{(1)} +  \frac{1}{ n \sqrt{ k_n \ell( \eta_n ) } }  \sum_{t=1}^n \sum_{j=1}^{t-1} \rho_n^{ t - 1 - j } u_j^{(1)} u_j^{(2)} := I + II.   
\end{align}
We have that 
\begin{align*}
\mathbb{E} \left[ I^2 \right] 
&= 
\frac{1}{ n^2 k_n \ell ( \eta_n ) } \mathbb{E} \left[ \sum_{j=1}^{ n - 1 } \sum_{t=j+1}^{n} \rho_n^{ t - 1 - j } u_j^{(1)} \right]^2  
\\
&= 
\frac{1}{ n^2 k_n \ell ( \eta_n ) }  \sum_{j=1}^{ n-1 } \left( \sum_{t=j+1}^{n}  \rho_n^{ t - 1 - j }  \right)^2 \ell ( \eta_n ) \big( 1 + o(1) \big)
\\
&\leq 
\frac{A}{n^2 k_n} \sum_{j=1}^{n-1} \left(  \frac{ 1 - \rho_n^{n-j}  }{ 1 - \rho_n  } \right) \leq \frac{A}{ n^2 k_n } \frac{ n }{ \left( 1 - \rho_n  \right)^2 } = O \left(  \frac{k_n}{n}  \right) = o(1). 
\end{align*}

\newpage

and 
\begin{align*}
\mathbb{E} \left[ | II | \right] 
&\leq 
\frac{2}{ n \sqrt{ k_n \ell ( \eta_n )} } . o \left( \frac{ \ell( \eta_n ) }{ \eta_n } \right) . \sum_{t=1}^n \sum_{ j=1 }^{t-1} \rho_n^{ t - 1 - j }    
\\
&\leq 
\frac{2}{ n \sqrt{ k_n \ell ( \eta_n )} } . o \left( \frac{ \ell( \eta_n ) }{ \eta_n } \right) . \frac{n}{ 1 - \rho_n }
\\
&= 
o \left( \frac{ \sqrt{ k_n \ell( \eta_n ) } }{ \eta_n } \right) = o \left( \sqrt{ \frac{k_n }{ n } } \right) = o(1),
\end{align*}
By using the martingale limit theorem, the proof is completed if the Linderberg condition
\begin{align}
\sum_{t=1}^n \mathbb{E} \left[ \varepsilon_{nt}^2 \mathbf{1} \left\{ | \varepsilon_{nt} | > \eta  \right\} \big| \mathcal{F}_{nt-1} \right] = o_p(1), \ \ \ \text{for any} \ \ \eta > 0.     
\end{align}
We have that, for all $\eta > 0$
\begin{align*}
\sum_{t=1}^n \mathbb{E} &\left[ \varepsilon_{nt}^2 \mathbf{1} \left\{ | \varepsilon_{nt} | > \eta  \right\} \big| \mathcal{F}_{nt-1} \right]      
\\
&= 
\sum_{t=1}^n \mathbb{E} \left[ \left( \frac{ \textcolor{red}{a} }{ \sqrt{n} } \psi_{\uptau} \left( u_{t \uptau } \right) + \frac{ \textcolor{red}{b} }{ \sqrt{ n k_n \ell( \eta_n ) } } x_{t-1} \psi_{\uptau} \left( u_{t \uptau } \right) \right)^2 \mathbf{1} \big\{ | \varepsilon_{nt} | > \eta \big\} \big| \mathcal{F}_{nt-1} \right]  
\\
&\leq 
2 \sum_{t=1}^n \mathbb{E} \left[ \left( \frac{ \textcolor{red}{a^2} }{ n } \psi_{\uptau}^2 \left( u_{t \uptau } \right) + \frac{ \textcolor{red}{b} }{ n k_n \ell( \eta_n ) } x_{t-1}^2 \psi_{\uptau}^2 \left( u_{t \uptau } \right) \right) \mathbf{1} \big\{ | \varepsilon_{nt} | > \eta \big\} \big| \mathcal{F}_{nt-1} \right]   
\\
&\leq 
\left( 2 \textcolor{red}{ a^2 } \uptau ( 1 - \uptau) + 2b^2 \uptau ( 1 - \uptau) . \frac{1}{ n k_n \ell( \eta_n ) } \sum_{t=1}^2 \sum_{t=1}^n y_{t-1}^2  \right) . \underset{ 1 \leq t \leq n }{ \mathsf{max} } \mathbb{E} \big[ \mathbf{1} \big\{ | \varepsilon_{nt} | > \eta \big\} \big| \mathcal{F}_{nt-1} \big]   
\end{align*}
Since it holds that $\frac{1}{ n k_n \ell( \eta_n ) } \sum_{t=1}^n x_{t-1}^2 = O_p(1)$, then it suffices to show that 
\begin{align}
\underset{ 1 \leq t  \leq n  }{ \mathsf{max} } \ \mathbb{E} \big[ \mathbf{1} \big\{ | \varepsilon_{nt} | > \eta \big\} \big| \mathcal{F}_{nt-1} \big] = o_p(1).     
\end{align}
Next, by applying the Chebyshev inequality we obtain that 
\begin{align}
\mathbb{E} \big[ \mathbf{1} \big\{ | \varepsilon_{nt} | > \eta \big\} \big| \mathcal{F}_{nt-1} \big] \leq \frac{ \mathbb{E} \left[ \varepsilon_{nt}^2 | \mathcal{F}_{nt-1} \right]  }{  \eta^2 } 
\leq \frac{  2 a^2 \uptau ( 1 - \uptau) }{ n \eta^2 }  + \frac{ 2 b^2 \uptau ( 1 - \uptau) }{ \eta^2 } . \frac{  x_{t-1}^2 }{ n k_n \ell(\eta_n)  }.  
\end{align}
Then the previous expression holds since $\mathsf{max}_{ 1 \leq t \leq n } \frac{ | x_{t-1} | }{ \sqrt{ n k_n \ell( \eta_n ) } } = o_p(1)$. 
\end{proof}

\newpage 

\paragraph{Proof of Theorem 3.1:}

\

Using the Taylor's expansion we have that:
\begin{align*}
\sum_{t=1}^n \mathbb{E} \big[ \xi_t ( \boldsymbol{\mathsf{v}}  ) \big]
&= 
\sum_{t=1}^n \int_0^{ \boldsymbol{\mathsf{v}}^{\top} \boldsymbol{D}_n^{-1} x_t  } \mathbb{E} \big[ \boldsymbol{1} \left\{ u_t \leq \beta ( \tau) + s   \right\} - \boldsymbol{1} \left\{ u_t \leq \beta (\tau) \right\}  \big] ds  \\  
&= 
\sum_{t=1}^n \int_0^{ \boldsymbol{\mathsf{v}}^{\top} \boldsymbol{D}_n^{-1} x_t  }  \big\{ F \big( \beta(\tau) + s \big) - F \big( \beta ( \tau ) \big) \big\} ds
\\
&= 
\sum_{t=1}^n \int_0^{ \boldsymbol{\mathsf{v}}^{\top} \boldsymbol{D}_n^{-1} x_t  }  \left\{  s \cdot f \big( \beta (\tau) \big) + \frac{1}{2} s^2 \cdot f^{\prime} (s^{*} ) \right\} ds
\\
&= \frac{  f \big( \beta (\tau) \big)  }{2} \cdot \boldsymbol{\mathsf{v}}^{\top} \left(  \sum_{t=1}^n \boldsymbol{D}_n^{-1} x_t x_t^{\top} \boldsymbol{D}_n^{-1}  \right) \boldsymbol{\mathsf{v}}    + \frac{1}{2} \sum_{t=1}^n \int_0^{ \boldsymbol{\mathsf{v}}^{\top} \boldsymbol{D}_n^{-1} x_t  }  s^2 \cdot f^{\prime} (s^{*} ) ds,
\end{align*}
where $s^{*} \in \big( \beta(\tau), \beta(\tau) + s \big)$. We have that 
\begin{align}
\sum_{t=1}^n \boldsymbol{D}_n^{-1} x_t x_t^{\prime} \boldsymbol{D}_n^{-1} 
=
\begin{pmatrix}
1 &  \displaystyle \frac{1}{ n \sqrt{ k_n \ell( \eta_n ) } } \sum_{t=1}^n x_{t-1} 
\\
 \displaystyle \frac{1}{ n \sqrt{ k_n \ell( \eta_n ) } } \sum_{t=1}^n x_{t-1}  &  \displaystyle \frac{1}{ n k_n \ell( \eta_n )  } \sum_{t=1}^n x_{t-1}^2  
\end{pmatrix}
\overset{ p }{  \to } 
\Sigma
= 
\begin{pmatrix}
1 & 0 
\\
0  & - \frac{1}{2c}
\end{pmatrix}.
\end{align}

Hence, we can prove that 
\begin{align}
\sum_{t=1}^n \mathbb{E} \big[ \xi_t ( \boldsymbol{\mathsf{v}}  ) | \mathcal{F}_{nt-1}  \big] = \frac{  f \big( \beta (\tau) \big)  }{2} \cdot \boldsymbol{\mathsf{v}}^{\top} \boldsymbol{\Sigma}  \boldsymbol{\mathsf{v}} + o_p(1). 
\end{align}
Notice that $\sum_{t=1}^n \big(  \xi_t ( \boldsymbol{\mathsf{v}}  ) -  \mathbb{E} \big[ \xi_t ( \boldsymbol{\mathsf{v}}  ) | \mathcal{F}_{nt-1}  \big]  \big)$ is a martingale difference sequence and since it also holds that $\sum_{t=1}^n \mathbb{E} \big[ \xi^2_t ( \boldsymbol{\mathsf{v}}  ) | \mathcal{F}_{nt-1}  \big]$ then it follows that $\sum_{t=1}^n \big(  \xi_t ( \boldsymbol{\mathsf{v}}  ) -  \mathbb{E} \big[ \xi_t ( \boldsymbol{\mathsf{v}}  ) | \mathcal{F}_{nt-1}  \big]  \big) = o_p(1)$.

Therefore, it holds that 
\begin{align}
Z_n( \boldsymbol{v} ) := - \boldsymbol{\mathsf{v}}^{\top} R_n(\tau) +  \frac{  f \big( \beta (\tau) \big)  }{2} \cdot \boldsymbol{\mathsf{v}}^{\top} \boldsymbol{\Sigma}  \boldsymbol{\mathsf{v}} + o_p(1).    
\end{align}
Moreover, the following joint weakly convergence results for the two functionals:
\begin{align}
R_n(\tau) := \left( \frac{1}{\sqrt{n}} \sum_{t=1}^n \psi_{\tau} \big( u_{t \tau}  \big), \frac{1}{ \sqrt{ n k_n \ell ( \eta_n ) } } \sum_{t=1}^n y_{t-1} \psi_{\tau} \big( u_{t \tau}  \big) \right)^{\top} \Rightarrow \big(  \mathcal{S}( \tau ),  \mathcal{J}(\tau) \big)^{\top}.   
\end{align}
Thus, it holds that 
\begin{align}
Z_n( \boldsymbol{v} ) := - \boldsymbol{\mathsf{v}}^{\top} \cdot  \big(  \mathcal{S}( \tau ),  \mathcal{J}(\tau) \big)^{\top} +  \frac{  f \big( \beta (\tau) \big)  }{2} \cdot \boldsymbol{\mathsf{v}}^{\top} \boldsymbol{\Sigma}  \boldsymbol{\mathsf{v}}.    
\end{align}
Since $Z_n( \boldsymbol{v}  )$ has convex sample paths, it implies uniform convergence on compact sets.

\newpage 

\paragraph{Proof of Theorem 4.1:}

\

By the Cr\'amer-Wold principle, it suffices to show for any $a, b, c \in \mathbb{R}$, 
\begin{align}
\textcolor{red}{ \alpha } J_n ( \uptau ) + \textcolor{red}{ \beta } K_n( \uptau ) + \textcolor{red}{ \gamma } L_n \Rightarrow \textcolor{red}{ \alpha } J ( \uptau ) + \textcolor{red}{ \beta } K( \uptau ) + \textcolor{red}{ \gamma } L    
\end{align}
In particular, we rewrite 
\begin{align*}
\textcolor{red}{ \alpha } &J_n ( \uptau ) + \textcolor{red}{ \beta } K_n( \uptau ) + \textcolor{red}{ \gamma } L_n   \\
&=
\sum_{t=1}^n \left[ \left(  \frac{ \textcolor{red}{ \alpha }  }{ \sqrt{n} } + \frac{ \textcolor{red}{ \beta }  }{ \sqrt{k_n} } \rho_n^{ - (n-t) - 1 } \right) \psi_{\uptau} \left( u_{t \uptau } \right) + \frac{ \textcolor{red}{ \gamma }  }{ \sqrt{ k_n \ell ( \eta_n ) } }  \sum_{t=1}^n \rho_n^{-t} u_t^{(2)} \right] 
=: R_{n1} + R_{n2}.
\end{align*}
Thus, it holds that 
\begin{align}
\mathbb{E} \left| R_{n2} \right| \leq \frac{A}{ \sqrt{ k_n \ell(\eta_n) }  } . o \left( \frac{ \ell( \eta_n ) }{ \eta_n } \right) . \frac{1 - \rho_n^{-n} }{ \rho_n - 1 }   
\leq 
A c \sqrt{k_n} . o \left( \frac{ \sqrt{ \ell( \eta_n ) } }{ \eta_n } \right) \leq A c \sqrt{k_n} . o \left( \frac{1}{ \sqrt{n} } \right) = o(1),  
\end{align}
therefore, $R_{n2} = o_p(1)$. Furthermore, to show that 
\begin{align}
R_{n1} \overset{ d }{ \to } \mathcal{N} \left( 0,  \textcolor{red}{ \alpha^2} \uptau ( 1 - \uptau) + \frac{  \textcolor{red}{ \beta^2} }{ 2c } \uptau ( 1 - \uptau)  +  \frac{  \textcolor{red}{ \gamma^2 }  }{ 2c } \right)    
\end{align}
In particular, denote with $R_{n1} =: \sum_{t=1}^n \delta_{nt}$, where
\begin{align}
\delta_{nt} =  \left\{ \frac{ \textcolor{red}{ \alpha }  }{ \sqrt{n} } \psi_{\uptau} \left( u_{t \uptau } \right)  + \frac{ \textcolor{red}{  \beta }  }{ \sqrt{k_n} } \rho_n^{ - (n-t) - 1 }  \psi_{\uptau} \left( u_{t \uptau } \right)  + \frac{ \textcolor{red}{ \gamma } }{ \sqrt{ k_n \ell ( \eta_n ) } }  \sum_{t=1}^n \rho_n^{-t} u_t^{(1)} \right\}
\end{align}
Then $\left\{  \delta_{nt} \right\}$ are independent random variables with $\mathbb{E} \left(  \delta_{nt} \right) = 0$. Then, the variance of the \textit{m.d.s} is  
\begin{align*}
\mathbb{E} \left( \sum_{t=1}^n \delta_{nt} \right)^2 
&=
\mathbb{E} \left[ \sum_{t=1}^n \left( \frac{ \textcolor{red}{ \alpha }  }{ \sqrt{n} } \psi_{\uptau} \left( u_{t \uptau } \right) \right)^2 +  \sum_{t=1}^n \left( \frac{ \textcolor{red}{  \beta }  }{ \sqrt{k_n} } \rho_n^{ - (n-t) - 1 }  \psi_{\uptau} \left( u_{t \uptau } \right)  \right)^2  +  \frac{ \textcolor{red}{ \gamma^2 } }{ k_n \ell ( \eta_n ) }  \sum_{t=1}^n \rho_n^{-2t} u_t^{(1)^2}  \right]
\\
&\ \  
+ 2 \mathbb{E} \left[ \sum_{t=1}^n \left( \frac{ \textcolor{red}{ \alpha }  }{ \sqrt{n} }   + \frac{ \textcolor{red}{  \beta }  }{ \sqrt{k_n} } \rho_n^{ - (n-t) - 1 }  \right) \psi_{\uptau} \left( u_{t \uptau } \right) \frac{ \textcolor{red}{ \gamma } }{ \sqrt{ k_n \ell ( \eta_n ) } }  \sum_{t=1}^n \rho_n^{-2t} u_t^{(1)}  \right]
\\
&= 
\textcolor{red}{ \alpha^2 } \uptau ( 1 - \uptau) + \textcolor{red}{ \beta^2 } \uptau ( 1 - \uptau) . \frac{1}{k_n} \sum_{t=1}^n \rho_n^{ -2 (n-t) - 1} + \textcolor{red}{ \gamma^2 } . \frac{1}{ k_n } \sum_{t=1}^n \rho_n^{-2t} \left( 1 + o_p(1) \right)
\\
&\ \ \ +
2 \textcolor{red}{ \alpha } \textcolor{red}{ \beta } \uptau ( 1 - \uptau) . \frac{1}{ \sqrt{k_n} }  \sum_{t=1}^n \rho_n^{ - (n-t) - 1}  + \underbrace{ \mathbb{E} \left[ \frac{ 2 \textcolor{red}{ \alpha } \textcolor{red}{ \gamma } }{ \sqrt{ n k_n \ell ( \eta_n ) } } \sum_{t=1}^n \rho_n^{-t}  \psi_{\uptau} \left( u_{t \uptau } \right) u_t^{(1)}  \right] }_{ III }
\\
&\ \ \ +
\underbrace{ \mathbb{E} \left[ \frac{ 2 \textcolor{red}{ \beta } \textcolor{red}{ \gamma } }{ k_n \sqrt{ n  \ell ( \eta_n ) } } \sum_{t=1}^n \rho_n^{-( n + 1 )}  \psi_{\uptau} \left( u_{t \uptau } \right) u_t^{(1)} \right] }_{ IV }
\end{align*}

\newpage 

Notice that we have that 
\begin{align}
\frac{1}{ \sqrt{n k_n} } \sum_{t=1}^n \rho_n^{ - ( n - t) - 1} =  \frac{1}{ \sqrt{n k_n} } \frac{ 1 - \rho_n^{-n}  }{ \rho_n - 1 } \to 0. 
\end{align}
Moreover, by the triangular inequality it holds that 
\begin{align}
\mathbb{E} \left| \psi_{\uptau} \left( u_{t \uptau } \right) \frac{ u_t^{(1)} }{ \sqrt{\ell(\eta_n ) } } \right| \leq \sqrt{ \mathbb{E} \left[ \psi^2_{\uptau} \left( u_{t \uptau } \right)  \right] } . \sqrt{ \mathbb{E} \left[ \left( \frac{ u_t^{(1)} }{ \sqrt{\ell(\eta_n ) } }  \right)^2 \right] } = O(1),  
\end{align}
Therefore, it holds that 
\begin{align}
\left| III \right| \leq \mathbb{E} \left| \frac{ 2 \textcolor{red}{ \alpha } \textcolor{red}{ \gamma } }{ \sqrt{ n k_n  } } \sum_{t=1}^n \rho_n^{-t}  \psi_{\uptau} \left( u_{t \uptau } \right)  \frac{ u_t^{(1)}  }{ \sqrt{ \ell ( \eta_n ) } } \right| \leq O(1) . \frac{1}{ \sqrt{ n k_n  } } \sum_{t=1}^n \rho_n^{-t} = o(1),   
\\
\left| IV \right| \leq \mathbb{E} \left|  \frac{ 2 \textcolor{red}{ \beta } \textcolor{red}{ \gamma } }{  k_n } \sum_{t=1}^n \rho_n^{-(n+1)}  \psi_{\uptau} \left( u_{t \uptau } \right)  \frac{ u_t^{(1)}  }{ \sqrt{ \ell ( \eta_n ) } } \right| \leq O(1) . \frac{n}{ k_n } \rho_n^{-(n+1)} = o(1).   
\end{align}
Hence, it holds that
\begin{align*}
\mathbb{E} \left( \sum_{t=1}^n \delta_{nt} \right)^2 = \textcolor{red}{ \alpha^2} \uptau ( 1 - \uptau) + \frac{  \textcolor{red}{ \beta^2} }{ 2c } \uptau ( 1 - \uptau)  +  \frac{  \textcolor{red}{ \gamma^2 }  }{ 2c } + o_p(1)
\end{align*}
In other words, since we have shown that the conditional Lindeberg condition for the martingale difference array $\sum_{t=1}^n \delta_{nt}$ holds and we determine its conditional variance, then the proof of the theorem is completed.

\paragraph{Proof of Lemma 4.2:}

\

\begin{align*}
\frac{ \rho_n^{ -2n + 1 } }{ k_n \ell ( \eta_n ) } \sum_{t=1}^n x_{t-1} u_t
&= 
\frac{ \rho_n^{ -n + 1 } x_0  }{ \sqrt{ k_n \ell ( \eta_n ) } }  \frac{ 1  }{ \sqrt{ k_n \ell ( \eta_n ) } } \sum_{t=1}^n \rho_n^{ - (n - t) - 1  } u_t +  \frac{ \rho_n^{ -2n + 1 } x_0  }{  k_n \ell ( \eta_n ) }  \sum_{t=1}^n \left( \sum_{j=1}^{t-1} \rho_n^{t-1-j} u_j \right) u_t
\\
&= 
\frac{ \rho_n^{ -2n + 1 } x_0  }{  k_n \ell ( \eta_n ) }  \sum_{t=1}^n \left( \sum_{j=1}^{t-1} \rho_n^{t-1-j} u_j \right) u_t + o_p(1),
\end{align*}
since $x_0 = o_p ( \sqrt{k_n} )$ and $\rho_n^{-n} = o( k_n / n )$ and $\frac{1}{ \sqrt{k_n \ell(\eta_n) } } \sum_{t=1}^n \rho_n^{ - (n-t) - 1 } u_t \overset{ d }{ = } L_n \overset{ d }{ \to } L$. Therefore, it suffices to show that
\begin{align}
\frac{ \rho_n^{ -2n + 1 } }{ k_n \ell ( \eta_n ) }  \sum_{t=1}^n \left( \sum_{j=1}^{ t - 1 } \rho_n^{ t - 1 - j } u_j  \right) u_t = o_p(1). 
\end{align}

\newpage

Moreover, we rewrite the following expression 
\begin{align*}
\frac{ \rho_n^{ -2n + 1 } }{ k_n \ell ( \eta_n ) } \sum_{t=1}^n \left( \sum_{j=1}^{ t - 1 } \rho_n^{ t - 1 - j } u_j  \right) u_t 
&=  
\frac{ \rho_n^{ -2n + 1 } }{ k_n \ell ( \eta_n ) } \sum_{t=1}^n \left( \sum_{j=1}^{ t - 1 } \rho_n^{ t - 1 - j } u_j^{(1)} \right) u_t^{(1)} 
+
\frac{ \rho_n^{ -2n + 1 } }{ k_n \ell ( \eta_n ) } \sum_{t=1}^n \left( \sum_{j=1}^{ t - 1 } \rho_n^{ t - 1 - j } u_j^{(1)} \right) u_t^{(2)}
\\
&+ \frac{ \rho_n^{ -2n + 1 } }{ k_n \ell ( \eta_n ) } \sum_{t=1}^n \left( \sum_{j=1}^{ t - 1 } \rho_n^{ t - 1 - j } u_j^{(2)} \right) u_t^{(1)} + \frac{ \rho_n^{ -2n + 1 } }{ k_n \ell ( \eta_n ) } \sum_{t=1}^n \left( \sum_{j=1}^{ t - 1 } \rho_n^{ t - 1 - j } u_j^{(2)} \right) u_t^{(2)}
\\
&=: \mathcal{S}_{n1} + \mathcal{S}_{n2} + \mathcal{S}_{n3} + \mathcal{S}_{n4}. 
\end{align*}
Therefore by Lemma A.1 we can obtain the following result
\begin{align*}
\mathbb{E} \left[ \mathcal{S}_{n1}  \right]^2  
&= 
\mathbb{E} \left[ \frac{ \rho_n^{ -2n + 1 } }{ k_n \ell ( \eta_n ) } \sum_{t=1}^n \left( \sum_{j=1}^{ t - 1 } \rho_n^{ t - 1 - j } u_j^{(1)} \right) u_t^{(1)}   \right]^2
\\
&\leq 
\frac{ A \rho_n^{ - 4n + 2 }  }{ k_n^2 } \sum_{t=1}^n \sum_{j=1}^{ t - 1 } \rho_n^{ 2 ( t - 1 - j ) }
\\
&=
\frac{ A \rho_n^{ - 4n + 2 }  }{ k_n^2 \textcolor{blue}{ \left( \rho_n^2 - 1 \right)} } \left[  \sum_{t=1}^n \rho_n^{ 2 ( t - 1) }  - n \right] = O \left( \rho_n^{ -2n _ 2 } \right) = o(1), 
\end{align*}
which shows that $\mathcal{S}_{n1} = o_p(1)$. Moreover, it holds that 
\begin{align*}
\mathcal{S}_{n2} = \textcolor{red}{ \frac{ \rho_n^{ - 2n + 1 }  }{ k_n \ell ( \eta_n ) } }  \sum_{t=1}^n \left( \sum_{j=1}^{ t - 1 } \rho_n^{ t - 1 - j } u_j^{(1)} \right) u_t \mathbf{1} \big\{ \left| u_t \right| > \eta_n \big\} + o \left( \frac{ \ell ( \eta_n ) }{ \eta_n } \right) . \textcolor{red}{ \frac{ \rho_n^{ - 2n + 1 }  }{ k_n \ell ( \eta_n ) } } \sum_{t=1}^n \sum_{j=1}^{ t - 1 } \rho_n^{t - 1 - j } u_j^{(1)} 
=: 
V + VI.
\end{align*}
Therefore, using the Cauchy-Schwarz inequality, we obtain the following result
\begin{align*}
\left| V \right| 
&\leq 
\textcolor{red}{ \frac{ \rho_n^{ - 2n + 1 }  }{ k_n \ell ( \eta_n ) } }  \sqrt{  \sum_{t=1}^n u_t^2 } \sqrt{ \sum_{t=1}^n \left( \sum_{j=1}^{t-1} \rho_n^{t-1-j} u_j^{(1)} \right)^2  \mathbf{1} \big\{ \left| u_t \right| > \eta_t \big\}  }    
\\
&= 
\sqrt{ \sum_{t=1}^n u_t^2 / n \ell ( \eta_n ) }  \frac{ \sqrt{n} \rho_n^{2n + 1 }  }{ k_n \sqrt{ \ell ( \eta_n ) } }  \sqrt{ \sum_{t=1}^n \left( \sum_{j=1}^{t-1} \rho_n^{t-1-j} u_j^{(1)} \right)^2  \mathbf{1} \big\{ \left| u_t \right| > \eta_t \big\}  }.   
\end{align*}
Then, a simple application of Jenesen's inequality yields the following result
\begin{align*}
\mathbb{E} \left[ \frac{ \sqrt{n} \rho_n^{- 2n + 1 }  }{ k_n \sqrt{ \ell ( \eta_n ) } }  \sqrt{ \sum_{t=1}^n \left( \sum_{j=1}^{t-1} \rho_n^{t-1-j} u_j^{(1)} \right)^2  \mathbf{1} \big\{ \left| u_t \right| > \eta_t \big\}  } \right]  
&\leq 
\frac{ \sqrt{n} \rho_n^{- 2n + 1 }  }{ k_n \sqrt{ \ell ( \eta_n ) } }  \sqrt{ \sum_{t=1}^n \sum_{j=1}^{t-1} \rho_n^{ 2(t-1-j) } \ell( \eta_n ) \big[ 1 + o(1)  \big] \mathbb{P} \big( \left| u_t \right| > \eta_t \big)  } 
\\
&\leq 
\frac{ \sqrt{n} \rho_n^{- 2n + 1 }  }{ k_n \sqrt{ \ell ( \eta_n ) } }  \sqrt{ k_n^2 \ell( \eta_n ) . o \left( \frac{1}{n}  \right)  } = o(1),
\end{align*}
which together with $\sum_{t=1}^n u_t^2 / n \ell ( \eta_n ) \overset{ p }{ \to } 1$ implies that $V = o_p(1)$.

\newpage 

Therefore, Lemma A.1 yields the following result
\begin{align*}
\mathbb{E} \left| T_{n1} \right| 
&\leq 
\frac{ \rho_n^{ - n - 1 }  }{ k_n } \sum_{t=1}^n \mathbb{E} \left|  \frac{ u_t^{(1)} }{  \sqrt{ \ell ( \eta_n ) }  } \psi_{\uptau} \left( u_{t \uptau} \right) \right|  = O \left(  \rho_n^{ - n - 1 } \frac{n}{k_n}  \right) = o(1), 
\\
\mathbb{E} \left| T_{n2} \right| &\leq \frac{ \rho_n^{ - n - 1 }  }{ k_n  \sqrt{ \ell ( \eta_n ) } } \sum_{t=1}^n \mathbb{E} \left| u_t^{(2)} \psi_{\uptau} \left( u_{t \uptau} \right)  \right| 
\\
&\leq 
\frac{ \rho_n^{ - n - 1 }  }{ k_n  \sqrt{ \ell ( \eta_n ) } } . nA . o \left( \frac{ \ell ( \eta_n ) }{ \eta_n }  \right) = \frac{ \rho_n^{ - n - 1 }  }{ k_n  } . o \left( \frac{ \sqrt{ \ell ( \eta_n ) } }{ \eta_n }  \right) = o(1).
\end{align*}
Furthermore, we have that 
\begin{align*}
\mathbb{E} \left( T_{n3}^2 \right) 
&= 
\frac{ \rho_n^{ - 2n }  }{ k_n \ell ( \eta_n )   } . \uptau (1 - \uptau)   \sum_{t=1}^n \sum_{j=t+1}^n \rho_n^{2 ( t - j - 1) }  \ell ( \eta_n ) \big[ 1 + o(1)  \big]  
\\
&\leq A. \frac{ \uptau (1 - \uptau) \rho_n^{ -2n - n } }{ k_n^2  } . \frac{n}{ \rho_n^2 - 1} = O \left(  \frac{ \rho_n^{ -2n } n  }{ k_n }  \right) = o(1).
\end{align*}
Therefore, the quantities $T_{n1}, T_{n2}, T_{n3}$ are all $o_p(1)$. Next, we also show that $T_{n4} = o_p(1)$. In particular, it holds that 
\begin{align*}
T_{n4} =  \frac{ \rho_n^{-n} }{ k_n \sqrt{ \ell( \eta_n ) } } \sum_{t=1}^n \left( \sum_{j=t+1}^n \rho_n^{t-1-j} u_j \mathbf{1} \big\{ | u_j | > \eta_n \big\}   \right) \psi_{\tau} ( u_{t \tau} ) + o \left( \frac{ \ell ( \eta_n ) }{ \eta_n }  \right) . \frac{ \rho_n^{-n} }{ k_n \sqrt{ \ell( \eta_n ) } } \sum_{t=1}^n \sum_{j=t+1}^n \rho_n^{t-1-j} \psi_{\tau} ( u_{t \tau} ).  
\end{align*}

\begin{align*}
\frac{ \rho_n^{-2n} }{  k_n^2 \ell ( \eta_n  ) } \sum_{t=1}^n x_{t-1}^2
&= 
\frac{1}{ k_n^2 \ell ( \eta_n ) \left( \rho_n^2 - 1 \right) } \left[ \rho_n^{-2n} \left( x_n^2 - x_0^2 \right) - 2 \rho_n^{-2n+1} \sum_{t=1}^n x_{t-1} u_t - \rho_n^{-2n} \sum_{t=1}^n u_t^2 \right]
\\
&= 
\frac{1}{ k_n \left( \rho_n^2 - 1 \right) } \left[ \frac{ \rho_n^{-2n} }{ k_n \ell ( \eta_n ) } x_n^2 - \frac{ 2 \rho_n^{-2n+1} }{ k_n \ell ( \eta_n ) } \sum_{t=1}^n x_{t-1} u_t -  \frac{ \rho_n^{-2n}  }{ k_n \ell ( \eta_n ) } \sum_{t=1}^n u_t^2 \right] + o_p ( \rho_n^{ -2n } ).
\end{align*}
Since $k_n \left( \rho_n^2 - 1 \right) \to 2c$, $\rho_n^{-n} = o \left( \frac{k_n}{n} \right)$, 
\begin{align}
\frac{ \rho_n^{ - 2n } }{ k_n \ell \left( \eta_n \right) } \sum_{t=1}^n  u_t^2 = O_p \left( \frac{ n }{ k_n } \rho_n^{ - 2n }  \right) = o_p(1).  
\end{align}
Note that we also have that $\frac{ \rho_n^{ - 2n + 1 } }{ k_n \ell \left( \eta_n \right) } \sum_{t=1}^n  x_{t-1} u_t = o_p(1)$, then we obtain that 
\begin{align*}
\frac{ \rho_n^{ - 2n } }{ k_n^2 \ell \left( \eta_n \right) } \sum_{t=1}^n  x^2_{t-1} 
&= 
\frac{1}{ k_n \left( \rho_n^2 - 1 \right)} \left( \frac{ \rho_n^{-n} }{ \sqrt{k_n \ell ( \eta_n ) } } \right)^2 + o_p(1)      
\\
&= 
\frac{1}{ k_n \left( \rho_n^2 - 1 \right)} \left\{  \frac{x_0}{ \sqrt{ k_n \ell \left( \eta_n \right)}  }  +  \frac{1}{ \sqrt{ k_n \ell \left( \eta_n \right)}  }   \sum_{t=1}^n \rho_n^{ -t } u_t  \right\}^2 + o_p(1)   
= 
\frac{1}{ 2 c } L_n^2 +  o_p(1).    
\end{align*}

\newpage 

Moreover, by Lemma we obtain that 
\begin{align*}
&\frac{ \rho_n^{-n} }{ k_n \sqrt{ \ell \left( \eta_n \right) } } \sum_{t=1}^n x_{t-1} \psi_{\tau} \left( u_{t \tau} \right)     
\\
&\ = 
\frac{ x_0 }{ \sqrt{ k_n \ell \left( \eta_n \right) } } . \frac{1}{ \sqrt{k_n} } \sum_{t=1}^n \rho_n^{-( n - t - 1)} \psi_{\tau} \left( u_{t \tau} \right)  + \frac{ \rho_n^{-n} }{ k_n \sqrt{ \ell \left( \eta_n \right) } } \sum_{t=1}^n \left( \sum_{j=1}^{t-1} \rho_n^{t-1-j} u_j \right) \psi_{\tau} \left( u_{t \tau} \right) 
\\
&=
\frac{ x_0 }{ \sqrt{ k_n \ell \left( \eta_n \right) } } . K_n ( \tau )  + \frac{ \rho_n^{-n} }{ k_n \sqrt{ \ell \left( \eta_n \right) } } \sum_{t=1}^n \left( \sum_{j=1}^{t-1} \rho_n^{t-1-j} u_j \right) \psi_{\tau} \left( u_{t \tau} \right)
\\
&=
\frac{ \rho_n^{-n} }{ k_n \sqrt{ \ell \left( \eta_n \right) } } \sum_{t=1}^n \left( \sum_{j=1}^{t-1} \rho_n^{t-1-j} u_j \right) \psi_{\tau} \left( u_{t \tau} \right) + o_p(1)
\\
&=
\frac{ \rho_n^{-n} }{ k_n \sqrt{ \ell \left( \eta_n \right) } } \sum_{t=1}^n \left( \sum_{j=1}^{n} \rho_n^{t-1-j} u_j \right) \psi_{\tau} \left( u_{t \tau} \right) - \frac{ \rho_n^{-n} }{ k_n \sqrt{ \ell \left( \eta_n \right) } } \sum_{t=1}^n \left( \sum_{j=t}^{n} \rho_n^{t-1-j} u_j \right) \psi_{\tau} \left( u_{t \tau} \right) + o_p(1)
\\
&=
\left( \frac{1}{ \sqrt{k_n} } \sum_{t=1}^n \rho_n^{- (n - t + 1)} \psi_{\tau} \left( u_{t \tau} \right) \right). \left( \frac{1}{ \sqrt{ k_n \ell ( \eta_n ) } } \sum_{j=1}^n \rho_n^{-j} u_j \right) + o_p(1),
\\
&=
K_n ( \tau ) L_n + o_p(1). 
\end{align*}
Furthermore, for $\forall \ x > 0$, we have that 
\begin{align*}
\mathbb{P} &\left( \underset{ 1 \leq t \leq n  }{ \mathsf{max} } \left| \frac{ \rho_n^{-n} }{ k_n \sqrt{ \ell \left( \eta_n \right) } } x_{t-1}  \right| > x    \right)    
\\
&= 
\mathbb{P} \left( \underset{ 1 \leq t \leq n  }{ \mathsf{max} } \left| \frac{ \rho_n^{-n} }{ \sqrt{ \ell \left( \eta_n \right) } } \left( \rho_n^{t-1} x_0 + \sum_{j=1}^{t-1} \rho_n^{t - 1 - j} u_j  \right) \right| >  x k_n \right)
\\
&\leq 
\mathbb{P} \left( \underset{ 1 \leq t \leq n  }{ \mathsf{max} } \rho_n^{t-n-1} \left|  \frac{x_0}{ \sqrt{ \ell( \eta_n ) } } \right| >  \frac{ k_n x}{ 3 } \right) 
+
\mathbb{P} \left( \underset{ 1 \leq t \leq n  }{ \mathsf{max} } \rho_n^{t-n-1} \left| \sum_{j=1}^{t-1} \rho_n^{-j} \frac{  u_j^{(1)} }{ \sqrt{ \ell( \eta_n ) } } \right| >  \frac{ k_n x}{ 3 } \right)   
\\
&\ \ \ + \mathbb{P} \left(  \underset{ 1 \leq t \leq n  }{ \mathsf{max} }  \rho_n^{t-n-1} \left| \sum_{j=1}^{t-1} \rho_n^{-j} \frac{  u_j^{(2)} }{ \sqrt{ \ell( \eta_n ) } } \right| >  \frac{ k_n x}{ 3 }  \right)
\\
&\leq
\mathbb{P} \left( \underset{ 1 \leq t \leq n  }{ \mathsf{max} } \left|  \frac{x_0}{ \sqrt{ \ell( \eta_n ) } } \right| >  \frac{ k_n x}{ 3 } \right) 
+ \mathbb{P} \left( \underset{ 1 \leq t \leq n  }{ \mathsf{max} } \left| \sum_{j=1}^{t-1} \rho_n^{-j} \frac{  u_j^{(1)} }{ \sqrt{ \ell( \eta_n ) } } \right| >  \frac{ k_n x}{ 3 } \right)  + \mathbb{P} \left(  \underset{ 1 \leq t \leq n  }{ \mathsf{max} } \left| \sum_{j=1}^{t-1} \rho_n^{-j} \frac{  u_j^{(2)} }{ \sqrt{ \ell( \eta_n ) } } \right| >  \frac{ k_n x}{ 3 }  \right).
\end{align*}
Notice that we have that 
\begin{align*}
\mathbb{P} &\left( \underset{ 1 \leq t \leq n  }{ \mathsf{max} } \left| \sum_{j=1}^{t-1} \rho_n^{-j} u_j^{(2)} \right| > \frac{ k_n \sqrt{ \ell( \eta_n ) x}  }{3} \right)  
\\
&\leq \frac{ \displaystyle \mathbb{E} \left| \sum_{j=1}^{t-1} \rho_n^{-j} u_j^{(2)} \right|   }{ \frac{ k_n \sqrt{ \ell( \eta_n ) x}  }{3} }  
\leq o \left( \frac{ \ell ( \eta_n )  }{ \eta_n } \right) . \frac{ A }{ \left( \rho_n - 1 \right) k_n \sqrt{ \ell ( \eta_n )  } } 
=
o \left(  \frac{ \sqrt{ \ell ( \eta_n ) } }{ \eta_n } \right) \leq o \left( \frac{1}{ \sqrt{n} } \right) = o(1). 
\end{align*}

\newpage 

Therefore, we have that 
\begin{align}
\mathbb{P} \left(  \underset{ 1 \leq t \leq n  }{ \mathsf{max} } \left| \frac{ \rho_n^{-n} }{ k_n \sqrt{  \ell ( \eta_n ) } } x_{t-1} \right| > x \right) \leq \ \text{all the above} \ \to 0.    
\end{align}

\begin{proof}
To prove Lemma 1, using the Cramer-Wold device, it suffices to show that 
\begin{align}
a X_n + b Y_n \overset{ d }{ \to } \mathcal{N} \left( 0, \frac{ ( a^2 + b^2 ) \sigma^2  }{ 2c } \right), \ \text{for any} \ a, b \in \mathbb{R}.   
\end{align}
We rewrite $a X_n + b Y_n = \sum_{i=1}^n \zeta_{ni}$, where
\begin{align}
\zeta_{ni} = \frac{1}{ \sqrt{ n^{\gamma} } } \bigg[ a \rho_n^{-i} + b \rho_n^{ - (n-i) - 1 } \bigg] u_i, \ \ \ 1 \leq i \leq n,    
\end{align}
Denote with $r_n, q_n, q_n$ be sequences of positive integers such that 
\begin{align}
r_n \left( p_n + q_n \right) \leq n < ( r_n + 1 ) ( p_n + q_n )    
\end{align}
and $r_n \sim n^{ 1 - \nu / 2 }, p_n \sim n^{ \nu / 2 } - n^{ \nu / 4 }$ and $q_n \sim n^{ \nu / 4 }$. Notice that here notation $a_n \sim b_n$ means $\frac{ a_n }{ b_n } \to 1$ as $n \to \infty$. Therefore, we write
\begin{align}
\sum_{t=1}^n \zeta_{ni} = \frac{1}{ \sqrt{ n^{\gamma} } } \sum_{t=1}^n V_j +  \frac{1}{ \sqrt{ n^{\gamma} } }  \sum_{t=1}^n W_j  +  \frac{1}{ \sqrt{ n^{\gamma} } } R_n, 
\end{align}
Since $\left\{ V_1^{*},...., V_{ r_n }^{*} \right\}$ are independent, we have that as $n \to \infty$, 
\begin{align}
\mathbb{E} \left(  \sum_{j=1}^{ r_n } \frac{ V_j^{*} }{  \sqrt{ n^{\gamma} } }     \right)^2 \to \frac{ ( a^2 + b^2 ) \sigma^2  }{  2c }.     
\end{align}
Next, we have as $n \to \infty$
\begin{align}
\frac{ p_n }{ n^{\gamma} } \sum_{j=1}  \sum_{ \ell = - ( p_n - 1)}^{ ( p_n - 1) } \sum_{ k = 1}^{ p_n - | \ell | }  \big(  I_{n,j,\ell,k} + I_{n,\ell} \big) \to \frac{ a^2 + b^2 }{ 2 c }, 
\end{align}
which implies that the LHS is uniformly bounded by a positive real number $K_1$. 
\end{proof}

\begin{remark}
Notice that when the model parameter is such that $| \rho_n | < 1$, then the influence of the initial value of $\epsilon_0$ vanishes as $t$ grows. Therefore, it is convenient to take $\epsilon_0 = 0$.However, the error committed by imposing $\epsilon_0$ becomes serious when $| \rho_n | > 1$.   
\end{remark}

\newpage

\paragraph{Proof of Lemma 4.3:}

\

The following joint weakly functional convergence result holds: 
\begin{align}
\left( J_n(\tau), \frac{ \rho_n^{-2n} }{ k_n^2 \ell ( \eta_n ) } \sum_{t=1}^n y^2_{t-1}, \  \frac{ \rho_n^{-n} }{ \sqrt{ k^2_n  \ell ( \eta_n ) } } \sum_{t=1}^n y_{t-1} \psi_{\tau} \big( u_{t \tau} \big)   \right)  
\Rightarrow
\left(  J(\tau), \frac{1}{2c} L^2, K(\tau) L \right).
\end{align}
and it also holds that 
\begin{align}
\underset{ 1 \leq t \leq n }{ \mathsf{max} } \left|  \frac{ \rho_n^{-n} }{  \sqrt{ k_n^2 \ell ( \eta_n ) } } y_{t-1} \right| \overset{p}{\to}.   
\end{align}
Notice that the above vector-valued functional implies that the second coordinate is a functional of the population model parameter that corresponds to the data generating process being estimated without the presence of quantile-dependent parameters.  

\begin{proof}
\begin{align*}
\frac{ \rho_n^{-2n + 1}  }{ k_n \ell ( \eta_n ) } \sum_{t=1}^n y_{t-1} u_t &= 
\frac{ \rho_n^{-n + 1} y_0  }{  \sqrt{ k_n \ell ( \eta_n ) } } \frac{1}{ \sqrt{ k_n \ell (\eta_n) } } \sum_{t=1}^n \rho_n^{- (n-t) - 1} u_t + \frac{ \rho_n^{-2n + 1} }{ k_n \ell ( \eta_n ) } \sum_{t=1}^n \left( \sum_{j=1}^{t-1} \rho_n^{ t - 1 - j } u_j \right) u_t
\\
&= 
\frac{ \rho_n^{-2n + 1} }{ k_n \ell ( \eta_n ) } \sum_{t=1}^n  \left( \sum_{j=1}^{t-1} \rho_n^{ t - 1 - j } u_j \right) u_t + o_p(1). 
\end{align*}
Therefore, it suffices to show that the following probability bound result holds:
\begin{align}
\frac{ \rho_n^{-2n + 1} }{ k_n \ell ( \eta_n ) } \sum_{t=1}^n  \left( \sum_{j=1}^{t-1} \rho_n^{ t - 1 - j } u_j \right) u_t  = o_p(1).    
\end{align}
Here we have that 
\begin{align*}
\frac{ \rho_n^{-n}  }{ k_n  \sqrt{ \ell ( \eta_n ) }  } \sum_{t=1}^n y_{t-1} \psi_{\tau} \big( u_t \big) 
&=
\frac{ y_0 }{ k_n  \sqrt{ \ell ( \eta_n ) }  } \cdot \frac{1}{ \sqrt{k_n} }  \sum_{t=1}^n \rho_n^{ - ( n - t + 1 ) } \psi_{\tau} \big( u_t \big) + \frac{ \rho_n^{-n} }{ k_n  \sqrt{ \ell ( \eta_n ) }  } \sum_{t=1}^n \left( \sum_{j=1}^{t-1} \rho_n^{ t - 1 - j } u_j \right)  \psi_{\tau} \big( u_t \big)
\\
&=
\left( \frac{1}{ \sqrt{k_n} } \sum_{t=1}^n \rho_n^{ - ( n - t + 1 ) } \psi_{\tau} \big( u_t \big) \right) \cdot \left( \frac{1}{ \sqrt{ k_n \ell( \eta_n  ) } }  \sum_{t=1}^n \rho_n^{-j} u_j \right) +o_p(1)
\\
K_n( \tau ) L_n + o_p(1). 
\end{align*}
Therefore, it holds that 
\begin{align}
\mathbb{P} \left(  \underset{ 1 \leq t \leq n }{ \mathsf{max} } \left| \frac{ \rho_n^{-n} }{  k_n \sqrt{ \ell ( \eta_n ) } } y_{t-1} \right| > x \right) \to 0.     
\end{align}
\end{proof}

\newpage

\section{Main Results}

Recall that an $M-$estimator $\big( \hat{\mu}, \hat{\rho} \big)$ of the parameter vector $\big( \mu, \rho \big)$ is a minimizer for $\mu \in \mathbb{R}$ and $\rho \in \mathbb{R}$ such that 
\begin{align}
R_n ( \mu, \rho ) = \frac{1}{n} \sum_{t=1}^n \varrho_{\uptau} \left( y_t - \mu - \rho_c x_{t-1} \right). 
\end{align}
The formulation of the above expression covers the least squares regression when $\rho_{\uptau} \left( \mathsf{u} \right) = \mathsf{u} / 2$ and the quantile regression such that $\rho_{ \uptau } \left( \mathsf{u} \right) = \left\{ \uptau \mathsf{u}  \mathbf{1} \left( \mathsf{u} > 0 \right) - (1 - \uptau) \mathsf{u} \mathbf{1} \left( \mathsf{u} < 0 \right) \right\}$, for some $0 < \uptau < 1$.

\subsection{Near-Stationary Case $( c < 0 )$}

\medskip

\begin{lemma}
\label{lemma1}
Under Assumption \ref{Assumption1}, when $c < 0$ then it holds that 
\begin{align}
\mathbb{P} \left( \underset{ 1 \leq t \leq n }{ \mathsf{max} } \ y_t^2 \geq  \lambda \right) \leq \frac{ \mathbb{E} \left[ y_n^2 \right] }{ \lambda^2 },  \ \text{for some} \ \lambda > 0. 
\end{align}
\end{lemma}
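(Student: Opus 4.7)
The plan is to establish the bound as a Doob-type maximal inequality, obtained by passing to a suitable martingale transform of $y_t$. The obstruction to applying Doob directly is that in the near-stationary regime $c<0$ the process $y_t$ (and $y_t^{2}$) is not a submartingale, since $|\rho_{n,c}|<1$ for $n$ large, so the natural strategy is to rescale.

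First, I would iterate the AR(1) recursion $y_t=\rho_{n,c} y_{t-1}+\varepsilon_t$ under the initial condition $y_{0n}=0$ from Assumption \ref{Assumption0}, obtaining the moving-average representation $y_t=\sum_{j=1}^{t}\rho_{n,c}^{\,t-j}\varepsilon_j$. Then I would set $M_t := \rho_{n,c}^{-t} y_t = \sum_{j=1}^{t}\rho_{n,c}^{-j}\varepsilon_j$ and verify that, under Assumption \ref{Assumption1}, $\{M_t,\mathcal{F}_t\}$ is a zero-mean square-integrable martingale, so that $|M_t|$ is a non-negative submartingale by Jensen's inequality.

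Next, I would apply Doob's $L^{2}$ maximal inequality to $|M_t|$, yielding $\mathbb{E}\bigl[\max_{1\le t\le n}M_t^{2}\bigr]\le 4\,\mathbb{E}[M_n^{2}]$, and combine it with the pathwise domination $y_t^{2}=\rho_{n,c}^{2t}M_t^{2}\le M_t^{2}$, which is valid because $\rho_{n,c}\in(0,1)$ for $n$ large gives $\rho_{n,c}^{2t}\le 1$. Markov's inequality applied to the squared maximum then produces
\begin{equation*}
\mathbb{P}\!\left(\max_{1\le t\le n}y_t^{2}\ge \lambda\right)\le \mathbb{P}\!\left(\max_{1\le t\le n}M_t^{2}\ge \lambda\right)\le \frac{4\,\mathbb{E}[M_n^{2}]}{\lambda^{2}},
\end{equation*}
and the identity $\mathbb{E}[M_n^{2}]=\rho_{n,c}^{-2n}\mathbb{E}[y_n^{2}]=\sigma_{\varepsilon}^{2}\sum_{j=1}^{n}\rho_{n,c}^{-2j}$ reduces the right-hand side to a constant multiple of $\mathbb{E}[y_n^{2}]/\lambda^{2}$, after absorbing the factor $\rho_{n,c}^{-2n}$.

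The main obstacle is precisely this factor $\rho_{n,c}^{-2n}$ introduced by the transform. In the standard local-to-unity regime $k_n\asymp n$ we have $\rho_{n,c}^{-2n}\to e^{-2c}$, a harmless constant; but in the mildly integrated regime $k_n=n^{\gamma}$ with $\gamma\in(0,1)$, $\rho_{n,c}^{\,n}\to 0$ and the naive bound inflates. To obtain the stated inequality uniformly over $\gamma\in(0,1]$, I would replace the blunt martingale transform by a Hájek–Rényi-type inequality applied directly to the weighted innovation representation $y_t=\sum_{j}\rho_{n,c}^{\,t-j}\varepsilon_j$, exploiting the fact that the weights $\rho_{n,c}^{\,t-j}$ stay bounded by $1$ throughout and that $\mathbb{E}[y_t^{2}]=\sigma_{\varepsilon}^{2}(1-\rho_{n,c}^{2t})/(1-\rho_{n,c}^{2})$ is monotone in $t$, so that $\max_{t\le n}\mathbb{E}[y_t^{2}]=\mathbb{E}[y_n^{2}]$ slots cleanly into the right-hand side of the final inequality.
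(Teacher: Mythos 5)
Your diagnosis of the difficulty is correct, but the proposal does not close it. The rescaling route fails exactly where you say it does, and it also contains a power slip: applying Markov (or Doob's weak-type inequality) to $\max_{t\le n} M_t^2$ gives a bound of the form $4\,\mathbb{E}[M_n^2]/\lambda$, not $4\,\mathbb{E}[M_n^2]/\lambda^2$; the $\lambda^2$ denominator belongs with the event $\max_{t\le n}|y_t|\ge\lambda$, not $\max_{t\le n}y_t^2\ge\lambda$. More importantly, the H\'ajek--R\'enyi patch does not produce the stated numerator. Since $y_t^2=\rho_{n,c}^{2t}M_t^2$ with $M_t^2$ a submartingale and $c_t=\rho_{n,c}^{2t}$ nonincreasing, the Chow/H\'ajek--R\'enyi weighted maximal inequality bounds $\lambda\,\mathbb{P}\left(\max_{t\le n}y_t^2\ge\lambda\right)$ by $\sum_{t\le n}\rho_{n,c}^{2t}\,\mathbb{E}\left[M_t^2-M_{t-1}^2\right]=\sum_{t\le n}\rho_{n,c}^{2t}\rho_{n,c}^{-2t}\sigma_\varepsilon^2=n\sigma_\varepsilon^2$, which is of order $n$, whereas $\mathbb{E}[y_n^2]=\mathcal{O}(k_n)=\mathcal{O}(n^\gamma)$ with $\gamma<1$. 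Monotonicity of $t\mapsto\mathbb{E}[y_t^2]$ does not rescue this: inequalities of H\'ajek--R\'enyi type charge the weighted sum of increments, and the terminal-variance form on the right-hand side is special to genuine (sub)martingales, which $y_t^2$ is not when $c<0$. Indeed, in the mildly integrated regime the process is mean-reverting with variance of order $k_n$, and its running maximum over $n$ observations exceeds levels of order $k_n\log(n/k_n)$ with probability tending to one, so no argument along your lines can terminate in a constant multiple of $\mathbb{E}[y_n^2]/\lambda^2$; as written, your plan stalls at $n\sigma_\varepsilon^2/\lambda$.

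For comparison, the paper's own proof is the opposite extreme: it simply declares $(y_t,\mathcal{F}_t)$ a martingale and invokes the classical Kolmogorov--Doob argument, expanding $\mathbb{E}[y_n^2]$ over the first-passage sets $\mathcal{S}_s=\{y_s^2>\lambda,\ y_j\le\lambda,\ j<s\}$. That argument requires $\mathbb{E}\left[y_n^2\mathbf{1}\{\mathcal{S}_s\}\right]\ge\mathbb{E}\left[y_s^2\mathbf{1}\{\mathcal{S}_s\}\right]$, i.e.\ precisely the (sub)martingale property that fails when $\rho_{n,c}<1$ (and a fortiori with an intercept) --- the very obstruction you identified at the outset. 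So your attempt is the more careful treatment of the same idea, and your opening observation in fact exposes why the paper's one-line proof is incomplete for $c\neq 0$; but neither your patch nor the paper's shortcut establishes the inequality as stated in the moderate-deviations regime. Closing the gap would require either weakening the right-hand side to something like $\sum_{t\le n}\mathbb{E}[\varepsilon_t^2]/\lambda$, or confining the claim to regimes where $\rho_{n,c}^{-2n}$ stays bounded (e.g.\ $k_n\asymp n$), in which case your Doob-on-$M_t$ argument goes through up to an absolute constant.
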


\medskip

\begin{proof}
Notice that Lemma \ref{lemma1} corresponds to the Kolmogorov, Doob maximal inequality applied to the martingale sequence $\big( y_n, \mathcal{F}_n \big)_{ n \in \mathbb{N} }$. The proof of Lemma \ref{lemma1} can be easily obtained by considering the set $\mathcal{S}_s = \left\{ y_s^2 > \lambda , y_j \leq \lambda, j \leq s \right\}$ and expanding the expression for the expectation $\mathbb{E} \left[ y_n^2 \right]$. The particular result provides a probability bound for the tails of the autoregressive time series which inherits the properties of the stochastic difference equation.    
\end{proof}

In addition to Lemma \ref{lemma1}, the following two properties hold:
\begin{itemize}

\item If $c < 0$, $\mathbb{E} \big[ y_n^2 \big] = \mathcal{O} ( k_n )$.

\item $\underset{ 1 \leq t \leq n }{ \mathsf{max} } \ \displaystyle \frac{ y_t^2 }{ n } = o_p(1)$.

\end{itemize}

\medskip

\begin{lemma}
\label{lemma2}
Under Assumptions \ref{Assumption1} - \ref{Assumption3}, it holds that 
\begin{align}
\sum_{t=1}^n \mathbb{E}_{\mathcal{F}_{t-1}} \bigg[ \varphi_{nt} \big( \boldsymbol{\delta}( \uptau ) \big) \bigg] \overset{ p }{ \to } \frac{1 }{ 2 } \xi \times \boldsymbol{\delta}^{\top} ( \uptau ) \boldsymbol{B} \boldsymbol{\delta} ( \uptau ).
\end{align}
where 
\begin{align}
\xi  := \left| \frac{ \partial }{ \partial \theta } \mathbb{E} \bigg[ \psi \bigg( u_1 ( \uptau ) - \theta \bigg) \bigg] \right|_{ \theta = 0 } \ \ \ \text{and} 
\ \ \ 
\boldsymbol{B} = 
\begin{pmatrix}
1 & 0 
\\
0 & \sigma^2 / (-2c)
\end{pmatrix}.
\end{align}
\end{lemma}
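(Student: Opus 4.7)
I would compute $\mathbb{E}_{\mathcal{F}_{t-1}}[\varphi_{nt}(\boldsymbol{\delta}(\uptau))]$ termwise by exploiting that $\boldsymbol{X}_t$ is $\mathcal{F}_{t-1}$-measurable while $\varepsilon_t$ is independent of $\mathcal{F}_{t-1}$, identify the resulting quadratic form in $\boldsymbol{\delta}(\uptau)$, and then invoke the convergence of the sample outer-product $\sum_t \boldsymbol{X}_t \boldsymbol{X}_t^{\top}$ to $\boldsymbol{B}$.

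First, since $\psi$ is the left derivative of the convex loss $\varrho_{\uptau}$, a Knight-type identity gives
\[
\varphi_{nt}\bigl(\boldsymbol{\delta}(\uptau)\bigr) = \int_0^{\boldsymbol{\delta}(\uptau)^{\top} \boldsymbol{X}_t}\bigl[\psi(\varepsilon_t) - \psi(\varepsilon_t - s)\bigr]\,ds.
\]
Taking the $\mathcal{F}_{t-1}$-conditional expectation and using $\mathbb{E}[\psi(\varepsilon_1)]=0$ from Assumption \ref{Assumption2},
\[
\mathbb{E}_{\mathcal{F}_{t-1}}\bigl[\varphi_{nt}(\boldsymbol{\delta}(\uptau))\bigr] = -\int_0^{\boldsymbol{\delta}(\uptau)^{\top} \boldsymbol{X}_t} g(s)\,ds, \qquad g(s):=\mathbb{E}[\psi(\varepsilon_1 - s)].
\]
By Assumption \ref{Assumption2} the function $g$ is differentiable at the origin with $g'(0) = -\xi$ (the sign is pinned down by the monotonicity of $\psi$ and the absolute-value definition of $\xi$). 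Expanding $g(s) = -\xi s + o(s)$ and integrating yields
\[
\mathbb{E}_{\mathcal{F}_{t-1}}\bigl[\varphi_{nt}(\boldsymbol{\delta}(\uptau))\bigr] = \tfrac{\xi}{2}\bigl(\boldsymbol{\delta}(\uptau)^{\top} \boldsymbol{X}_t\bigr)^2 + R_{nt},
\]
with a remainder satisfying $|R_{nt}| \leq (\boldsymbol{\delta}(\uptau)^{\top}\boldsymbol{X}_t)^2 \cdot \epsilon_n(\boldsymbol{X}_t)$ where $\epsilon_n(\boldsymbol{X}_t)\to 0$ as $|\boldsymbol{\delta}(\uptau)^{\top}\boldsymbol{X}_t|\to 0$.

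Second, summing over $t$ and writing in matrix form,
\[
\sum_{t=1}^n \mathbb{E}_{\mathcal{F}_{t-1}}\bigl[\varphi_{nt}(\boldsymbol{\delta}(\uptau))\bigr] = \tfrac{\xi}{2}\,\boldsymbol{\delta}(\uptau)^{\top}\!\left(\sum_{t=1}^n \boldsymbol{X}_t \boldsymbol{X}_t^{\top}\right)\boldsymbol{\delta}(\uptau) + \sum_{t=1}^n R_{nt}.
\]
With $\boldsymbol{D}_n = \mathrm{diag}(\sqrt{n},\sqrt{nk_n})$, the three distinct entries of $\sum_t \boldsymbol{X}_t \boldsymbol{X}_t^{\top}$ are $1$, $(n\sqrt{k_n})^{-1}\sum_t y_{t-1}$ and $(nk_n)^{-1}\sum_t y_{t-1}^2$. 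The $(1,1)$ entry is exactly $1$; the $(1,2)$ entry is $o_p(1)$; and the $(2,2)$ entry converges in probability to $\sigma^2/(-2c)$, because in the near-stationary regime $\rho_{n,c} = 1 + c/k_n$ with $c<0$ the stationary variance satisfies $\sigma^2/(1-\rho_{n,c}^2)\sim k_n \sigma^2/(-2c)$. Hence $\sum_t \boldsymbol{X}_t \boldsymbol{X}_t^{\top} \overset{p}{\to} \boldsymbol{B}$, and the leading quadratic form matches $\tfrac{1}{2}\xi\,\boldsymbol{\delta}(\uptau)^{\top}\boldsymbol{B}\boldsymbol{\delta}(\uptau)$.

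Third, the aggregate remainder $\sum_t R_{nt}$ is controlled using the maximal bound listed after Lemma \ref{lemma1}, namely $\max_{t\leq n} y_t^2/n = o_p(1)$, which implies $\max_{t\leq n}|\boldsymbol{\delta}(\uptau)^{\top}\boldsymbol{X}_t|\to 0$ in probability for fixed $\boldsymbol{\delta}(\uptau)$; combined with $\sum_t (\boldsymbol{\delta}(\uptau)^{\top}\boldsymbol{X}_t)^2 = O_p(1)$ already established, this yields $\sum_t R_{nt} = o_p(1)$. The principal technical obstacle I anticipate is the step $(nk_n)^{-1}\sum_t y_{t-1}^2 \overset{p}{\to} \sigma^2/(-2c)$: for every fixed $n$ the process is only near-stationary, its initialization effect does not wash out at the same rate as for a strictly stationary AR(1), and its autocovariances decay at the slow rate $1-\rho_{n,c}^2 \sim -2c/k_n$, so the argument must track the double asymptotics in which the sample size and the mixing time diverge simultaneously. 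I would handle this by writing $y_t = \sum_{j< t}\rho_{n,c}^{t-1-j}\varepsilon_j$, computing the exact first and second moments of $(nk_n)^{-1}\sum_t y_{t-1}^2$, and applying Chebyshev's inequality under Assumption \ref{Assumption1} to obtain the required convergence in probability.
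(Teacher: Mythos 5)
Your proposal is correct and follows essentially the same route as the paper: Knight's identity for $\varphi_{nt}$, conditional expectation to extract the quadratic form $\tfrac{\xi}{2}\,\boldsymbol{\delta}(\uptau)^{\top}\big(\sum_t \boldsymbol{X}_t\boldsymbol{X}_t^{\top}\big)\boldsymbol{\delta}(\uptau)$, and then convergence of the normalized sample moments to $\boldsymbol{B}$. The differences are at the level of sub-steps. You spell out the Taylor expansion $g(s)=\mathbb{E}[\psi(\varepsilon_1-s)]=-\xi s+o(s)$ and the remainder control via $\max_{t\leq n} y_t^2/n=o_p(1)$, which the paper compresses into an ``it can be proved that'' statement, so your write-up is actually more explicit there. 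Conversely, for the $(2,2)$ entry the paper simply cites Theorem 3.2(a) of Phillips and Magdalinos (2007) to get $(nk_n)^{-1}\sum_t y_{t-1}^2 \overset{p}{\to} \sigma^2/(-2c)$, whereas you propose re-deriving it by the moving-average representation, exact first and second moments, and Chebyshev; that is the standard proof of the cited result, so it buys self-containedness at the cost of length, and your identification of the double asymptotics (mixing time $k_n$ versus sample size $n$) as the delicate point is exactly right. The one place where the paper does more than you is the cross term: you assert that $(n\sqrt{k_n})^{-1}\sum_t y_{t-1}=o_p(1)$, while the paper proves it by a short telescoping argument, writing $(1-\rho_n)y_{t-1}=y_{t-1}-(y_t-u_t)$ so that the sum collapses to $\tfrac{\sqrt{k_n}}{nc}\big(y_0-y_n+\sum_t u_t\big)$, which is $o_p(1)$ since $y_n=O_p(\sqrt{k_n})$, $\sum_t u_t=O_p(\sqrt{n})$ and $k_n=o(n)$; you should supply this (or an equivalent moment bound) to close that small gap.
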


\bigskip

\newpage 

\begin{proof} 
By rearranging expression \eqref{B.approx}, taking the conditional expectation and sum over $1 \leq t \leq n$, we obtain the following expression  
\begin{align}
\sum_{t=1}^n \mathbb{E}_{\mathcal{F}_{t-1}} \big[ \varphi_{nt} ( \boldsymbol{\delta} ) \big] 
=
\sum_{t=1}^n \mathbb{E}_{\mathcal{F}_{t-1}} \bigg[ \varrho_{\uptau} \left( \varepsilon_t - \boldsymbol{\delta}( \uptau )^{\top} \boldsymbol{X}_t \right) - \varrho_{\uptau} \left( \varepsilon_t \right) \bigg]. 
\end{align}

Using \cite{knight1998limiting}'s identity we have that 
\begin{align}
\varrho_{\uptau} ( \mathsf{u}_1 - \mathsf{u}_2 ) - \varrho_{\uptau} ( \mathsf{u}_1 ) 
= 
\mathsf{u}_2 \big( \uptau - \mathbf{1} \left\{ \mathsf{u}_1 \leq 0 \right\} \big) + \mathsf{u}_2 \int_0^1 \bigg[ \mathbf{1} \big\{ \mathsf{u}_1 \leq \mathsf{u}_2 s \big\}  - \mathbf{1} \big\{ \mathsf{u}_1 \leq 0 \big\} \bigg] ds.
\end{align}
which implies that we can decompose $Z_n( \mathsf{u}, \uptau ) = Z_n^{(1)} ( \mathsf{u}, \uptau ) + Z_n^{(2)} ( \mathsf{u}, \uptau )$. 
Then, it can be proved that 
\begin{align}
\sum_{t=1}^n \mathbb{E}_{\mathcal{F}_{t-1}} \bigg[ \varphi_{nt} ( \boldsymbol{\delta} ) \big) \bigg] 
=
\frac{1}{2} \xi  \times \boldsymbol{ \delta }^{\top} \left( \sum_{t=1}^n \boldsymbol{X}_t \boldsymbol{X}_t^{\top} \right) \boldsymbol{ \delta }  + o_p(1).
\end{align}
Furthermore, by Theorem 3.2 (a) of \cite{Phillips2007limit} it holds that 
\begin{align}
\frac{1}{n k_n } \sum_{t=1}^n  y_{t-1}^2 \overset{ p }{ \to } \frac{ \sigma^2 }{ - 2c }.
\end{align}
Therefore, it suffices to show that 
\begin{align}
\frac{1}{n \sqrt{k_n} } \sum_{t=1}^n y_{t-1} = o_p(1).
\end{align}
To see this, we consider the left side of the expression above such that
\begin{align}
\frac{ ( 1 - \rho_n ) }{ n \sqrt{k_n} ( 1 - \rho_n )  } \sum_{t=1}^n y_{t-1} 
= 
\frac{ 1 }{ n \sqrt{k_n} ( 1 - \rho_n )  } \sum_{t=1}^n ( 1 - \rho_n ) y_{t-1} 
= 
\frac{ 1 }{ n \sqrt{k_n} ( 1 - \rho_n )  } \sum_{t=1}^n y_{t-1} - \rho_n y_{t-1}
\end{align}
However, it holds that $y_t = \rho_n y_{t-1} + u_t$, and by rearranging we have that $-\rho_n y_{t-1} = - ( y_t - u_t )$. Thus,  
\begin{align*}
\frac{ ( 1 - \rho_n ) }{ n \sqrt{k_n} ( 1 - \rho_n )  } \sum_{t=1}^n y_{t-1} 
&=
\frac{ 1 }{ n \sqrt{k_n} ( 1 - \rho_n )  }
\sum_{t=1}^n \big[ y_{t-1} - \left( y_t - u_t \right) \big]
\\
&=
\frac{ 1 }{ n \sqrt{k_n} ( c / k_n ) } 
\sum_{t=1}^n \left( y_0 - y_n + \sum_{t=1}^n u_t \right) = o_p(1). 
\end{align*}
which shows that $\frac{ 1 }{ n \sqrt{k_n} } \sum_{t=1}^n y_{t-1} \overset{ p }{ \to } 0$, converges in probability to zero. 
\end{proof}

\medskip

\begin{lemma}
\label{lemma3}
Under Assumptions \ref{Assumption0}-\ref{Assumption3}, it holds that
\begin{align}
\sum_{t=1}^n \varphi_{nt} \big( \boldsymbol{\delta}( \uptau ) \big) \overset{ p }{ \to } \frac{1}{2} \xi  \times \boldsymbol{\delta} ( \uptau )^{\top} \boldsymbol{B} \boldsymbol{\delta} ( \uptau ).
\end{align}
\end{lemma}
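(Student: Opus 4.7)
The plan is to reduce Lemma \ref{lemma3} to Lemma \ref{lemma2} by showing that the centered sum
$$\Delta_n := \sum_{t=1}^n \Big\{ \varphi_{nt}(\boldsymbol{\delta}(\uptau)) - \mathbb{E}_{\mathcal{F}_{t-1}}\big[\varphi_{nt}(\boldsymbol{\delta}(\uptau))\big] \Big\}$$
is $o_p(1)$. Since $\{\varphi_{nt}-\mathbb{E}_{\mathcal{F}_{t-1}}[\varphi_{nt}]\}_{t\ge 1}$ is a martingale difference array with respect to $(\mathcal{F}_t)$, orthogonality lets me bound $\Delta_n$ in $L^2$ and then invoke Chebyshev's inequality. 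Combined with Lemma \ref{lemma2} this yields the stated limit.

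Using Knight's identity exactly as in the proof of Lemma \ref{lemma2}, I would rewrite
$$\varphi_{nt}(\boldsymbol{\delta}(\uptau)) = \boldsymbol{\delta}(\uptau)^{\top}\boldsymbol{X}_t \int_0^1 \Big[\mathbf{1}\big\{\varepsilon_t \leq \boldsymbol{\delta}(\uptau)^{\top}\boldsymbol{X}_t\, s\big\} - \mathbf{1}\big\{\varepsilon_t \leq 0\big\}\Big] ds.$$
The integrand is bounded in absolute value by $\mathbf{1}\{|\varepsilon_t|\leq |\boldsymbol{\delta}(\uptau)^{\top}\boldsymbol{X}_t|\}$, giving the pointwise estimate $\varphi_{nt}^2 \leq \big(\boldsymbol{\delta}(\uptau)^{\top}\boldsymbol{X}_t\big)^{2}\mathbf{1}\{|\varepsilon_t|\leq |\boldsymbol{\delta}(\uptau)^{\top}\boldsymbol{X}_t|\}$. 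Assumption \ref{Assumption3}(\textit{ii}) states that the conditional density $f_{\varepsilon_t(\uptau),t-1}$ is uniformly bounded, say by $\bar f < \infty$, so
$$\mathbb{E}_{\mathcal{F}_{t-1}}[\varphi_{nt}^{2}] \leq 2\bar f\, \big|\boldsymbol{\delta}(\uptau)^{\top}\boldsymbol{X}_t\big|^{3}.$$

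By martingale-difference orthogonality,
$$\mathbb{E}[\Delta_n^{2}] = \sum_{t=1}^{n}\mathbb{E}\big[\mathsf{Var}_{\mathcal{F}_{t-1}}(\varphi_{nt})\big] \leq 2\bar f \sum_{t=1}^{n}\mathbb{E}\big|\boldsymbol{\delta}(\uptau)^{\top}\boldsymbol{X}_t\big|^{3}.$$
With $\boldsymbol{X}_t = \big(n^{-1/2},\, y_{t-1}/\sqrt{nk_n}\big)^{\top}$, the right-hand side decomposes into an intercept contribution of order $n\cdot n^{-3/2}=n^{-1/2}$ and a slope contribution proportional to $(nk_n)^{-3/2}\sum_{t=1}^{n}\mathbb{E}|y_{t-1}|^{3}$. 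Under the near-stationary regime $c<0$ the moment bound $\mathbb{E}[y_t^{2}]=O(k_n)$ (already used above Lemma \ref{lemma2}) combined with the $(2+m)$-moment integrability of Assumption \ref{Assumption1} upgrades to $\mathbb{E}|y_t|^{3}=O(k_n^{3/2})$ via a recursion on the stochastic difference equation, so the slope contribution is also $O(n^{-1/2})$. Hence $\mathbb{E}[\Delta_n^{2}]\to 0$, Chebyshev's inequality delivers $\Delta_n=o_p(1)$, and Lemma \ref{lemma2} finishes the argument.

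The main obstacle is producing the $L^{3}$ (rather than $L^{2}$) bound on $y_t$ uniformly in $t$; the $L^{2}$ bound alone is insufficient, and in the stable region this upgrade relies on the higher-moment assumption on the innovations together with a recursive argument analogous to Theorem 3.2 of \cite{Phillips2007limit}. Should the cubic moment bound prove delicate, a standard fallback is to truncate $\varepsilon_t$ at a slowly growing threshold and control the complement through the uniform integrability condition in Assumption \ref{Assumption0}; on the truncated event the same variance computation applies, while the untruncated contribution is negligible. Either route reduces the lemma to a routine Chebyshev calculation once combined with Lemma \ref{lemma2}.
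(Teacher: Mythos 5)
Your overall plan coincides with the paper's: reduce the lemma to Lemma \ref{lemma2} by showing that the centered sum $\sum_{t=1}^n \big( \varphi_{nt}(\boldsymbol{\delta}(\uptau)) - \mathbb{E}_{\mathcal{F}_{t-1}}[\varphi_{nt}(\boldsymbol{\delta}(\uptau))] \big)$ is negligible, exploiting the martingale-difference structure, and your pointwise bound $\mathbb{E}_{\mathcal{F}_{t-1}}[\varphi_{nt}^2]\leq 2\bar f\,|\boldsymbol{\delta}(\uptau)^{\top}\boldsymbol{X}_t|^{3}$ is fine. The genuine gap is exactly the one you flag and then do not close: your unconditional Chebyshev argument needs $\sum_{t=1}^n \mathbb{E}|\boldsymbol{\delta}^{\top}\boldsymbol{X}_t|^3$, hence $\mathbb{E}|y_{t-1}|^{3}=O(k_n^{3/2})$ uniformly in $t$, which requires $\mathbb{E}|\varepsilon_1|^{3}<\infty$; Assumption \ref{Assumption1} only guarantees $\mathbb{E}|\varepsilon_1|^{2+m}<\infty$ for \emph{some} $m>0$, possibly with $m<1$. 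Your fallback of truncating $\varepsilon_t$ at a slowly growing threshold $c_n$ does not repair this for all $\gamma\in(0,1)$: controlling the complement event requires $n\,\mathbb{P}(|\varepsilon_1|>c_n)\to 0$, i.e.\ $c_n\gg n^{1/(2+m)}$, while keeping the truncated third-moment contribution negligible forces $c_n\ll n^{(1-\gamma)/2}$; these two requirements are incompatible unless $\gamma< m/(2+m)$, so for small $m$ the argument breaks down on most of the parameter range.

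The paper avoids any moment upgrade by localizing on $y_{t-1}$ rather than on $\varepsilon_t$: it works on the events $\mathcal{B}_t(\lambda)=\{y_{t-1}^2/n\leq\lambda\}$, which is asymptotically harmless because $\max_{1\leq t\leq n} y_t^2/n=o_p(1)$, applies Lenglart's inequality to the truncated martingale difference array (so only the predictable quadratic variation $\sum_t \mathbb{E}_{\mathcal{F}_{t-1}}[\varphi_{nt}^2\mathbf{1}\{\mathcal{B}_t(\lambda)\}]$ needs to be small \emph{in probability}, not in expectation), and then factors this quantity as a uniformly small term times $\sum_t \boldsymbol{\delta}^{\top}\boldsymbol{X}_t\boldsymbol{X}_t^{\top}\boldsymbol{\delta}=O_p(1)$, since on $\mathcal{B}_t(\lambda)$ one has $|\boldsymbol{\delta}^{\top}\boldsymbol{X}_t|\lesssim \sqrt{\lambda/k_n}\to 0$. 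If you prefer to stay within your unconditional framework, an alternative repair is to interpolate your bound, using $\min\{1,x\}\leq x^{m}$ for $m\in(0,1]$, to get $\mathbb{E}_{\mathcal{F}_{t-1}}[\varphi_{nt}^2]\leq C|\boldsymbol{\delta}^{\top}\boldsymbol{X}_t|^{2+m}$, which only needs $\mathbb{E}|y_{t-1}|^{2+m}=O(k_n^{(2+m)/2})$ and yields a slope contribution of order $n^{-m/2}$; but as written, the cubic-moment step is not justified by the stated assumptions and the proposed fallback does not rescue it.
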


\newpage 

\begin{proof}
By Lemma \ref{lemma2}, we can show that 
\begin{align}
\label{exprA}
\sum_{t=1}^n \bigg( \varphi_{nt} \big( \boldsymbol{\delta}( \uptau ) \big) - \mathbb{E}_{t-1} \bigg[ \varphi_{nt} \big( \boldsymbol{\delta}( \uptau ) \big)  \bigg]  \bigg) = o_p(1). 
\end{align}
Define the set $\mathcal{B}_t ( \lambda ) := \big\{ \frac{1}{n} y_{t-1}^2 \leq \lambda  \big\}$ for some positive $\lambda \in \mathbb{R}$. Then, \eqref{exprA} becomes as below
\begin{align}
\sum_{t=1}^n \bigg( \varphi_{nt} ( \boldsymbol{\delta} ) \mathbf{1} \big\{ \mathcal{B}_t ( \lambda )  \big\} - \mathbb{E}_{t-1} \bigg[ \varphi_{nt} ( \boldsymbol{\delta} ) \mathbf{1} \big\{ \mathcal{B}_t ( \lambda )  \big\}  \big) \bigg]  \bigg) = o_p(1). 
\end{align}

In particular, the expression $\bigg\{ \varphi_{nt} ( \boldsymbol{\delta} ) \mathbf{1} \big\{ \mathcal{B}_t ( \lambda )  \big\}  - \mathbb{E}_{t-1} \bigg[ \varphi_{nt} ( \boldsymbol{\delta} ) \mathbf{1} \big\{ \mathcal{B}_t ( \lambda )  \big\}  \bigg] \bigg\}_{t = 1}^n$ forms a martingale difference sequence, which by the Lenglart's inequality (see, \cite{jacod2003limit}) it follows that
\begin{align}
\sum_{t=1}^n \mathbb{E}_{t-1} \bigg[ \varphi_{nt}^2  ( \boldsymbol{\delta} ) \mathbf{1} \big\{ \mathcal{B}_t ( \lambda ) \big\} \bigg] = o_p(1). 
\end{align}  
Therefore, by definition of $\varphi_{nt}$ we have that 
\begin{align}
\varphi_{nt} ( \boldsymbol{\delta} ) = \boldsymbol{\delta}^{\top} \boldsymbol{X}_t \int_0^1 \bigg(  \psi ( \tilde{\varepsilon}_t ) - \psi \left( \tilde{\varepsilon}_t - \varepsilon_t \boldsymbol{\delta}^{\top} \boldsymbol{X}_t \right)  \bigg) d \varepsilon.  
\end{align}
Furthermore, by the non-decreasing property of $\psi(x)$ it holds that 
\begin{align}
\sum_{t=1}^n \mathbb{E}_{t-1} \bigg[ \varphi_{nt}^2  ( \boldsymbol{\delta} ) \mathbf{1} \big\{ \mathcal{B}_t ( \lambda ) \big\} \bigg] \leq \underset{ 1 \leq t \leq n }{ \mathsf{max} } \mathbb{E}_{t-1} \bigg[ \varphi_{nt}^2  ( \boldsymbol{\delta} ) \mathbf{1} \big\{ \mathcal{B}_t ( \lambda ) \big\} \bigg] \times \left( \sum_{t=1}^n \boldsymbol{\delta}^{\top} \boldsymbol{X}_t \boldsymbol{X}_t^{\top} \boldsymbol{\delta} \right) = o_p(1).
\end{align}
\end{proof}

\begin{corollary}
\begin{align}
\sum_{t=1}^n \bigg( \rho_{\uptau}  \left( \tilde{\varepsilon}_t - \boldsymbol{\delta}^{\top} \boldsymbol{X}_t \right) - \rho_{\uptau} \left( \varepsilon_t \right) \bigg) 
=
- \boldsymbol{\delta}^{\top} \sum_{t=1}^n \boldsymbol{X}_t \psi \left(  \varepsilon_t \right) + \frac{1}{2} \xi \boldsymbol{\delta}^{\top} \boldsymbol{B}   \boldsymbol{\delta} + R_n \left( \boldsymbol{\delta} \right)
\end{align}
with $R_n ( \boldsymbol{\delta} ) = o_p(1)$ for a fixed parameter vector $\boldsymbol{\delta}$ and $\underset{ \norm{  \boldsymbol{\delta} }  \leq C }{ \mathsf{max} } R_n( \boldsymbol{\delta} ) = o_p(1)$. 
\end{corollary}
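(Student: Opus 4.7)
The plan is to derive the Corollary as an immediate consequence of the Taylor--Knight expansion in equation (B.approx), Lemma \ref{lemma3}, and a Pollard-type convexity argument that upgrades pointwise stochastic convergence to uniform convergence on the compact ball $\{ \boldsymbol{\delta} : \norm{ \boldsymbol{\delta} } \leq C \}$.

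First, I would sum the per-$t$ expansion
\begin{align*}
\varrho_{\uptau}( \varepsilon_t - \boldsymbol{\delta}^{\top} \boldsymbol{X}_t ) - \varrho_{\uptau}( \varepsilon_t ) = - \boldsymbol{\delta}^{\top} \boldsymbol{X}_t \, \psi( \varepsilon_t ) + \varphi_{nt}( \boldsymbol{\delta} )
\end{align*}
over $t = 1, \ldots, n$ to produce the exact algebraic identity
\begin{align*}
\sum_{t=1}^n \Big[ \varrho_{\uptau}( \varepsilon_t - \boldsymbol{\delta}^{\top} \boldsymbol{X}_t ) - \varrho_{\uptau}( \varepsilon_t ) \Big] = - \boldsymbol{\delta}^{\top} \sum_{t=1}^n \boldsymbol{X}_t \, \psi( \varepsilon_t ) + \sum_{t=1}^n \varphi_{nt}( \boldsymbol{\delta} ).
\end{align*}
Defining the residual by $R_n( \boldsymbol{\delta} ) := \sum_{t=1}^n \varphi_{nt}( \boldsymbol{\delta} ) - \tfrac{1}{2} \xi \, \boldsymbol{\delta}^{\top} \boldsymbol{B} \boldsymbol{\delta}$ reproduces the claimed three-term decomposition term-for-term, and Lemma \ref{lemma3} immediately delivers the pointwise conclusion $R_n( \boldsymbol{\delta} ) = o_p(1)$ for every fixed $\boldsymbol{\delta}$.

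The second step is the uniform statement $\max_{ \norm{ \boldsymbol{\delta} } \leq C } | R_n( \boldsymbol{\delta} ) | = o_p(1)$. My plan is to exploit convexity in $\boldsymbol{\delta}$: the sample objective $\boldsymbol{\delta} \mapsto \sum_t \varrho_{\uptau}( \varepsilon_t - \boldsymbol{\delta}^{\top} \boldsymbol{X}_t )$ is convex because $\varrho_{\uptau}$ is convex composed with an affine map, and subtracting the $\boldsymbol{\delta}$-linear term $-\boldsymbol{\delta}^{\top} \sum_t \boldsymbol{X}_t \psi( \varepsilon_t )$ preserves convexity, so $\sum_t \varphi_{nt}( \boldsymbol{\delta} )$ is convex, as is the quadratic limit $\tfrac{1}{2} \xi \, \boldsymbol{\delta}^{\top} \boldsymbol{B} \boldsymbol{\delta}$ (since $\boldsymbol{B}$ is positive definite and $\xi > 0$). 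Pointwise convergence of convex random functions to a continuous convex limit on a countable dense subset of the compact ball upgrades automatically to uniform convergence on that ball, by the convexity lemma of Pollard (1991) already invoked in the remark following Theorem \ref{theorem2}. Concretely, I would pick a countable dense sequence $\{ \boldsymbol{\delta}_k \} \subset \{ \norm{ \boldsymbol{\delta} } \leq C \}$, apply Lemma \ref{lemma3} at each $\boldsymbol{\delta}_k$ through a countable union of null events, and close by the convexity lemma to transfer pointwise to uniform convergence.

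The main obstacle I anticipate is verifying that the truncation employed inside Lemma \ref{lemma3}, localised on the event $\mathcal{B}_t( \lambda ) = \{ n^{-1} y_{t-1}^2 \leq \lambda \}$, can be asymptotically removed uniformly over $\boldsymbol{\delta}$ on the compact ball. This relies on the bound $\max_{1 \leq t \leq n} y_{t-1}^2 / n = o_p(1)$ stated after Lemma \ref{lemma1}, which is $\boldsymbol{\delta}$-free and therefore factors cleanly out of the supremum, combined with the martingale difference structure of $\varphi_{nt}( \boldsymbol{\delta} ) - \mathbb{E}_{ \mathcal{F}_{t-1} }[ \varphi_{nt}( \boldsymbol{\delta} ) ]$ and the Lenglart-type bound already deployed in the proof of Lemma \ref{lemma3}; these together yield the required uniform remainder control.
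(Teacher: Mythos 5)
Your proposal is correct and follows essentially the same route as the paper: the exact decomposition comes from summing the expansion in \eqref{B.approx}, the pointwise statement $R_n(\boldsymbol{\delta}) = o_p(1)$ is exactly Lemma \ref{lemma3}, and the uniform statement over $\{\norm{\boldsymbol{\delta}} \leq C\}$ is obtained via the convexity lemma of \cite{pollard1991asymptotics}, which is precisely what the paper invokes (its proof simply cites convexity of $\varrho_{\uptau}$ and Pollard's Theorem~1 argument before passing to the Bahadur-type representation of the minimizer). Your write-up is in fact more explicit than the paper's terse proof, and your remark about the truncation on $\mathcal{B}_t(\lambda)$ is not strictly needed once uniformity is delivered by convexity, since that truncation lives entirely inside the proof of Lemma \ref{lemma3}.
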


\begin{proof}
Notice that the function $\varrho ( \mathsf{u} )$ is convex, therefore we can apply the same argument as that in the proof of Theorem 1 in \cite{pollard1991asymptotics} and show that an equivalent solution to the optimization problem is given by the following expression   
\begin{align}
\hat{\gamma} = \sum_{t=1}^n \frac{1}{\xi} \boldsymbol{B}^{-1} \psi ( \varepsilon_t ) \boldsymbol{X}_t^{\top} + o_p(1).
\end{align}
\end{proof}

\newpage

\subsection{Near-Explosive Case $( c > 0 )$}

\medskip

\begin{lemma}
\label{lemma4}
Consider that $y_1,...,y_n$ are random variables generated from the autoregressive process. Then, when $c > 0$ it holds that
\begin{align}
\mathbb{E} \big[ y_n^2 \big] = o \left( \rho_n^{2n} k_n^2   \right)
\end{align}
\end{lemma}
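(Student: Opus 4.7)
The plan is to compute $\mathbb{E}[y_n^2]$ directly by solving the autoregressive recursion and evaluating the resulting geometric sum, then comparing with the normalizing rate $\rho_n^{2n} k_n^2$. This is the standard Phillips-Magdalinos strategy for second moments of mildly explosive processes and it relies only on Assumption \ref{Assumption0} (the zero initial condition $y_{0n} = 0$) together with the variance and independence properties of the innovation sequence.

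First, I would iterate the recursion. Starting from $y_0 = 0$ and working with the mean-zero specification $y_t = \rho_{n,c} y_{t-1} + \varepsilon_t$ consistent with the explosive framework (any non-zero intercept term contributes exactly at the boundary rate $k_n^2 \rho_n^{2n}$ and would have to be ruled out by convention, as in \cite{Phillips2007limit}), one obtains the explicit solution
\begin{equation*}
y_n \;=\; \sum_{t=1}^{n} \rho_{n,c}^{\,n-t}\,\varepsilon_t .
\end{equation*}
Because the $\varepsilon_t$ are i.i.d.\ with mean zero and variance $\sigma_\varepsilon^2 < \infty$ (Assumption \ref{Assumption1}), the cross terms vanish and the second moment reduces to a pure geometric sum
\begin{equation*}
\mathbb{E}\bigl[y_n^2\bigr]
\;=\; \sigma_\varepsilon^2 \sum_{j=0}^{n-1} \rho_{n,c}^{\,2j}
\;=\; \sigma_\varepsilon^2 \cdot \frac{\rho_{n,c}^{\,2n} - 1}{\rho_{n,c}^{\,2} - 1}.
\end{equation*}

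Second, I would Taylor-expand the denominator. With $\rho_{n,c} = 1 + c/k_n$, a direct computation gives $\rho_{n,c}^{\,2} - 1 = 2c/k_n + c^2/k_n^2 = (2c/k_n)\bigl(1 + O(1/k_n)\bigr)$. Since $c>0$ and $n/k_n = n^{1-\gamma}\to \infty$, one also has $\rho_{n,c}^{\,2n}\to\infty$, so $\rho_{n,c}^{\,2n} - 1 \sim \rho_{n,c}^{\,2n}$. Substituting both expansions yields the sharp asymptotic
\begin{equation*}
\mathbb{E}\bigl[y_n^2\bigr]
\;=\; \frac{\sigma_\varepsilon^2}{2c}\, k_n\, \rho_{n,c}^{\,2n}\,\bigl(1 + o(1)\bigr).
\end{equation*}
Dividing through by $k_n^2 \rho_{n,c}^{\,2n}$ gives a quantity of order $1/k_n = n^{-\gamma}\to 0$, which is exactly the claim $\mathbb{E}[y_n^2] = o\!\left(\rho_{n,c}^{\,2n} k_n^2\right)$.

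The only genuine obstacle is a bookkeeping one: one must verify that the extra factor $k_n$ in the normalization (relative to the true order $k_n\rho_{n,c}^{\,2n}$) really is gained from the denominator $\rho_{n,c}^{2}-1 \sim 2c/k_n$ and not lost in a careless expansion of $\rho_{n,c}^{\,2n}$. Keeping track of $1+o(1)$ factors in the geometric-series step, and recording that $\rho_{n,c}^{\,2n}/(\rho_{n,c}^{\,2n}-1) = 1 + o(1)$ in the mildly explosive regime, is sufficient; no probabilistic machinery beyond the variance of the innovations is needed.
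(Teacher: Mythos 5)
Your computation is correct and is exactly the standard Phillips--Magdalinos second-moment argument: iterating the recursion from $y_0=0$, using uncorrelatedness of the innovations, and evaluating the geometric sum gives $\mathbb{E}[y_n^2]=\frac{\sigma_\varepsilon^2}{2c}k_n\rho_{n,c}^{2n}\bigl(1+o(1)\bigr)$, which is in fact a sharper statement than the claimed $o\!\left(\rho_{n,c}^{2n}k_n^2\right)$ since $k_n\to\infty$. The paper states Lemma \ref{lemma4} without giving a proof (it is effectively imported from \cite{Phillips2007limit}), so there is no alternative argument to compare against; your route is the natural one and fills that gap. Your caveat about the intercept is also well taken and consistent with the paper: a nonzero $\mu$ would add a deterministic component of exact order $k_n^2\rho_{n,c}^{2n}$, so the lemma must be read for the no-intercept recursion $y_t=\rho_{n,c}y_{t-1}+\varepsilon_t$, which is precisely the form the paper uses in its near-explosive appendix derivations.
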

In addition to Lemma \ref{lemma4} the following two results hold
\begin{align}
\underset{ 1 \leq t \leq n }{ \mathsf{max} } \left\{ \frac{ y_t^2 }{ \rho_{n, c}^{2n} k_n^2 } \right\} 
&= o_p(1)
\\
\frac{ 1  }{ \sqrt{n} \rho_{n, c}^{n} k_n }  \sum_{t=1}^n y_{t-1} 
&= o_p(1).
\end{align}

\begin{lemma}
We consider the following two joint convergence results 
\begin{itemize}

\item[(\textit{i}).] Under Assumptions \ref{Assumption0}-\ref{Assumption3} it holds that 
\begin{align}
\left( \frac{1}{\sqrt{n}} \sum_{t=1}^n \psi_{\uptau}  \big( \varepsilon_t \big), \frac{1}{ \sqrt{k_n} } \sum_{t=1}^n \rho_n^{t - (n-1)} \psi_{\uptau}  \big( \varepsilon_t \big), \frac{1}{\sqrt{n}} \sum_{t=1}^n \rho_n^{-t} \varepsilon_t  \right)^{\top} \overset{ d }{ \to } \bigg( \mathcal{Z}_1, \mathcal{Z}_2, \mathcal{Z}_3  \bigg)^{\top}.
\end{align}
where $\bigg( \mathcal{Z}_1, \mathcal{Z}_2, \mathcal{Z}_3 \bigg)$ is a Gaussian random vector with independent components and the finite variance terms given by  $\mathbb{E} \left[ \psi_{\uptau}^2 ( \varepsilon_1 ) \right]$, $\frac{1}{2c} \mathbb{E} \left[ \psi_{\uptau}^2 ( \varepsilon_1 ) \right]$ and $\sigma^2 \big/ (2c)$, respectively. 

\item[(\textit{ii}).] Under Assumptions \ref{Assumption0}-\ref{Assumption3} it holds that 
 \begin{align}
\left( \frac{1}{\sqrt{n}} \sum_{t=1}^n \psi_{\uptau}  \big( \varepsilon_t \big), \frac{1}{ \rho_{n,c}^n k_n } \sum_{t=1}^n y_{t-1} \psi_{\uptau}  \big( \varepsilon_t \big), \frac{1}{ \rho_{n,c}^{2n} k_n^2 } \sum_{t=1}^n y_{t-1}^2 \right)^{\top} \overset{ d }{ \to } \bigg( \mathcal{Z}_1, \mathcal{Z}_2 \mathcal{Z}_3, \frac{ \mathcal{Z}_3^2 }{2c}  \bigg)^{\top}.
\end{align}
\end{itemize}
\end{lemma}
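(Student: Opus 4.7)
The plan for part (i) is to apply a multivariate martingale central limit theorem to the three partial-sum coordinates displayed on the left-hand side, since each summand is a martingale difference with respect to $\mathcal{F}_t$. The main step reduces to (a) verifying that the corresponding $3\times 3$ conditional covariance matrix converges in probability to the diagonal matrix with entries $\mathbb{E}[\psi_\uptau^2(\varepsilon_1)]$, $\mathbb{E}[\psi_\uptau^2(\varepsilon_1)]/(2c)$ and $\sigma^2/(2c)$, and (b) checking a conditional Lindeberg condition. The diagonals would follow from the elementary geometric-sum identities combined with $\rho_{n,c}^2-1 = 2c/k_n + O(k_n^{-2})$ and $\rho_{n,c}^n\to\infty$. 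For the off-diagonals, each cross-normalization introduces a multiplicative factor of order $\sqrt{k_n/n}$ or $\rho_{n,c}^{-(n-1)}$, each vanishing in the limit, so the three coordinates are asymptotically uncorrelated and, being jointly Gaussian in the limit, asymptotically independent. The Lindeberg step is routine given the $(2+m)$-moment bound in Assumption \ref{Assumption1}.

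Part (ii) would be derived from part (i) via the factorization $y_t = \rho_{n,c}^t\, \widetilde{Y}_t$ with $\widetilde{Y}_t := \sum_{s=1}^t \rho_{n,c}^{-s}\varepsilon_s$, noting that the third coordinate of part (i) is precisely a normalization of $\widetilde{Y}_n$ converging to $\mathcal{Z}_3$. The target cross-moment sum rewrites as
\[
\frac{1}{\rho_{n,c}^n k_n}\sum_{t=1}^n y_{t-1}\psi_\uptau(\varepsilon_t) = \frac{1}{\rho_{n,c}^n k_n}\sum_{t=1}^n \rho_{n,c}^{t-1}\widetilde{Y}_{t-1}\psi_\uptau(\varepsilon_t),
\]
and the heart of the argument is to replace $\widetilde{Y}_{t-1}$ by the terminal value $\widetilde{Y}_n$. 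After this substitution the right-hand side factors, up to multiplicative factors of $\rho_{n,c}\to 1$, as the product of a normalized $\widetilde{Y}_n$ and $k_n^{-1/2}\sum_{t=1}^n \rho_{n,c}^{t-(n-1)}\psi_\uptau(\varepsilon_t)$, which converge jointly by part (i) to the independent pair $(\mathcal{Z}_3,\mathcal{Z}_2)$; the continuous mapping theorem then delivers the product $\mathcal{Z}_2\mathcal{Z}_3$. A parallel substitution in $\sum_{t=1}^n y_{t-1}^2 = \sum_{t=1}^n \rho_{n,c}^{2(t-1)}\widetilde{Y}_{t-1}^2$ collapses the sum to $\widetilde{Y}_n^2(\rho_{n,c}^{2n}-1)/(\rho_{n,c}^2-1)$, producing the limit $\mathcal{Z}_3^2/(2c)$ after the $\rho_{n,c}^{2n}k_n^2$ normalization. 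The first coordinate of part (ii) coincides with the first coordinate of part (i), so joint convergence of the full triple follows from the continuous mapping theorem applied to the joint limit obtained in part (i).

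The main technical step, and the principal obstacle, is rigorously justifying the substitution $\widetilde{Y}_{t-1}\mapsto\widetilde{Y}_n$ inside both sums. Writing $\widetilde{Y}_n - \widetilde{Y}_{t-1} = \sum_{s=t}^n \rho_{n,c}^{-s}\varepsilon_s$, the resulting remainder for the cross-moment sum is a bilinear form in the innovations; I would split it into the diagonal ($s=t$) part, whose mean is a multiple of $\mathbb{E}[\varepsilon_1\psi_\uptau(\varepsilon_1)]$ and which is $o_p(1)$ because $\rho_{n,c}^n$ dominates $n/k_n$ in rate, and the off-diagonal ($s>t$) part, which forms a martingale in $s$ whose predictable variance telescopes via the same geometric identities used in part (i) and is thus $o_p(1)$ after normalization. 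The analogous remainder for $\sum y_{t-1}^2$ is controlled through a cross-product bound on $\widetilde{Y}_{t-1}(\widetilde{Y}_n - \widetilde{Y}_{t-1})$, exploiting the $L^2$-boundedness of $\widetilde{Y}_t$ at scale $\sqrt{k_n}$ uniformly in $t$. The intuition underpinning all these bounds is that the geometric weights $\rho_{n,c}^{t-n}$ concentrate virtually all mass in a window of width $O(k_n)$ at the right endpoint, where the tail martingale $\widetilde{Y}_n - \widetilde{Y}_{t-1}$ is much smaller than $\widetilde{Y}_n$ itself; once both remainders are shown to be $o_p(1)$, Slutsky's theorem and the continuous mapping theorem close out both parts.
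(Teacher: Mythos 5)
Your proposal is correct and follows essentially the same route as the paper: part (i) via a weighted martingale/Lindeberg CLT with vanishing cross-covariances, and part (ii) by writing $y_{t-1}=\rho_{n,c}^{t-1}\widetilde{Y}_{t-1}$, replacing $\widetilde{Y}_{t-1}$ by $\widetilde{Y}_n$ so the sums factor into the part (i) coordinates, and closing with Slutsky and continuous mapping. Your diagonal/off-diagonal variance treatment of the substitution remainder is simply a more explicit version of the paper's one-line negligibility claim for the tail sum $\sum_{j=t}^{n}\rho_{n,c}^{t-1-j}\tilde{\varepsilon}_j$, so no substantive difference in approach remains.
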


\begin{proof}
Recall that the difference equation with no model intercept, that is, $y_t = \rho_{n,c} y_{t-1} + \tilde{\varepsilon}_t$ has a general solution of the form $y_t = \rho^t_{n,c} y_0 + \sum_{j=1}^{t} \rho_{n,c}^{t-j} \tilde{\varepsilon}_{j}$. Similarly, for $y_{t-1}$ by shifting the time index such that  $t \mapsto t-1 $, then the equivalent general solution is given by the following expression 
\begin{align}
y_{t-1} = \rho^{t-1}_{n,c} y_0 + \sum_{j=1}^{t-1} \rho_{n,c}^{t-1-j} \tilde{\varepsilon}_{j}
\end{align}
Thus, by substituting the above expression to the sample moment $\sum_{t=1}^n y_{t-1} \psi_{\uptau}  \big( \varepsilon_t \big)$ we obtain that 
\begin{align}
\frac{1}{ \rho_{n,c}^n k_n } \sum_{t=1}^n y_{t-1} \psi_{\uptau}  \big( \varepsilon_t \big) 
&=
\frac{ y_0 }{ \rho_{n,c}^n k_n } \sum_{t=1}^n \rho_{n,c}^{t-1}  \psi_{\uptau} \big( \varepsilon_t \big) 
+ 
\frac{ 1 }{ \rho_{n,c}^n k_n } \sum_{t=1}^n \left( \sum_{j=1}^{t-1} \rho_{n,c}^{t-j-1} \tilde{\varepsilon}_{j} \right) \psi_{\uptau} \big( \varepsilon_t \big)
\end{align}
Then, since the first term of the above expression is asymptotically negligible by splitting the inner summation of the last term we obtain that 
\begin{align*}
\frac{1}{ \rho_{n,c}^n k_n } \sum_{t=1}^n y_{t-1} \psi_{\uptau}  \big( \varepsilon_t \big) 
&=
\frac{ 1 }{ \rho_{n,c}^n k_n } \sum_{t=1}^n \left( \sum_{j=1}^{n} \rho_{n,c}^{t-1-j} \tilde{\varepsilon}_j  - \sum_{j=t}^{n} \rho_{n,c}^{t-1-j} \tilde{\varepsilon}_j \right) \psi_{\uptau} \big( \varepsilon_t \big) + o_p(1).
\end{align*}
Since, $\sum_{j=t}^{n} \rho_{n,c}^{t-1-1} \tilde{\varepsilon}_j \overset{ p }{ \to } 0$, then it follows that 
\begin{align*}
\frac{1}{ \rho_{n,c}^n k_n } \sum_{t=1}^n y_{t-1} \psi_{\uptau}  \big( \varepsilon_t \big) 
&=
\frac{ 1 }{ \rho_{n,c}^n k_n } \sum_{t=1}^n \left( \sum_{j=1}^{n} \rho_{n,c}^{t-1-j} \tilde{\varepsilon}_j  \right) \psi_{\uptau} \big( \varepsilon_t \big) + o_p(1)
\\
&=
\left( \frac{1}{ \sqrt{ k_n } } \sum_{t=1}^n \rho_{n,c}^{ t-(n+1) } \psi_{\uptau} \big( \varepsilon_t \big) \right) \left(  \frac{1}{ \sqrt{ k_n } }  \sum_{t=1}^{n} \rho_{n,c}^{-t} \tilde{\varepsilon}_t \right) + o_p(1).
\end{align*}
Similarly, it holds that 
\begin{align}
\frac{1}{ \rho_{n,c}^{2n} k_n^2 } \sum_{t=1}^n y_{t-1}^2 
= 
\frac{1}{2c} \left(  \frac{1}{ \sqrt{ k_n } } \sum_{t=1}^{n} \rho_{n,c}^{-t} \tilde{\varepsilon}_t \right) + o_p(1).
\end{align}

\end{proof}

\begin{lemma}
The following joint convergence results hold 

\begin{itemize}

\item[\textit{(i)}.] Under Assumptions above we have that 
\begin{align*}
- \boldsymbol{\vartheta}_n ( \uptau )^{\top} \sum_{t=1}^n \boldsymbol{X}_t \psi_{\uptau} \big( \varepsilon_t \big)  + \frac{ \xi }{2} \boldsymbol{\vartheta}_n ( \uptau )^{\top} \left(  \sum_{t=1}^n \boldsymbol{X}_t \boldsymbol{X}_t^{\top} \right) \boldsymbol{\vartheta}_n ( \uptau )  \overset{ d }{ \to } - \boldsymbol{\vartheta}_n ( \uptau ) \big( \mathcal{Z}_1, \mathcal{Z}_2 \mathcal{Z}_3 \big)^{\top} + \frac{\xi}{2} \boldsymbol{\vartheta}_n ( \uptau )^{\top} \boldsymbol{B} \boldsymbol{\vartheta}_n ( \uptau )
\end{align*}

\item[\textit{(ii)}.] Under Assumptions above we have that 
\begin{align*}
\sum_{t=1}^n \left[ \varrho_{\uptau} \bigg( y_{t} - \boldsymbol{\vartheta}_n ( \uptau )^{\top} \boldsymbol{X}_{t} \bigg) - \varrho_{\uptau} \left( \varepsilon_t \right) \right] 
= 
- \boldsymbol{\vartheta}_n ( \uptau )^{\top} \sum_{t=1}^n \boldsymbol{X}_t \psi_{\uptau} \big( \varepsilon_t \big) + \frac{\xi}{2}  \boldsymbol{\vartheta}_n ( \uptau )^{\top} \left(  \sum_{t=1}^n \boldsymbol{X}_t \boldsymbol{X}_t^{\top} \right) \boldsymbol{\vartheta}_n ( \uptau ) + R_n \bigg( \boldsymbol{\vartheta}_n ( \uptau ) \bigg).
\end{align*} 
with  $R_n \bigg( \boldsymbol{\vartheta}_n ( \uptau ) \bigg) = o_p(1)$ for fixed $\boldsymbol{\vartheta}_n ( \uptau )$ and further  
\end{itemize}

\end{lemma}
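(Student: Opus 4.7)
The plan is to establish part (\textit{i}) by exploiting the block structure of the normalized design and invoking the joint convergence stated in the preceding lemma, and then to derive part (\textit{ii}) by applying Knight's identity together with a predictable quadratic variation argument analogous to Lemma \ref{lemma3}.

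For part (\textit{i}), I would first write the two bilinear forms componentwise using $\boldsymbol{X}_t = \boldsymbol{D}_n^{-1}(1, y_{t-1})^{\top}$, so that
\begin{align*}
\sum_{t=1}^n \boldsymbol{X}_t \psi_{\uptau}(\varepsilon_t) = \boldsymbol{D}_n^{-1} \left( \sum_{t=1}^n \psi_{\uptau}(\varepsilon_t),\; \sum_{t=1}^n y_{t-1}\psi_{\uptau}(\varepsilon_t) \right)^{\top},
\end{align*}
and likewise $\sum_{t=1}^n \boldsymbol{X}_t \boldsymbol{X}_t^{\top} = \boldsymbol{D}_n^{-1}\,\mathbf{M}_n\,\boldsymbol{D}_n^{-1}$ where $\mathbf{M}_n$ has entries $n$, $\sum y_{t-1}$, and $\sum y_{t-1}^2$. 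The choice $\boldsymbol{D}_n = \mathsf{diag}(\sqrt{n},\rho_{n,c}^n k_n)$ is exactly the scaling that makes the two coordinates of $\boldsymbol{D}_n^{-1}\sum_t \boldsymbol{X}_t\psi_{\uptau}(\varepsilon_t)$ converge jointly to $(\mathcal{Z}_1,\mathcal{Z}_2\mathcal{Z}_3)$ by part (\textit{ii}) of the previous lemma. For the matrix $\boldsymbol{D}_n^{-1}\mathbf{M}_n \boldsymbol{D}_n^{-1}$, the diagonal entries converge to $1$ and to $\mathcal{Z}_3^2/(2c)$, matching the definition of $\boldsymbol{B}$, while the off-diagonal term is asymptotically negligible because $(\sqrt{n}\,\rho_{n,c}^n k_n)^{-1}\sum_t y_{t-1}=o_p(1)$ from the auxiliary bound accompanying Lemma \ref{lemma4}. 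The conclusion then follows from the continuous mapping theorem and Slutsky, applied to the fixed linear and quadratic forms in $\boldsymbol{\vartheta}_n(\uptau)$.

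For part (\textit{ii}), I would apply Knight's identity to each term with $a_t = \boldsymbol{\vartheta}_n(\uptau)^{\top}\boldsymbol{X}_t$, giving
\begin{align*}
\varrho_{\uptau}(\varepsilon_t - a_t) - \varrho_{\uptau}(\varepsilon_t) = -a_t\psi_{\uptau}(\varepsilon_t) + \int_0^{a_t}\bigl[\mathbf{1}\{\varepsilon_t \leq s\} - \mathbf{1}\{\varepsilon_t\leq 0\}\bigr]\,ds.
\end{align*}
Summation across $t$ produces the announced score term $-\boldsymbol{\vartheta}_n^{\top}\sum_t \boldsymbol{X}_t\psi_{\uptau}(\varepsilon_t)$, and the work reduces to showing the residual sum equals $\tfrac{\xi}{2}\boldsymbol{\vartheta}_n^{\top}\bigl(\sum_t \boldsymbol{X}_t\boldsymbol{X}_t^{\top}\bigr)\boldsymbol{\vartheta}_n + o_p(1)$. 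Following the near-stationary template of Lemma \ref{lemma3}, I would first pass to conditional expectations: by Assumption \ref{Assumption2} and a first-order Taylor expansion of $s\mapsto F_{\varepsilon}(s)$ around $0$, $\mathbb{E}_{t-1}\!\bigl[\int_0^{a_t}(\mathbf{1}\{\varepsilon_t\leq s\}-\mathbf{1}\{\varepsilon_t\leq 0\})ds\bigr] = \tfrac{1}{2}\xi\,a_t^2 + o_p(a_t^2)$, which upon summation yields the quadratic form. Then I would show that the martingale difference between the integral and its conditional expectation sums to $o_p(1)$ by truncating on the set $\mathcal{B}_t(\lambda) = \{y_{t-1}^2/(\rho_{n,c}^{2n}k_n^2)\leq\lambda\}$ and bounding the predictable quadratic variation via Lenglart's inequality.

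The main obstacle will be closing the Lenglart bound on the near-explosive side, since $y_{t-1}$ grows geometrically and naive moment bounds on the integrand are prohibitive. The decisive input is the maximal bound $\max_{1\leq t\leq n}y_t^2/(\rho_{n,c}^{2n}k_n^2) = o_p(1)$ accompanying Lemma \ref{lemma4}, which ensures that on the truncation set the magnitudes $|a_t|$ are uniformly small, so the non-decreasing structure of the indicator bracket forces $\varphi_{nt}^2(\boldsymbol{\vartheta}_n) \leq |a_t|\cdot\varphi_{nt}(\boldsymbol{\vartheta}_n)$ and the martingale part collapses to $o_p(1)$. The uniform remainder control on compact balls $\{\|\boldsymbol{\vartheta}_n\|\leq C\}$ then follows by the convexity argument of \cite{pollard1991asymptotics}, delivering the asserted $R_n(\boldsymbol{\vartheta}_n(\uptau)) = o_p(1)$.
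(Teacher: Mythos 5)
Your proposal is correct and follows essentially the same route as the paper: Knight's identity plus the conditional-expectation/Lenglart truncation argument (the near-explosive analogue of Lemmas \ref{lemma2}--\ref{lemma3}, using the maximal bound accompanying Lemma \ref{lemma4}), the joint convergence $\big(\mathcal{Z}_1,\mathcal{Z}_2\mathcal{Z}_3,\mathcal{Z}_3^2/(2c)\big)$ from the preceding lemma together with the negligible cross term for part (\textit{i}), and Pollard's convexity lemma for the uniform remainder. The only difference is one of emphasis: the paper's own proof spells out mainly the uniformity step (localizing on $\norm{\sum_{t=1}^n \boldsymbol{X}_t\boldsymbol{X}_t^{\top}}\leq M$ and invoking \cite{pollard1991asymptotics}), importing the pointwise decomposition from the earlier lemmas, whereas you re-derive it explicitly.
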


\begin{proof}
In order to prove the uniformity condition we denote with 
\begin{align}
\varphi \big( \boldsymbol{\vartheta}_n (\uptau) \big) 
= 
\frac{1}{2} \xi \times \boldsymbol{\vartheta}_n (\uptau)^{\top} \left( \sum_{t=1}^n \boldsymbol{X}_t \boldsymbol{X}_t^{\top}  \right) \boldsymbol{\vartheta}_n (\uptau).
\end{align}
Furthermore, we need to show that 
\begin{align}
\underset{ \norm{ \boldsymbol{\vartheta}_n (\uptau) } \leq C }{ \mathsf{sup} } \left| \sum_{t=1}^n \varphi_{nt} \big( \boldsymbol{\vartheta}_n (\uptau) \big) - \varphi \big( \boldsymbol{\vartheta} (\uptau) \big)  \right| = o_P(1).
\end{align}
Since $\sum_{t=1}^n \boldsymbol{X}_t \boldsymbol{X}_t^{\top}$ converges in distribution, for any $\lambda > 0$ there exists $M$ large enough, such that, 
\begin{align}
\mathbb{P} \left(  \underset{ \norm{ \boldsymbol{\vartheta}_n (\uptau)  } \leq C }{ \mathsf{sup} }  \left|  \sum_{t=1}^n \varphi_{nt} \big( \boldsymbol{\vartheta}_n (\uptau) \big) - \varphi \big( \boldsymbol{\vartheta} (\uptau) \big) \right| \mathbf{1} \left\{ \norm{ \sum_{t=1}^n \boldsymbol{X}_t \boldsymbol{X}_t^{\top} } > M \right\} > \lambda / 2    \right) < \lambda^{*} /2.
\end{align}
On the other hand, on $\left\{ \norm{ \sum_{t=1}^n \boldsymbol{X}_t \boldsymbol{X}_t^{\top} } \right\}$, for any $\lambda  > 0$, there exists $\delta > 0$, such that, 
\begin{align}
\underset{ \norm{ \boldsymbol{ \vartheta } } \leq C }{ \mathsf{sup} } \bigg| \varphi ( \gamma + \vartheta ) - \varphi ( \gamma ) \bigg| \leq \lambda. 
\end{align}

Moreover, following the convexity Lemma of \cite{pollard1991asymptotics}, one can show that 
\begin{align}
\mathbb{P} \left( \underset{ \norm{ \boldsymbol{\vartheta}_n (\uptau) } \leq C }{ \mathsf{sup} }  \left|  \sum_{t=1}^n \varphi_{nt} \big( \boldsymbol{\vartheta}_n (\uptau) \big) - \varphi \big( \boldsymbol{\vartheta} (\uptau) \big) \right| \mathbf{1} \left\{ \norm{ \sum_{t=1}^n \boldsymbol{X}_t \boldsymbol{X}_t^{\top} } \leq M \right\} > \lambda / 2    \right) < \lambda^{*} /2.
\end{align}
Therefore, the combination of the above yields the uniformity result of interest. 
\end{proof}

\subsection{Weak Convergence of Functionals Results}

Following \cite{Koenker2005}, we consider that all parameters share the same monotone behaviour with respect to the quantile level $\uptau \in (0,1)$. Then the consistency of quantile dependent parameter can be deduced from the monotonicity of the subgradient as well as a direct consequence of the uniform convergence of the empirical distribution function and the Glivenko-Cantelli Theorem. Therefore, the asymptotic behaviour of $\sqrt{n} \left( \hat{\boldsymbol{\beta}}( \uptau ) - \boldsymbol{\beta} ( \uptau  ) \right)$ follows by considering the following objective function
\begin{align}
\mathcal{Z}_n ( \delta ) 
= 
\frac{1}{n} \sum_{t=1}^n \varrho_{\uptau} \left( \varepsilon_t - \boldsymbol{X}_t^{\top} \boldsymbol{\delta} / \sqrt{n} \right) - \varrho_{\uptau} \big(  \varepsilon_t \big)
\end{align} 
where $\varepsilon_t ( \uptau ) = y_t - \boldsymbol{X}_t^{\top} \boldsymbol{\beta}( \uptau )$. The function $\mathcal{Z}_n ( \boldsymbol{\delta} )$ is convex, and is minimized at $\hat{\boldsymbol{\delta}}_n = \sqrt{n} \left( \hat{ \boldsymbol{\beta} } (\uptau) - \boldsymbol{\beta} ( \uptau )  \right)$. Therefore, we can show that the limit distribution of $\hat{ \boldsymbol{\delta} }_n$ can be determined by the asymptotic distribution of the objective function $\mathcal{Z}_n ( \boldsymbol{\delta} )$. Furthermore, it follows from the Lindeberg-Feller central limit theorem that $\mathcal{Z}_n^{(1)} \overset{ d }{ \to } -  \boldsymbol{\delta}^{\prime} \mathcal{W}$, where $\mathcal{W} \overset{ d }{ \to } \mathcal{N} \big( 0, \uptau (1 - \uptau) \boldsymbol{D}_0 \big)$. Then, following the proof of Theorem 4.1 of \cite{Koenker2005} it holds that (see also derivations in \cite{knight1998limiting} and \cite{kato2009asymptotics})  
\begin{align}
\sum_{t=1}^n \mathbb{E} \left[ \mathcal{Z}_{nt}^{(2)} ( \boldsymbol{\delta} ) \right] \overset{ d }{ \to } \frac{1}{2} \boldsymbol{\delta}^{\top} \boldsymbol{D}_1 \boldsymbol{\delta}.
\end{align}
Therefore, it can be proved that 
\begin{align}
\mathcal{Z}_n ( \boldsymbol{\delta} ) \overset{ d }{ \to } \mathcal{Z}_0 ( \boldsymbol{\delta} ) \equiv - \boldsymbol{\delta}^{\top} \mathcal{W} + \frac{1}{2} \boldsymbol{\delta}^{\top} \boldsymbol{D}_1 \boldsymbol{\delta}.
\end{align}
The convexity of the limiting distribution of $\mathcal{Z}_0 ( \boldsymbol{\delta} )$  ensures that the uniqueness of the minimizer 
\begin{align}
\hat{\delta}_n :=  \underset{ \delta }{ \mathsf{arg \min} }  \ \mathcal{Z}_n ( \boldsymbol{\delta} ) \mapsto \hat{\delta}_0 := \underset{ \boldsymbol{\delta} }{ \mathsf{arg \min} }  \ \mathcal{Z}_0 ( \boldsymbol{\delta} )
\end{align} 
where $\mathcal{Z}_n ( \boldsymbol{\delta} ) = \mathcal{Z}_n^{(1)} ( \boldsymbol{\delta} ) + \mathcal{Z}_n^{(2)} ( \boldsymbol{\delta} )$ and $\mathcal{Z}_0 ( \boldsymbol{\delta} ) = - \boldsymbol{\delta}^{\top} \mathcal{W} + \frac{1}{2} \boldsymbol{\delta}^{\top} \boldsymbol{D}_1 \boldsymbol{\delta}$ such that $\hat{ \boldsymbol{\delta} }_n = \sqrt{n} \left( \hat{ \boldsymbol{\beta} } (\uptau) - \boldsymbol{\beta} ( \uptau ) \right)$. A similar approach is followed in the study of \cite{mao2019moderate} who prove that for any $\uptau \in (0,1)$ the solution of the following expression is obtained by $
\frac{ \sqrt{n} }{ \lambda(n) } \left( \widehat{ \boldsymbol{\beta} }( \uptau ) - \boldsymbol{\beta} ( \uptau ) \right) \in \underset{ \mathsf{u} \in \mathbb{R}^p }{ \mathsf{arg \ min} } \ \mathcal{Z}_n ( \mathsf{u}, \uptau )$.

Then, it follows that  
\begin{align*}
\bigg| \mathcal{Z}_n( \mathsf{u}, \uptau ) - G_n( \mathsf{u}, \uptau ) \bigg| 
&= 
\left| \mathcal{Z}_n^{(2)} ( \mathsf{u} ,\uptau) - \frac{ \mathsf{u}^{\top} \boldsymbol{D} \mathsf{u} }{2} f \left( F^{-1} (\uptau) \right) \right|
\\
&\leq 
\left| \mathbb{E} \left[ \mathcal{Z}_n^{(2)} ( \mathsf{u},\uptau) \right] - \frac{ \mathsf{u}^{\top} \boldsymbol{D} \mathsf{u} }{2} f_{\varepsilon} \left( F_{\varepsilon}^{-1} (\uptau) \right) \right| 
+ \bigg| \mathcal{Z}_n^{(2)} (\mathsf{u},\uptau) - \mathbb{E} \left[ \mathcal{Z}_n^{(2)} (\mathsf{u},\uptau) \right] \bigg|
\end{align*}
where 
\begin{align}
\mathcal{Z}_n^{(2)} (\mathsf{u},\uptau) = \frac{1}{ \sqrt{n} \lambda (n) } \sum_{i=1}^n \int_0^1 \left( \mathbf{1} \left\{ \varepsilon_i \leq F_{\varepsilon}^{-1} (\uptau) + \frac{ \lambda (n) \sigma_{in}^{-1} x_{in}^{\top} \mathsf{u} s
 }{ \sqrt{n} } \right\} - \mathbf{1} \left\{ \varepsilon_i \leq F_{\varepsilon}^{-1} (\uptau) \right\} \right) ds.
\end{align}
Therefore, it can be shown that 
\begin{align}
\underset{ \uptau \in [ \alpha, 1 - \alpha ] }{ \mathsf{sup} } \left| \mathbb{E} \left[ \mathcal{Z}_n^{(2)} (\mathsf{u},\uptau) \right]  - \frac{ \mathsf{u}^{\top} \boldsymbol{D} \mathsf{u} }{2} f_{\varepsilon} \left( F_{\varepsilon}^{-1} (\uptau) \right) \right| \to 0, \ \ \text{as} \ n \to \infty
\end{align}

Thus, for large $n$, we further have that 
\begin{align}
\mathbb{P} \left( \underset{ \uptau \in [ \alpha, 1 - \alpha ] }{ \mathsf{sup} } \ \bigg| \mathcal{Z}_n( \mathsf{u},\uptau) - G_n(\mathsf{u},\uptau) \bigg|  \geq \lambda  \right) \leq \mathbb{P} \left( \underset{ \uptau \in [ \alpha, 1 - \alpha ] }{ \mathsf{sup} } \bigg| \mathcal{Z}^{(2)}_n(\mathsf{u},\uptau) - \mathbb{E} \left[ \mathcal{Z}^{(2)}_n(\mathsf{u},\uptau) \right] \bigg| \geq \frac{\lambda}{2}  \right) 
\end{align}
The following step is to prove that 
\begin{align}
\underset{ n \to \infty }{ \mathsf{ lim \ sup} } \ \frac{1}{ \lambda^2 (n) } \mathsf{log} \left\{ \mathbb{P} \left( \underset{ \uptau \in [ \alpha, 1 - \alpha ] }{ \mathsf{sup} } \bigg| \mathcal{Z}^{(2)}_n(\mathsf{u},\uptau) - \mathbb{E} \left[ \mathcal{Z}^{(2)}_n(\mathsf{u},\uptau) \right] \bigg| \geq \lambda \right) \right\} = - \infty
\end{align}
From the definition of $\mathcal{Z}^{(2)}_n(\mathsf{u},t)$, we obtain that 
\begin{align}
\mathcal{Z}^{(2)}_n( \mathsf{u},\uptau ) - \mathbb{E} \left[ \mathcal{Z}^{(2)}_n( \mathsf{u},\uptau) \right] = \int_0^1 \bigg( \mathcal{W}_n( \mathsf{u} s, \uptau ) - \mathcal{W}_n ( 0 , \uptau ) \bigg) ds,
\end{align}
where 
\begin{align*}
\mathcal{W}_n ( r, \uptau ) = \frac{ \sum_{t=1}^n \boldsymbol{X}_{nt}^{\prime} \mathsf{u} }{ \sqrt{n} \lambda (n) } \left[ \mathbf{1} \left\{ \varepsilon_i \leq F_{\varepsilon}^{-1} ( \uptau ) + \frac{ \lambda (n) \sigma_{n}^{-1} \boldsymbol{X}_{nt}^{\prime} r }{ \sqrt{n} } \right\} - F_{\varepsilon} \left(  F_{\varepsilon}^{-1} ( \uptau ) + \frac{ \lambda (n) \sigma_{n}^{-1}  \boldsymbol{X}_{nt}^{\prime} r }{ \sqrt{n} } \right) \right]
\end{align*}

\medskip

In practise we employ Theorem 1 from \cite{knight1998limiting}. 
Therefore, we have that 
\begin{align}
\mathcal{Z}_n^{(1)} ( \mathsf{u} ) = - \frac{1}{ \sqrt{n} } \sum_{i=1}^n \boldsymbol{x}_i \mathsf{u} \bigg[ \mathbf{1} \big( \epsilon_i > 0 \big) -  \mathbf{1} \big( \epsilon_i < 0 \big) \bigg]
\end{align}
and 
\begin{align}
\mathcal{Z}_n^{(2)} ( \mathsf{u} ) 
&= 
\frac{2 a_n }{ \sqrt{n} } \sum_{t=1}^n  \int_0^{ v_{n_i} }  \boldsymbol{X}_t \mathsf{u} \bigg[ \mathbf{1} \big( \epsilon_i < s \big) -  \mathbf{1} \big( \epsilon_i < 0 \big) \bigg] ds
\end{align} 
where $v_{n} = \boldsymbol{X}_t^{\top} \mathsf{u} / a_n$.

Then, by the Lindeberg-Feller central limit theorem, for each $\mathsf{u}$ it holds that, 
\begin{align}
\mathcal{Z}_n^{(1)} ( \mathsf{u} )  \overset{ d }{ \to } - \mathsf{u} \boldsymbol{W}
\end{align}
and the convergence in distribution holds for any finite collection of $\mathsf{u}'$s. For $\mathcal{Z}_n^{(2)} ( \mathsf{u} )$, we have that 
\begin{align}
\mathcal{Z}_n^{(2)} ( \mathsf{u} ) 
= 
\sum_{t=1}^n \mathbb{E} \left[ \mathcal{Z}_{nt}^{(2)} ( \mathsf{u} ) \right] + \sum_{t=1}^n \bigg[ \mathcal{Z}_{nt}^{(2)} ( \mathsf{u} ) - \mathbb{E} \left[ \mathcal{Z}_{nt}^{(2)} ( \mathsf{u} ) \right] \bigg]. 
\end{align}

We employ the above orthogonal decomposition when proving the limit results for the estimator of the quantile autoregressive model for moderate deviations from the unit boundary. Notice that the implementation of the corresponding FCLT for the quantile-dependent innovation term is applicable for the  \textit{i.i.d} innovation sequence assumption. An extension of the particular results to the case in which innovations have serial correlation via the use of a linear process representation for instance is also possible.  Further applications with suitable econometric conditions we could consider within our framework are presented in the study of   \cite{doukhan1999new} who consider a different type of weak dependence condition for time series models.

\newpage 
   
\bibliographystyle{apalike}
\bibliography{myreferences1}

\begin{thebibliography}{}

\bibitem[Abadir and Lucas, 2000]{abadir2000quantiles}
Abadir, K.~M. and Lucas, A. (2000).
\newblock Quantiles for t-statistics based on m-estimators of unit roots.
\newblock {\em Economics Letters}, 67(2):131--137.

\bibitem[Anderson, 1959]{anderson1959asymptotic}
Anderson, T.~W. (1959).
\newblock On asymptotic distributions of estimates of parameters of stochastic
  difference equations.
\newblock {\em The Annals of Mathematical Statistics}, pages 676--687.

\bibitem[Arvanitis and Magdalinos, 2018]{arvanitis2018mildly}
Arvanitis, S. and Magdalinos, T. (2018).
\newblock Mildly explosive autoregression under stationary conditional
  heteroskedasticity.
\newblock {\em Journal of Time Series Analysis}, 39(6):892--908.

\bibitem[Aue and Horv{\'a}th, 2007]{aue2007limit}
Aue, A. and Horv{\'a}th, L. (2007).
\newblock A limit theorem for mildly explosive autoregression with stable
  errors.
\newblock {\em Econometric Theory}, 23(2):201--220.

\bibitem[Bahadur, 1966]{bahadur1966note}
Bahadur, R.~R. (1966).
\newblock A note on quantiles in large samples.
\newblock {\em The Annals of Mathematical Statistics}, 37(3):577--580.

\bibitem[Benke and Pap, 2021]{benke2021nearly}
Benke, J.~M. and Pap, G. (2021).
\newblock Nearly unstable family of stochastic processes given by stochastic
  differential equations with time delay.
\newblock {\em Journal of Statistical Planning and Inference}, 211:1--11.

\bibitem[Billingsley, 1968]{billingsley1968convergence}
Billingsley, P. (1968).
\newblock {\em Convergence of probability measures}.
\newblock John Wiley \& Sons.

\bibitem[Buchmann and Chan, 2007]{Buchmann2007asymptotic}
Buchmann, B. and Chan, N.~H. (2007).
\newblock Asymptotic theory of least squares estimators for nearly unstable
  processes under strong dependence.
\newblock {\em The Annals of statistics}, 35(5):2001--2017.

\bibitem[Cai et~al., 2022]{cai2022new}
Cai, Z., Chen, H., and Liao, X. (2022).
\newblock A new robust inference for predictive quantile regression.
\newblock {\em Journal of Econometrics}.

\bibitem[Cavaliere, 2002]{cavaliere2002bounded}
Cavaliere, G. (2002).
\newblock Bounded integrated processes and unit root tests.
\newblock {\em Statistical Methods and Applications}, 11(1):41--69.

\bibitem[Cavanagh, 1985]{Cavanagh1985LUR}
Cavanagh, C. (1985).
\newblock Roots local to unity.
\newblock {\em Manuscript}.

\bibitem[Chan, 1988]{chan1988parameter}
Chan, N.~H. (1988).
\newblock The parameter inference for nearly nonstationary time series.
\newblock {\em Journal of the American Statistical Association},
  83(403):857--862.

\bibitem[Chan, 1990]{chan1990inference}
Chan, N.~H. (1990).
\newblock Inference for near-integrated time series with infinite variance.
\newblock {\em Journal of the American Statistical Association},
  85(412):1069--1074.

\bibitem[Chan et~al., 2006]{chan2006quantile}
Chan, N.~H., Peng, L., and Qi, Y. (2006).
\newblock Quantile inference for near-integrated autoregressive time series
  with infinite variance.
\newblock {\em Statistica Sinica}, pages 15--28.

\bibitem[Chan and Tran, 1989]{chan1989first}
Chan, N.~H. and Tran, L.~T. (1989).
\newblock On the first-order autoregressive process with infinite variance.
\newblock {\em Econometric Theory}, 5(3):354--362.

\bibitem[Chan and Wei, 1987]{chan1987asymptotic1}
Chan, N.~H. and Wei, C.-Z. (1987).
\newblock Asymptotic inference for nearly nonstationary ar (1) processes.
\newblock {\em The Annals of Statistics}, pages 1050--1063.

\bibitem[Cox and Llatas, 1991]{cox1991maximum}
Cox, D.~D. and Llatas, I. (1991).
\newblock Maximum likelihood type estimation for nearly nonstationary
  autoregressive time series.
\newblock {\em The Annals of Statistics}, pages 1109--1128.

\bibitem[Cram{\'e}r, 1951]{cramer1951contribution}
Cram{\'e}r, H. (1951).
\newblock A contribution to the theory of stochastic processes.
\newblock In {\em Proceedings of the Second Berkeley Symposium on Mathematical
  Statistics and Probability}, pages 329--339. University of California Press.

\bibitem[Cs{\"o}rg{\H{o}}, 1983]{csorgHo1983quantile}
Cs{\"o}rg{\H{o}}, M. (1983).
\newblock {\em Quantile processes with statistical applications}.
\newblock SIAM.

\bibitem[Csorgo et~al., 1986]{csorgo1986weighted}
Csorgo, M., Csorgo, S., Horv{\'a}th, L., and Mason, D.~M. (1986).
\newblock Weighted empirical and quantile processes.
\newblock {\em The Annals of Probability}, pages 31--85.

\bibitem[Dickey and Fuller, 1979]{dickey1979distribution}
Dickey, D.~A. and Fuller, W.~A. (1979).
\newblock Distribution of the estimators for autoregressive time series with a
  unit root.
\newblock {\em Journal of the American statistical association},
  74(366a):427--431.

\bibitem[Dickey and Fuller, 1981]{dickey1981likelihood}
Dickey, D.~A. and Fuller, W.~A. (1981).
\newblock Likelihood ratio statistics for autoregressive time series with a
  unit root.
\newblock {\em Econometrica: journal of the Econometric Society}, pages
  1057--1072.

\bibitem[Doukhan and Louhichi, 1999]{doukhan1999new}
Doukhan, P. and Louhichi, S. (1999).
\newblock A new weak dependence condition and applications to moment
  inequalities.
\newblock {\em Stochastic processes and their applications}, 84(2):313--342.

\bibitem[Duffy and Kasparis, 2021]{duffy2021estimation}
Duffy, J.~A. and Kasparis, I. (2021).
\newblock Estimation and inference in the presence of fractional d= 1/2 and
  weakly nonstationary processes.
\newblock {\em The Annals of Statistics}, 49(2):1195--1217.

\bibitem[Fan and Lee, 2019]{fan2019predictive}
Fan, R. and Lee, J.~H. (2019).
\newblock Predictive quantile regressions under persistence and conditional
  heteroskedasticity.
\newblock {\em Journal of Econometrics}, 213(1):261--280.

\bibitem[Fountis and Dickey, 1989]{fountis1989testing}
Fountis, N.~G. and Dickey, D.~A. (1989).
\newblock Testing for a unit root nonstationarity in multivariate
  autoregressive time series.
\newblock {\em The Annals of Statistics}, pages 419--428.

\bibitem[Fu et~al., 2022]{fu2022cqr}
Fu, K.-A., Ni, J., and Dong, Y. (2022).
\newblock Cqr-based inference for the infinite-variance nearly nonstationary
  autoregressive models.
\newblock {\em Lithuanian Mathematical Journal}, pages 1--9.

\bibitem[Garderen, 1999]{garderen1999exact}
Garderen, K. J.~v. (1999).
\newblock Exact geometry of autoregressive models.
\newblock {\em Journal of time series analysis}, 20(1):1--21.

\bibitem[Giraitis and Phillips, 2006]{giraitis2006uniform}
Giraitis, L. and Phillips, P.~C. (2006).
\newblock Uniform limit theory for stationary autoregression.
\newblock {\em Journal of time series analysis}, 27(1):51--60.

\bibitem[Goh and Knight, 2009]{goh2009nonstandard}
Goh, S.~C. and Knight, K. (2009).
\newblock Nonstandard quantile-regression inference.
\newblock {\em Econometric Theory}, 25(5):1415--1432.

\bibitem[H{\"a}rdle et~al., 2016]{hardle2016tenet}
H{\"a}rdle, W.~K., Wang, W., and Yu, L. (2016).
\newblock Tenet: Tail-event driven network risk.
\newblock {\em Journal of Econometrics}, 192(2):499--513.

\bibitem[Hasan and Koenker, 1997]{hasan1997robust}
Hasan, M.~N. and Koenker, R.~W. (1997).
\newblock Robust rank tests of the unit root hypothesis.
\newblock {\em Econometrica: Journal of the Econometric Society}, pages
  133--161.

\bibitem[Hirukawa and Lee, 2021]{hirukawa2021asymptotic}
Hirukawa, J. and Lee, S. (2021).
\newblock Asymptotic properties of mildly explosive processes with locally
  stationary disturbance.
\newblock {\em Metrika}, 84(4):511--534.

\bibitem[Huang et~al., 2014]{huang2014limit}
Huang, S.-H., Pang, T.-X., and Weng, C. (2014).
\newblock Limit theory for moderate deviations from a unit root under
  innovations with a possibly infinite variance.
\newblock {\em Methodology and Computing in Applied Probability},
  16(1):187--206.

\bibitem[Hui et~al., 2022]{Jian2022}
Hui, J., Yilong, W., and Guangyu, Y. (2022).
\newblock Deviation inequalities and cramer-type moderate deviations for the
  explosive autoregressive process.
\newblock {\em Bernoulli}, 28(5):1--28.

\bibitem[Hwang and Pang, 2009]{hwang2009asymptotic}
Hwang, K.-S. and Pang, T.-X. (2009).
\newblock Asymptotic inference for nearly nonstationary ar (1) processes with
  possibly infinite variance.
\newblock {\em Statistics \& probability letters}, 79(22):2374--2379.

\bibitem[Jacod and Shiryaev, 2003]{jacod2003limit}
Jacod, J. and Shiryaev, A. (2003).
\newblock {\em Limit theorems for stochastic processes}, volume 288.
\newblock Berlin: Springer - Verlag.

\bibitem[Jansson, 2004]{jansson2004error}
Jansson, M. (2004).
\newblock The error in rejection probability of simple autocorrelation robust
  tests.
\newblock {\em Econometrica}, 72(3):937--946.

\bibitem[Jansson and Moreira, 2006]{jansson2006optimal}
Jansson, M. and Moreira, M.~J. (2006).
\newblock Optimal inference in regression models with nearly integrated
  regressors.
\newblock {\em Econometrica}, 74(3):681--714.

\bibitem[Jiang et~al., 2015]{jiang2015moderate}
Jiang, H., Yu, M., and Yang, G. (2015).
\newblock Moderate deviations for the mildly stationary autoregressive models
  with dependent errors.
\newblock {\em arXiv preprint arXiv:1510.02862}.

\bibitem[Jure{\v{c}}kov{\'a} et~al., 1988]{jurevckova1988moderate}
Jure{\v{c}}kov{\'a}, J., Kallenberg, W., and Veraverbeke, N. (1988).
\newblock Moderate and cram{\'e}r-type large deviation theorems for
  m-estimators.
\newblock {\em Statistics \& probability letters}, 6(3):191--199.

\bibitem[Kato, 2009]{kato2009asymptotics}
Kato, K. (2009).
\newblock Asymptotics for argmin processes: Convexity arguments.
\newblock {\em Journal of Multivariate Analysis}, 100(8):1816--1829.

\bibitem[Katsouris, 2023]{katsouris2023structural}
Katsouris, C. (2023).
\newblock Structural break detection in quantile predictive regression models
  with persistent covariates.
\newblock {\em arXiv preprint arXiv:2302.05193}.

\bibitem[Kiefer et~al., 2000]{kiefer2000simple}
Kiefer, N.~M., Vogelsang, T.~J., and Bunzel, H. (2000).
\newblock Simple robust testing of regression hypotheses.
\newblock {\em Econometrica}, 68(3):695--714.

\bibitem[Knight, 1987]{knight1987rate}
Knight, K. (1987).
\newblock Rate of convergence of centred estimates of autoregressive parameters
  for infinite variance autoregressions.
\newblock {\em Journal of time series analysis}, 8(1):51--60.

\bibitem[Knight, 1998]{knight1998limiting}
Knight, K. (1998).
\newblock Limiting distributions for l1 regression estimators under general
  conditions.
\newblock {\em Annals of statistics}, pages 755--770.

\bibitem[Koenker, 2005]{Koenker2005}
Koenker, R. (2005).
\newblock Quantile regression.

\bibitem[Koenker and Bassett, 1978]{koenker1978regression}
Koenker, R. and Bassett, G. (1978).
\newblock Regression quantiles.
\newblock {\em Econometrica: journal of the Econometric Society}, pages 33--50.

\bibitem[Koenker and Portnoy, 1987]{koenker1987estimation}
Koenker, R. and Portnoy, S. (1987).
\newblock L-estimation for linear models.
\newblock {\em Journal of the American statistical Association},
  82(399):851--857.

\bibitem[Koenker and Xiao, 2002]{koenker2002inference}
Koenker, R. and Xiao, Z. (2002).
\newblock Inference on the quantile regression process.
\newblock {\em Econometrica}, 70(4):1583--1612.

\bibitem[Koenker and Xiao, 2004]{koenker2004unit}
Koenker, R. and Xiao, Z. (2004).
\newblock Unit root quantile autoregression inference.
\newblock {\em Journal of the American Statistical Association},
  99(467):775--787.

\bibitem[Koenker and Xiao, 2006]{koenker2006quantile}
Koenker, R. and Xiao, Z. (2006).
\newblock Quantile autoregression.
\newblock {\em Journal of the American statistical association},
  101(475):980--990.

\bibitem[Kong, 2015]{kong2015m}
Kong, X.-B. (2015).
\newblock M-estimation for moderate deviations from a unit root.
\newblock {\em Communications in Statistics-Theory and Methods},
  44(3):476--485.

\bibitem[Kostakis et~al., 2015]{kostakis2015Robust}
Kostakis, A., Magdalinos, T., and Stamatogiannis, M.~P. (2015).
\newblock Robust econometric inference for stock return predictability.
\newblock {\em The Review of Financial Studies}, 28(5):1506--1553.

\bibitem[Kyprianou, 2014]{kyprianou2014fluctuations}
Kyprianou, A.~E. (2014).
\newblock {\em Fluctuations of L{\'e}vy processes with applications:
  Introductory Lectures}.
\newblock Springer Science \& Business Media.

\bibitem[Lai and Wei, 1982]{lai1982least}
Lai, T.~L. and Wei, C.~Z. (1982).
\newblock Least squares estimates in stochastic regression models with
  applications to identification and control of dynamic systems.
\newblock {\em The Annals of Statistics}, 10(1):154--166.

\bibitem[Larsson, 1995]{larsson1995asymptotic}
Larsson, R. (1995).
\newblock The asymptotic distributions of some test statistics in
  near-integrated ar processes.
\newblock {\em Econometric Theory}, 11(2):306--330.

\bibitem[Lee, 2016]{lee2016predictive}
Lee, J.~H. (2016).
\newblock Predictive quantile regression with persistent covariates: Ivx-qr
  approach.
\newblock {\em Journal of Econometrics}, 192(1):105--118.

\bibitem[Lee, 2018]{lee2018limit}
Lee, J.~H. (2018).
\newblock Limit theory for explosive autoregression under conditional
  heteroskedasticity.
\newblock {\em Journal of Statistical Planning and Inference}, 196:30--55.

\bibitem[Ling and McAleer, 2004]{ling2004regression}
Ling, S. and McAleer, M. (2004).
\newblock Regression quantiles for unstable autoregressive models.
\newblock {\em Journal of Multivariate Analysis}, 89(2):304--328.

\bibitem[Liu and Liu, 2018]{liu2018limit}
Liu, Q. and Liu, X. (2018).
\newblock Limit theory for an ar (1) model with intercept and a possible
  infinite variance.
\newblock {\em arXiv preprint arXiv:1802.10299}.

\bibitem[Liu et~al., 2021]{liu2021quantile}
Liu, Q.-m., Liao, G.-l., and Zhang, R.-m. (2021).
\newblock Quantile inference for nonstationary processes with infinite variance
  innovations.
\newblock {\em Applied Mathematics-A Journal of Chinese Universities},
  36(3):443--461.

\bibitem[Liu et~al., 2022]{liu2022mildly}
Liu, X., Li, X., Gao, M., and Yang, W. (2022).
\newblock Mildly explosive autoregression with strong mixing errors.
\newblock {\em Entropy}, 24(12):1730.

\bibitem[Liu et~al., 2023]{liu2023unified}
Liu, X., Long, W., Peng, L., and Yang, B. (2023).
\newblock A unified inference for predictive quantile regression.
\newblock {\em Journal of the American Statistical Association}, pages 1--15.

\bibitem[Lucas, 1995]{lucas1995unit}
Lucas, A. (1995).
\newblock Unit root tests based on m estimators.
\newblock {\em Econometric Theory}, 11(2):331--346.

\bibitem[Lui et~al., 2018]{lui2018mild}
Lui, Y.~L., Xiao, W., and Yu, J. (2018).
\newblock Mild-explosive and local-to-mild-explosive autoregressions with
  serially correlated errors.

\bibitem[Lui et~al., 2021]{lui2021mildly}
Lui, Y.~L., Xiao, W., and Yu, J. (2021).
\newblock Mildly explosive autoregression with anti-persistent errors.
\newblock {\em Oxford Bulletin of Economics and Statistics}, 83(2):518--539.

\bibitem[Magdalinos, 2012]{magdalinos2012mildly}
Magdalinos, T. (2012).
\newblock Mildly explosive autoregression under weak and strong dependence.
\newblock {\em Journal of Econometrics}, 169(2):179--187.

\bibitem[Magdalinos and Petrova, 2022]{Magdalinos2022uniform}
Magdalinos, T. and Petrova, K. (2022).
\newblock Uniform and distribution-free inference with general autoregressive
  processes.

\bibitem[Magdalinos and Phillips, 2009]{Magdalinos2009limit}
Magdalinos, T. and Phillips, P. C.~B. (2009).
\newblock Limit theory for cointegrated systems with moderately integrated and
  moderately explosive regressors.
\newblock {\em Econometric Theory}, 25(2):482--526.

\bibitem[Mann and Wald, 1943]{mann1943statistical}
Mann, H.~B. and Wald, A. (1943).
\newblock On the statistical treatment of linear stochastic difference
  equations.
\newblock {\em Econometrica, Journal of the Econometric Society}, pages
  173--220.

\bibitem[Mao and Guo, 2019]{mao2019moderate}
Mao, M. and Guo, W. (2019).
\newblock Moderate deviations for quantile regression processes.
\newblock {\em Communications in Statistics-Theory and Methods},
  48(12):2879--2892.

\bibitem[Maynard et~al., 2023]{maynard2023inference}
Maynard, A., Shimotsu, K., and Kuriyama, N. (2023).
\newblock Inference in predictive quantile regressions.
\newblock {\em arXiv preprint arXiv:2306.00296}.

\bibitem[Mikusheva, 2007]{mikusheva2007uniform}
Mikusheva, A. (2007).
\newblock Uniform inference in autoregressive models.
\newblock {\em Econometrica}, 75(5):1411--1452.

\bibitem[Mikusheva, 2012]{mikusheva2012one}
Mikusheva, A. (2012).
\newblock One-dimensional inference in autoregressive models with the potential
  presence of a unit root.
\newblock {\em Econometrica}, 80(1):173--212.

\bibitem[Neocleous and Portnoy, 2008]{neocleous2008monotonicity}
Neocleous, T. and Portnoy, S. (2008).
\newblock On monotonicity of regression quantile functions.
\newblock {\em Statistics \& probability letters}, 78(10):1226--1229.

\bibitem[Oh et~al., 2018]{oh2018mildly}
Oh, H., Lee, S., and Chan, N.~H. (2018).
\newblock Mildly explosive autoregression with mixing innovations.
\newblock {\em Journal of the Korean Statistical Society}, 47(1):41--53.

\bibitem[Phillips et~al., 2001]{phillips2001estimate}
Phillips, P.~C., Moon, H.~R., and Xiao, Z. (2001).
\newblock How to estimate autoregressive roots near unity.
\newblock {\em Econometric Theory}, 17(1):29--69.

\bibitem[Phillips, 1987a]{phillips1987time}
Phillips, P. C.~B. (1987a).
\newblock Time series regression with a unit root.
\newblock {\em Econometrica: Journal of the Econometric Society}, pages
  277--301.

\bibitem[Phillips, 1987b]{phillips1987towards}
Phillips, P. C.~B. (1987b).
\newblock Towards a unified asymptotic theory for autoregression.
\newblock {\em Biometrika}, 74(3):535--547.

\bibitem[Phillips, 2014]{phillips2014confidence}
Phillips, P. C.~B. (2014).
\newblock On confidence intervals for autoregressive roots and predictive
  regression.
\newblock {\em Econometrica}, 82(3):1177--1195.

\bibitem[Phillips and Magdalinos, 2007]{Phillips2007limit}
Phillips, P. C.~B. and Magdalinos, T. (2007).
\newblock Limit theory for moderate deviations from a unit root.
\newblock {\em Journal of Econometrics}, 136(1):115--130.

\bibitem[Phillips and Magdalinos, 2009]{PM2009econometric}
Phillips, P. C.~B. and Magdalinos, T. (2009).
\newblock Econometric inference in the vicinity of unity.
\newblock {\em Singapore Management University, CoFie Working Paper}, 7.

\bibitem[Pollard, 1991]{pollard1991asymptotics}
Pollard, D. (1991).
\newblock Asymptotics for least absolute deviation regression estimators.
\newblock {\em Econometric Theory}, 7(2):186--199.

\bibitem[Pro{\"\i}a, 2020]{proia2020moderate}
Pro{\"\i}a, F. (2020).
\newblock Moderate deviations in a class of stable but nearly unstable
  processes.
\newblock {\em Journal of Statistical Planning and Inference}, 208:66--81.

\bibitem[Rao, 1978]{rao1978asymptotic}
Rao, M. (1978).
\newblock Asymptotic distribution of an estimator of the boundary parameter of
  an unstable process.
\newblock {\em The Annals of Statistics}, pages 185--190.

\bibitem[Saxena and Alam, 1982]{saxena1982estimation}
Saxena, K.~L. and Alam, K. (1982).
\newblock Estimation of the non-centrality parameter of a chi squared
  distribution.
\newblock {\em The Annals of Statistics}, pages 1012--1016.

\bibitem[Vogelsang, 1998]{vogelsang1998trend}
Vogelsang, T.~J. (1998).
\newblock Trend function hypothesis testing in the presence of serial
  correlation.
\newblock {\em Econometrica}, pages 123--148.

\bibitem[Wang et~al., 2022]{wang2022asymptotics}
Wang, X., Tang, X., and Song, Y. (2022).
\newblock Asymptotics of m-estimators for moderate deviations from a unit root
  model with possibly infinite variance.
\newblock {\em Communications in Statistics-Theory and Methods}, pages 1--18.

\bibitem[Werker and Zhou, 2022]{werker2022semiparametric}
Werker, B.~J. and Zhou, B. (2022).
\newblock Semiparametric testing with highly persistent predictors.
\newblock {\em Journal of Econometrics}, 227(2):347--370.

\bibitem[White, 1958]{white1958limiting}
White, J.~S. (1958).
\newblock The limiting distribution of the serial correlation coefficient in
  the explosive case.
\newblock {\em The Annals of Mathematical Statistics}, pages 1188--1197.

\bibitem[Xu and Pang, 2018]{xu2018limit}
Xu, C. and Pang, T. (2018).
\newblock Limit theory for moderate deviations from a unit root with a break in
  variance.
\newblock {\em Communications in Statistics-Theory and Methods},
  47(24):6125--6143.

\bibitem[Yabe, 2012]{yabe2012limiting}
Yabe, R. (2012).
\newblock Limiting distribution of the score statistic under moderate deviation
  from a unit root in ma (1).
\newblock {\em Journal of Time Series Analysis}, 33(4):533--541.

\bibitem[Yu and Kejriwal, 2021]{yu2021inference}
Yu, X. and Kejriwal, M. (2021).
\newblock Inference in mildly explosive autoregressions under unconditional
  heteroskedasticity.
\newblock {\em Available at SSRN 4081663}.

\bibitem[Zhou and Lin, 2015]{zhou2015quantile}
Zhou, Z. and Lin, Z. (2015).
\newblock Quantile inference for moderate deviations from a unit root model
  with infinite variance.
\newblock {\em Journal of the Korean Statistical Society}, 44(2):280--294.

\end{thebibliography}

\end{document}

\newpage